\documentclass[a4paper,11pt]{article}
\usepackage[margin = 1in]{geometry}
\usepackage[T1]{fontenc}
\usepackage{amsmath,amssymb,amsthm}
\usepackage{thm-restate}
\usepackage[libertinus,mono=false]{newtx}
\usepackage{graphicx}
\usepackage[dvipsnames]{xcolor}
\usepackage[shortlabels]{enumitem}
\usepackage{algorithmicx}
\usepackage{algorithm}
\usepackage[noend]{algpseudocode}
\usepackage{xspace}
\usepackage{booktabs}
\usepackage[colorlinks=true, allcolors=NavyBlue]{hyperref}
\usepackage{cleveref}

\renewcommand{\arraystretch}{1.15}
\setlist{topsep=0.4\baselineskip,parsep=0pt,partopsep=0pt,itemsep=0.2\baselineskip}

\theoremstyle{plain}
\newtheorem{theorem}{Theorem}

\newtheorem*{claim*}{Claim}

\newtheorem{lemma}[theorem]{Lemma}
\newtheorem{corollary}[theorem]{Corollary}
\newtheorem{observation}[theorem]{Observation}
\theoremstyle{definition}
\newtheorem{definition}[theorem]{Definition}

\newtheorem{question}{Question}
\newenvironment{claimproof}[1][Proof of claim.]{\begin{proof}[#1]}{\end{proof}}

\mathcode`l="8000
\begingroup
\makeatletter
\lccode`\~=`\l
\DeclareMathSymbol{\lsb@l}{\mathalpha}{letters}{`l}
\lowercase{\gdef~{\ifnum\the\mathgroup=\m@ne \ell \else \lsb@l \fi}}%
\endgroup

\newcommand{\poly}{\mathrm{poly}}
\newcommand{\Reals}{\mathbb{R}}

\renewcommand{\leq}{\leqslant}
\renewcommand{\geq}{\geqslant}
\newcommand{\eps}{\varepsilon}

\newcommand{\card}[1]{\left\lvert #1 \right\rvert}
\newcommand{\norm}[1]{\left\lVert #1 \right\rVert}
\newcommand{\set}[1]{\left\{ #1 \right\}}
\newcommand{\class}[1]{\mathcal{#1}}

\newcommand{\dist}{\mathrm{dist}}
\newcommand{\proj}{\mathrm{proj}}
\newcommand{\grid}{\mathrm{grid}}
\newcommand{\portal}{\mathrm{port}}
\newcommand{\terminal}{\mathrm{term}}
\newcommand{\nxt}{\mathrm{next}}

\newcommand{\bd}{\partial}
\newcommand{\Otilde}{\widetilde{O}}
\newcommand{\OV}{\textsc{Orthogonal Vectors}\xspace}
\newcommand{\touring}{\textsc{Polygon Touring}\xspace}
\newcommand{\touringInt}{\textsc{Interval Touring}\xspace}
\newcommand{\touringOrtho}{\textsc{Orthogonal Polygon Touring}\xspace}
\newcommand{\touringOrthoParam}{\textsc{Orthogonal Polygon Touring$(n,k)$}\xspace}
\newcommand{\touringOrthoThreeParam}{\textsc{Orthogonal Polytope Touring$(n,k)$}\xspace}

\makeatletter
\def\@maketitle{%
\newpage
\null
\vskip 2em%

\begin{center}%
\let \footnote \thanks
    {\LARGE \@title \par}%
    \vskip 1.5em%
    {\large%
    \lineskip .5em%
    \renewcommand{\arraystretch}{2}%
    \begin{tabular}[t]{ccc}%
        \@author
    \end{tabular}\par}%
    \vskip 1em%
\end{center}%
\par
\vskip 1.5em}
\makeatother

\newcommand{\etal}{\emph{et~al.}}
\newcommand{\email}[1]{\href{mailto:#1}{\texttt{#1}}}
\newcommand{\entry}[4]{
    {\renewcommand{\arraystretch}{1}%
    \begin{tabular}[t]{c}
        #1\footnote{#2. Email: \email{#3}. #4}
    \end{tabular}}
}

\newcommand{\EMPH}[1]{\emph{\textcolor{BrickRed}{#1}}}

\title{Touring a Sequence of Orthogonal Polygons}
\author{%
    \entry{Katrin Casel}{Humboldt-University Berlin, Germany}{katrin.casel@hu-berlin.de}{} &
    \entry{Sándor Kisfaludi-Bak}{Aalto University, Espoo, Finland}{sandor.kisfaludi-bak@aalto.fi}{Supported by the Research Council of Finland, Grant 363444.} &
    \entry{Linda Kleist}{University of Hamburg, Germany}{linda.kleist@uni-hamburg.de}{} \\
    \entry{Jeroen S.K. Lamme}{Eindhoven University of Technology, the Netherlands.}{j.s.k.lamme@tue.nl}{} &
    \entry{Eunjin Oh}{POSTECH, South Korea}{eunjin.oh@postech.ac.kr}{Supported by Institute of Information \& Communications Technology Planning \& Evaluation (IITP) grant funded by the Korea government (MSIT) (No.RS-2024-00440239, Sublinear Scalable Algorithms for Large-Scale Data Analysis) and the National Research Foundation of Korea (NRF) grant funded by the Korea government (MSIT) (No.RS-2024-00358505).} &
    \entry{Yanheng Wang}{ETH Zürich, Switzerland}{yanheng.wang@inf.ethz.ch}{Part of this work was finished at Saarland University, Saarbrücken, Germany. It was part of the project TIPEA that has received funding from the European Research Council (ERC) under the European Unions Horizon 2020 research and innovation programme (grant agreement No. 850979).}
}
\date{}

\begin{document}
\maketitle
\thispagestyle{empty}

\begin{abstract}
    We study the problem of computing a shortest tour that visits a sequence of $k$ polygons $P_1,\dots, P_k$ with a total number of $n$ vertices. A tour is an oriented curve such that there exist points $p_i\in P_i$ for all $i$ where $p_i$ appears not after $p_{i+1}$.
    In a seminal paper, Dror, Efrat, Lubiw and Mitchell (STOC 2003) considered the problem under $L_2$ distance, and gave $\widetilde O(nk)$ and $\widetilde O(nk^2)$ algorithms for disjoint and intersecting convex polygons, respectively.
    
    In this paper, we consider the orthogonal setting (with orthogonal polygons and Manhattan distance) and obtain the following results:
    \begin{itemize}
        \item a truly subquadratic $\widetilde O(n^{2-\frac{1}{48}})$ algorithm when consecutive polygons in the sequence are disjoint;
        \item an $\widetilde O(n)$ algorithm for ortho-convex polygons when consecutive polygons are disjoint;
        \item an $O(n)$ algorithm for axis-aligned rectangles;
        \item $\widetilde O(n^2)$ and $\widetilde O(n^{1.5}k^2)$ algorithms without restrictions.
    \end{itemize}
    Our algorithms build on a wide range of techniques, including additively weighted Voronoi diagrams, rectangle decompositions, persistent data structures, and dynamic distance oracles for weighted planar graphs.
\end{abstract}

\clearpage
\setcounter{page}{1}
\section{Introduction}

Shortest paths between two points in geometric environments can often be computed in near-linear time. However, in settings where the path is constrained to avoid or pass through given regions, polynomial-time algorithms might not exist. For example, computing a shortest path that avoids axis-aligned half-planes in $\Reals^3$, or one that passes through non-convex regions in $\Reals^2$ are both NP-hard problems~\cite{MitchellS04,DELM03}.

In this paper we study a constrained shortest path problem \touring: given a sequence $P_1,\dots,P_k\subset \Reals^2$ of polygons with $n$ vertices in total, compute (the length of) a shortest tour that visits the polygons in order. A \EMPH{tour} $\pi$ is an oriented open curve, and it \EMPH{visits} a sequence of polygons $P_1,\dots,P_k$ if there exist points $p_i \in P_i$ for all $i \in \set{1, \dots, k}$ such that $\pi$ goes through $p_1,\dots,p_k$ in order. We highlight that the points may coincide; see \Cref{fig:intro} for an example. One can measure the length of a tour in different metrics, and natural choices include the Manhattan ($L_1$) and Euclidean ($L_2$) metrics.

\begin{figure}[htb]
    \centering
    \includegraphics{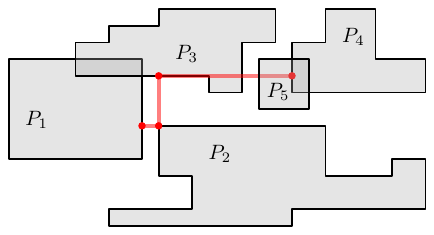}
    \caption{A tour visiting a sequence of five orthogonal polygons. All polygons but $P_2$ are ortho-convex. Note that deleting $P_5$ yields a step-disjoint instance.
    }
    \label{fig:intro}
\end{figure}

\touring has obvious applications in motion planning and logistics in physical environments~\cite{FaiglVP11,IshidaRH19,KulichVM23,PuertoV24}. It is also used as a subroutine in several geometric optimization problems, such as the watchman route problem~\cite{Mitchell13,NilssonP24}, the safari and zookeeper problems~\cite{TanH03,Bespamyatnikh03,FaiglVP13}, and even variants of convex hull~\cite{LofflerVK10,WeibelZ11,DiazBanezKPPLPS15,AntoniadisDBKS25}. The problem can also be understood as the $2$-dimensional offline version of the convex body chasing problem~\cite{FriedmanL93,Sellke20,BansalBEKU20,ArgueGGTG21,BubeckLLS23,sellke2023chasing}.

Dror, Efrat, Lubiw and Mitchell~\cite{DELM03} studied \touring under $L_2$ when the polygons are convex. In their formulation the starting point of the tour is fixed, i.e., $P_1$ degenerates to a single point. They gave an $O(nk\log n)$-time algorithm if the given convex polygons are disjoint, and an $O(nk^2\log n)$-time algorithm if they are allowed to overlap. On the other hand, they showed that the problem becomes NP-hard if the polygons are non-convex, or even if each $P_i$ consists of a pair of segments with a shared endpoint.

This raises interesting questions from the perspective of fine-grained complexity. For example, is there a \emph{truly subquadratic time} algorithm (i.e., of running time $O(n^{2-\eps})$ for some $\eps>0$) that solves \touring under $L_2$, assuming convex disjoint polygons of constant size each?

As our take to the question, we consider an interesting variant of touring \emph{orthogonal polygons} (where every edge is either horizontal or vertical) under $L_1$. This setting avoids precision issues under $L_2$ and allows us to concentrate on the combinatorial side of the problem. Here obtaining an $\Otilde(n^2)$-time algorithm is in fact an open problem. Dror~\etal~\cite{DELM03} claimed an $O(n^2)$-time algorithm but did not present a full proof. A straightforward interpretation of their sketch would lead to an algorithm in cubic time (more precisely, $O(n^2k)$ time).

\begin{question}\label{q:general}
    Is there an $\Otilde(n^2)$-time algorithm for touring a sequence of orthogonal polygons under $L_1$?
    What about $\Otilde(nk)$ or even truly-subquadratic time?
\end{question}

Guided by known results in $L_2$, we can expect that the problem becomes easier when the polygons are disjoint and/or convex. In case of orthogonal polygons the only convex polygons are axis-aligned rectangles. This leads us to the second question:
\begin{question}\label{q:special}
    Is there a faster algorithm for touring a sequence of \emph{disjoint} orthogonal polygons under $L_1$?
    Is there a faster algorithm for touring a sequence of rectangles under $L_1$?
\end{question}

\subsection{Our contribution}
\begin{table}[b]
    \centering
    \caption{Overview of our results for \touringOrthoParam}\label{table:overview}
    \begin{tabular}{lcc}
        \toprule
        input restriction & time complexity & refer to \\ \midrule
        --  & $O(n^2 \log^2 n)$ & \Cref{thm:general}\\
        --  & $\Otilde(n^{1.5}k^2)$ & \Cref{thm:dense}\\
        step-disjoint & $\Otilde(n^{2-\frac{1}{48}})$ & \Cref{thm:disjoint}\\
        step-disjoint, ortho-convex & $\Otilde(n)$ & \Cref{thm:orthoconvex}\\
        rectangles & $O(n)$  & \Cref{thm:rectangle}\\
        \bottomrule
    \end{tabular}
\end{table}

In what follows, we consider \touring for a sequence of $k$ orthogonal polygons with $n$ vertices in total and measure all distances in $L_1$ metric. We denote this problem by \touringOrthoParam. \Cref{table:overview} summarizes our results.
As a baseline, we design an $\Otilde(n^2)$-time algorithm, thereby partially answering \Cref{q:general}.

\begin{restatable}{theorem}{general}\label{thm:general}
    \touringOrthoParam is in time $O(n^2\log^2 n)$.
\end{restatable}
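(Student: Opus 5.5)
We propose a dynamic program over the sequence that maintains, for each $i$, the function $f_i\colon P_i\to\Reals$, where $f_i(p)$ is the length of a shortest tour visiting $P_1,\dots,P_i$ and ending at $p$. Then $f_1\equiv 0$,
\[
f_{i+1}(q)=\min_{p\in P_i}\bigl(f_i(p)+\norm{p-q}_1\bigr),
\]
and the answer is $\min_{P_k} f_k$. Note that the distance is the free-space $L_1$ distance; the polygons are not obstacles. So $f_{i+1}$ is the restriction to $P_{i+1}$ of the $L_1$ ``additively weighted distance transform'' of $f_i$. Since each polygon is only touched twice, the total work should be about $\sum_i (n_i+n_{i+1})\cdot n\cdot\poly\log n$, which is $O(n^2\log^2 n)$. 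The difficulty is controlling the complexity of $f_i$.

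First we discretize. Take the Hanan grid $G$ generated by the $x$- and $y$-coordinates of all vertices of all polygons, which has $O(n)$ lines in each direction. We claim by induction that on each cell of $G\cap P_i$, and on each grid edge inside $P_i$, the function $f_i$ is the minimum of $O(1)$ functions of the form $c+|x-a|+|y-b|$ with $a,b$ grid coordinates. The justification is that an optimal tour may be taken to be monotone between consecutive hit points, and hit points can be shifted to grid lines without increasing length. A cleaner statement is that there is an optimal tour whose hit points all lie on vertices of $G$ restricted to the polygon boundaries or interiors. This reduces the problem to computing $f_{i+1}$ at the $O(n_{i+1}\cdot n)$ grid points of $P_{i+1}$ that matter, namely the intersections of grid lines with $\partial P_{i+1}$, plus $\min$ over interior points, which is attained on the boundary unless $P_i\cap P_{i+1}\neq\emptyset$. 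Taking $O(n)$ grid lines crossing each of the $n_{i+1}$ boundary edges gives $O(n\,n_{i+1})$ candidate points, which sums to $O(n^2)$ overall. Handling overlaps is easy, since $f_{i+1}\le f_i$ on $P_i\cap P_{i+1}$ with distance $0$.

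Next we evaluate the transform. Given the $O(n\,n_i)$ weighted sites $(p,f_i(p))$ on $\partial P_i$, we need to answer $O(n\,n_{i+1})$ queries of the form $\min_p f_i(p)+\norm{p-q}_1$. Split $\norm{p-q}_1$ by quadrant: for $p$ to the lower-left of $q$ it equals $(x_q+y_q)-(x_p+y_p)$. So each quadrant becomes a 2D dominance range-minimum query on the key $f_i(p)-x_p-y_p$. Sort the sites and queries and run an offline sweep with a Fenwick tree, which gives $O((N+M)\log n)$ per step. The sites on one boundary edge form a segment on which $f_i$ is piecewise linear, so sites can also be compressed to $O(n)$ breakpoints. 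One extra $\log$ comes from handling the grid restriction and sorting, which gives $O(n\,(n_i+n_{i+1})\log^2 n)$ per step. Summing over $i$ yields $O(n^2\log^2 n)$.

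The main obstacle is the discretization claim: that $f_i$ on $\partial P_i$ is determined by its values at the $O(n\,n_i)$ Hanan-grid points and interpolates linearly (in $L_1$ form) between them, without a blow-up of breakpoints through the $k$ rounds. We would try to prove it with an exchange argument: consider any optimal tour, and slide each hit point along its boundary edge toward a grid line. The total length is piecewise linear in the slide and its breakpoints occur only at coordinates of neighbouring hit points, so the slides can be performed consecutively until all coordinates are vertex coordinates. If this fails for non-convex polygons because of a sliding point leaving $P_i$, we would instead restrict slides to within the current edge, whose endpoints are themselves grid points.
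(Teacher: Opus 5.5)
\textbf{There is a genuine gap: overlapping consecutive polygons are not handled, and they are the crux of this theorem.} Your plan works for step-disjoint instances. There, the offline dominance-minimum sweep with a Fenwick tree is a fine substitute for the paper's $L_1$ additively weighted Voronoi diagrams.

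Your claim that step $i$ only involves grid points of $\bd P_i$ and $\bd P_{i+1}$ (``each polygon is only touched twice'') is false in general. In an optimal tour, the point where $P_j$ is visited may be a portal of a much earlier $P_i$ lying in the interior of $P_{i+1},\dots,P_j$.

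\emph{As written, the algorithm is incorrect.} Take $P_1=[0,1]^2$, $P_2=[-100,100]^2$, $P_3=[10,11]\times[2,3]$. The optimum is $10$ and passes through the interior of $P_2$, but your sites for the step $2\to 3$ lie on $\bd P_2$, and every one of them has weight at least $99$.

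\emph{Your remark about $P_i\cap P_{i+1}$ does not repair this cheaply.} It only gives you the values there (in fact $f_{i+1}=f_i$ on $P_i\cap P_{i+1}$). To use them you must keep those points as sites. Propagating them one polygon at a time puts each portal $p\in\portal_i$ into $\nxt(i,p)-i$ successive site sets, for a total of $\sum_i\sum_{p\in\portal_i}(\nxt(i,p)-i)$ work. This can be $\Theta(n^2k)$: for example, a comb with $\Theta(n)$ teeth of distinct heights has $\Theta(n^2)$ portals, and it can be followed by $\Theta(n)$ rectangles containing it. That is exactly the naive bound the theorem improves on.

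\emph{Your discretization argument also does not cover interior hit points.} The step you flag as the main obstacle is not where the difficulty lies. Sliding a point along an edge of $\bd P_i$ says nothing about a hit point in the interior of $P_i$. What is needed is \Cref{lem:structure} and \Cref{cor:structure}. They give an optimal skeleton in which each $q_t$ lies on the boundary of the run intersection $Q_t=\bigcap_{i_t\leq i<i_{t+1}}P_i$. Hence $q_t$ is a portal of some polygon of the run, and $\nxt(i_t,q_t)=i_{t+1}$.

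\textbf{The missing ingredient is the paper's ``tunneling'' step.}
\begin{enumerate}
\item For every portal $p\in\portal_i$, compute $\nxt(i,p)$ directly. Use a segment tree over $[k]$ whose node $u$ stores $Q(u)=\bigcap_{i\in I(u)}P_i$ with a point-location structure. It is built bottom-up in $O(n^2\log^2 n)$ time. A query takes $O(\log^2 n)$: one canonical decomposition of $[i,k]$ plus one descent.
\item Insert $p$, weighted by $f_i(p)$, only into the single site set $S_{\nxt(i,p)}$.
\item For $q\in\portal_j\setminus P_{j-1}$, use
\[
f_j(q)=\min\set{f_i(p)+\dist(p,q) : p\in\portal_i,\ i<j,\ \nxt(i,p)=j}.
\]
\end{enumerate}
Every portal is then a site exactly once, so the total number of sites is $\sum_i\card{\portal_i}=O(n^2)$.

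This is also where the second logarithm in $O(n^2\log^2 n)$ genuinely comes from. Nothing in your algorithm supports attributing an extra $\log$ to ``grid restriction and sorting.''
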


The proof of \Cref{thm:general} already contains some non-trivial ideas that recur in the other results. First observe that we may restrict our attention to tours whose vertices lie on the the \EMPH{grid}, i.e., the intersections of the horizontal and vertical lines through polygon vertices. Since the grid has $O(n^2)$ points, this leads to a naive dynamic program that runs in time $O(n^2 k)$. To improve the running time it is helpful to consider the case of \EMPH{step-disjoint} polygons, where any two consecutive polygons in the sequence are disjoint. (In \Cref{fig:intro}, $P_1,\dots, P_4$ is step-disjoint, while $P_1,\dots, P_5$ is not.) In this case, a shortest tour $\pi$ must enter each polygon $P_i$ at a boundary point which is also on the grid; we call these points \EMPH{portals}. Note that the number of portals on $P_i$ is at most $n$ times the number of edges in $P_i$, so the total number of portals over all polygons is $O(n^2)$. It remains to design an efficient data structure that allows one to query, for each portal on $P_{i+1}$, the shortest tour ending there. Since such tour must pass through a portal on $P_i$, we can build an additively weighted Voronoi diagrams over the portals on $P_i$ to handle these queries efficiently.

In the general case where consecutive polygons can overlap, a shortest tour may enter a polygon directly at a grid point in its interior; see for example $P_5$ in \Cref{fig:intro}. We handle the situation by ``tunneling'' through the overlapping polygons with the help of a segment-tree-based data structure.

\medskip
We do not know how to push this approach to fully address \Cref{q:general}. Nevertheless, we manage to obtain truly-subquadratic time algorithms under various natural restrictions. In the most restricted case, we consider a sequence of rectangles and answer the second half of \Cref{q:special}:

\begin{restatable}{theorem}{rectangle}\label{thm:rectangle}
    \touringOrthoParam for rectangles can be solved in $O(n) = O(k)$ time.
\end{restatable}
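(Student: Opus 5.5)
The key fact is that the $L_1$ metric decomposes over coordinates: $\norm{p-q}_1 = \card{p_x-q_x}+\card{p_y-q_y}$. A rectangle $R_i = I_i \times J_i$ is the product of two intervals. Hence a tour visiting $R_1,\dots,R_k$ is determined, without loss of generality, by the points $p_i=(x_i,y_i)\in R_i$, and its length is $\sum_i \norm{p_{i+1}-p_i}_1$. This sum splits into $\sum_i \card{x_{i+1}-x_i}$, constrained only by $x_i\in I_i$, plus $\sum_i \card{y_{i+1}-y_i}$, constrained only by $y_i\in J_i$. The two constraint sets are independent, so the optimum is the sum of the optima of two one-dimensional problems (\touringInt): given intervals $I_1,\dots,I_k\subset\Reals$, choose $x_i\in I_i$ minimizing $\sum_i\card{x_{i+1}-x_i}$. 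I would first state this reduction formally. The restriction to polygonal tours through the chosen points is justified because the segment $p_ip_{i+1}$ is a shortest curve between them, and the ordering condition is preserved.

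For \touringInt I would use a linear-time greedy that maintains a ``current feasible interval'' $C=[a,b]$. This is the set of positions where a tour can end after visiting $I_1,\dots,I_i$ at minimum cost $c_i$. The claim to prove is that the cost function $f_i(x)$, the minimum cost of a tour for the first $i$ intervals ending at $x\in I_i$, has the form $c_i + \dist(x,C_i)$ restricted to $I_i$, with $C_i\subseteq I_i$ an interval. Initially $C_1=I_1$ and $c_1=0$. For the transition, $f_{i+1}(x) = \min_{y\in I_i} f_i(y)+\card{x-y}$ for $x\in I_{i+1}$. Since $f_i(y)=c_i+\dist(y,C_i)$ is $1$-Lipschitz, the minimum is attained at $y\in C_i$, so $f_{i+1}(x)=c_i+\dist(x,C_i)$. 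I would then restrict this to $I_{i+1}$ and split into two cases:
\begin{itemize}
    \item if $C_i\cap I_{i+1}\neq\emptyset$, set $C_{i+1}=C_i\cap I_{i+1}$ and $c_{i+1}=c_i$;
    \item otherwise, say $I_{i+1}$ lies to the right of $C_i$, set $C_{i+1}=\set{\min I_{i+1}}$ and $c_{i+1}=c_i+\dist(C_i,I_{i+1})$, and symmetrically on the left.
\end{itemize}
In both cases $f_{i+1}=c_{i+1}+\dist(\cdot,C_{i+1})$ on $I_{i+1}$, because distance to an interval, restricted to a convex set meeting or beyond it, behaves this way. The answer is $c_k$.

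Each step takes $O(1)$ time, so both coordinate problems run in $O(k)$ time, and since each rectangle has $4$ vertices, $n=4k$, giving $O(n)=O(k)$ overall. A tour itself, not just its length, can be recovered by a backward pass. Set $x_k\in C_k$. Then for each $i$ going down, choose $x_i$ as the point of $C_i$ nearest to $x_{i+1}$; this is valid because $f_{i+1}(x_{i+1})$ is realized by the nearest point of $C_i$.

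The main obstacle is the invariant step: verifying that after restricting $c_i+\dist(x,C_i)$ to $I_{i+1}$ in the disjoint case, the result is exactly $c_{i+1}+\dist(x,\set{\min I_{i+1}})$. This holds since every $x\in I_{i+1}$ lies to the right of $C_i$, so $\dist(x,C_i)=\dist(C_i,I_{i+1})+(x-\min I_{i+1})$. Degenerate inputs (points, segments) need no special treatment. If the starting point is fixed, $P_1$ is simply a degenerate rectangle.
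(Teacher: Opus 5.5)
Your proof is correct. Your reduction to two \touringInt instances is the same as the paper's: the paper projects an optimal tour onto the axes, while you use separability of the $L_1$ sum under independent coordinate constraints, and these are equivalent.

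The genuine difference is in how you solve \touringInt.
\begin{itemize}
\item The paper first replaces the longest prefix and the longest suffix with nonempty common intersection by those intersections, so that $X_1\cap X_2=\emptyset$ and $X_{k-1}\cap X_k=\emptyset$. It then runs a forward greedy: start at the point of $X_1$ nearest to $X_2$, and always move to the nearest point of the next interval. It certifies optimality by a lower bound: the greedy's turning indices $i_1<\dots<i_l$ sit at alternating left/right ends, so every tour costs at least $\sum_t \dist(X_{i_t},X_{i_{t+1}})$, which is exactly the greedy length.
\item You instead prove by induction the exact formula $f_i=c_i+\dist(\cdot,C_i)$ on $I_i$. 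This needs nothing beyond the triangle inequality for $\dist(\cdot,C_i)$.
\end{itemize}

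Computationally the two coincide. As long as the prefix intersection is nonempty, your $C_i$ equals $\bigcap_{j\le i}I_j$, which is the paper's prefix preprocessing. Once $C_i$ becomes a singleton it stays one and moves by exactly the greedy rule.

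Your invariant gives a single uniform pass with no special treatment of the two ends: the free start is absorbed by $C_1=I_1$, and the answer is simply $c_k$. It also needs no separate lower-bound argument. The paper's argument instead gives an explicit combinatorial description of the optimum, as a sum of gaps between alternating extremal intervals.
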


This highly specialized algorithm uses the convexity of rectangles in a strong way, and does not even generalize to polygons with $O(1)$ edges each (such as L-shapes). It is also challenging to solve the other extreme: a sequence of $O(1)$ polygons with $\Theta(n)$ edges each. As our main result, we propose ideas to solve both extremes for step-disjoint sequences, and to interpolate the extremes into a truly subquadratic algorithm. This answers the first half of \Cref{q:special}.

\begin{restatable}[Main theorem]{theorem}{disjoint}\label{thm:disjoint}
    \touringOrthoParam for step-disjoint orthogonal polygons can be solved in $\Otilde(n^{2-\frac{1}{48}})$ time.
\end{restatable}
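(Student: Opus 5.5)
We will prove \Cref{thm:disjoint} by interpolating between two regimes, using a size threshold $s$ to be chosen at the end. Call a polygon $P_i$ \emph{large} if it has more than $s$ vertices and \emph{small} otherwise; there are at most $n/s$ large polygons. The dynamic program is the one from \Cref{thm:general}. For each $i$ we maintain the distance function $f_i(p)$, the length of a shortest tour visiting $P_1,\dots,P_i$ and ending at $p$. In the step-disjoint case $f_{i+1}$ restricted to $P_{i+1}$ is the additively weighted $L_1$ distance to the portals of $P_i$, with weights $f_i$. The whole difficulty is that the portals number $\Theta(n)$ per edge, so we must avoid materializing all of them for every polygon.

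\textbf{Step 1: runs of small polygons.} Between consecutive large polygons there is a maximal run of small polygons. For such a run we want a procedure whose cost is near-linear in the total size of the run plus the description size of the incoming function. It should produce a compressed representation of the outgoing function that can be evaluated at the portals of the next large polygon. The tool is the portal structure of constant-size polygons. A polygon with $O(s)$ edges has a rectangle decomposition into $O(s)$ pieces, and the incoming weighted-distance function, restricted to each such rectangle, is piecewise linear with few breakpoints in the $L_1$ setting. Propagating through a polygon should therefore blow up the complexity of the function by a factor depending only on $s$, together with the grid lines it introduces. I would represent each $f_i$ implicitly as a lower envelope of ``cones'' $w_q + \|p-q\|_1$, stored in a persistent segment tree or range tree keyed by coordinates. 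The aim is to show that the envelope restricted to the boundary of $P_{i+1}$ can be computed in $\Otilde(s\cdot c)$ time, where $c$ is the number of relevant cones. The main lever is that $L_1$ additively weighted Voronoi diagrams have linear complexity and support $\Otilde(1)$ queries after near-linear preprocessing.

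\textbf{Step 2: large polygons.} At a large polygon $P_i$ the portals are the grid points on its $\le n$ edges, so there are up to $n\cdot |P_i|$ of them. Instead of enumerating them, we restrict to the relevant grid lines: those through vertices of $P_{i-1}$, $P_i$ and $P_{i+1}$, plus the breakpoints of the compressed incoming function. Along each edge, $f_i$ is piecewise linear with slopes $\pm1$, so it is determined by its values at these breakpoints. The intended bound is $\Otilde(|P_i|\cdot m)$ time, where $m$ is the local complexity. Alternatively, we can answer the required queries by precomputing a distance oracle on the planar grid graph, restricted to a neighborhood of $P_i$, using an $r$-division of that graph; this is where the dynamic distance oracles for planar graphs come in.

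\textbf{Step 3: balancing.} Suppose the large-polygon steps cost about $(n/s)\cdot n^{2-\delta}$ in total, or rather $\sum |P_i|\cdot(\text{complexity})$, and the small runs cost about $n\cdot\poly(s)$. Choosing $s=n^{\alpha}$ and optimizing over $\alpha$ gives an exponent of the form $2-\frac{1}{48}$. The polynomial losses in $s$ stacked across the nested subroutines (the $r$-division parameter, the Voronoi updates, and the persistence) are what explain the small constant $\frac1{48}$.

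\textbf{Main obstacle.} I expect the hardest part to be Step 1. We must show that the function's complexity does not grow over a long chain of small polygons: after many steps the number of breakpoints could accumulate toward $\Theta(n)$ per edge, which would destroy subquadratic time. My first attempt would be a charging argument. Each breakpoint on $P_{i+1}$ comes from a grid line through a vertex of a polygon within the chain, and the tour is monotone between consecutive visited vertices, so each line should contribute $O(1)$ breakpoints per edge. If that charging fails, the fallback is to cut long runs into blocks of length $n^{\beta}$, and to recompute each block from scratch with the planar distance oracle, paying an extra polynomial factor that is then absorbed in the final balancing.
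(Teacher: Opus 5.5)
There is a genuine gap. Neither of your two regimes has a mechanism that beats quadratic time, and Step~3 is reverse-engineered from the target exponent rather than derived from costs you have established.

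\textbf{Step 1.} Your charging argument is false: breakpoints of $f_{i+1}$ on $\bd P_{i+1}$ need not come from grid lines of polygons in the run. Suppose $f_a$ has $\Theta(n)$ pieces along the top edge of $P_a$ (say $P_{a-1}$ is a comb with $\Theta(n)$ thin teeth below it). Suppose the run consists of $\Theta(n)$ thin rectangles stacked above $P_a$ over the same $x$-range. Then straight vertical tours carry this zigzag, merely shifted, to every rectangle of the run. The polygons in the run have constant size, yet the $f_i$ have total complexity $\Theta(n^2)$. So any scheme that materializes the $f_i$, or lets the cone set persist along the run, is quadratic.

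Persistence-based sharing works in \Cref{sec:orthoconvex} only because ortho-convexity makes each fragment a shifted copy of a single fragment of the previous polygon, relaxed by two linear functions. For general polygons a fragment can face pieces of $P_i$ on both sides. That requires pointwise minima of mountain ranges, and your representation offers no sublinear way to take them.

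The missing idea is the dichotomy of \Cref{lem:structure-sparse}. In a batch of total complexity $O(n^\beta)$, an optimal tour either snaps to the batch-local grid, which has $O(n^{2\beta})$ terminals and is handled by a Voronoi DP, or it stays on one horizontal or vertical line through the whole batch. The straight tours are exactly the ones carrying global breakpoints, and the paper never materializes them per polygon. Instead it sweeps $y$ and maintains a layered planar graph that changes by $O(1)$ edges per step, plus a facility set of size at most $n^{\alpha}$. Distances are queried via \Cref{thm:dist-oracle}.

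\textbf{Step 2.} It fails already for $k=O(1)$. Let $P_{i-1}$ be such a comb and $P_i$ a serpentine band below it with $\Theta(n)$ long horizontal edges. Then $f_i$ has $\Theta(n)$ pieces on each of these edges. So determining $f_i$ by its breakpoints costs $\Theta(n^2)$ for a single polygon, and an $r$-division of the grid graph near $P_i$ starts from $\Theta(n^2)$ nodes. Moreover, your interfaces are the large polygons themselves. Their portals number $\sum_{\text{large}} n\cdot n(P_i)=\Theta(n^2)$ whenever they carry most of the complexity.

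The paper does the opposite:
\begin{itemize}
\item \Cref{lem:batch} picks delimiters of complexity at most $n^\alpha$, so each of the $O(n^{1-\beta})$ handoffs involves only $O(n^{1+\alpha})$ portals.
\item Maximal runs of large polygons with total complexity above $n^\beta$ become dense batches of only $O(n^{1-\alpha})$ polygons. Inside them portals are abandoned in favour of the $O(n^{3/2})$ hubs of a rectangle partition with stabbing number $O(\sqrt n)$ (\Cref{lem:rect-partition}).
\item Each polygon meets each rectangle in horizontal or vertical stripes, so movement decouples into two one-dimensional problems (\Cref{lem:v-h-independence}). The DP then costs $\Otilde(n^{1-\alpha}(n^{3/2}+n^{5/2-\alpha}))$.
\item Two further sweeps with the planar oracle convert portal values to hub values and back.
\end{itemize}
The exponent $\frac{1}{48}$ comes from balancing exactly these costs. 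For example, $2\beta=\frac{7+\alpha}{4}=\frac{33-2\alpha}{16}=\frac{47}{24}$ at $\alpha=\frac56$ and $\beta=\frac{47}{48}$, multiplied by the $O(n^{1/48})$ batches.
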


The global structure of the algorithm is to split the sequence into batches of two types: a dense batch contains few polygons (but each polygon may have many edges), whereas a sparse batch contains polygons with few edges. It uses different strategies to process the two types and chain them together.

For simplicity, we expose the ideas on two extreme cases: In the dense case $k \leq n^{0.1}$ is small (thus all polygons are in one dense batch); and in the sparse case $k = \Omega(n)$ and all polygons have $O(1)$ many edges (thus all polygons are in one sparse batch).

In the dense case, we start by partitioning the plane into rectangles using a technique of De Berg and Van Kreveld~\cite{dBvK94}. The partition has the property that, for every polygon $P_i$ and rectangle $R$, the intersection $P_i \cap R$ is either a collection of horizontal stripes or a collection of vertical stripes. As a result, the tour inside $R$ is rather simple and can be reduced to one-dimensional problems. Our idea is to build a dynamic program on the grid points induced by rectangle boundaries. For each such grid point $p$ (called a \EMPH{hub}) and each index $i$ we want to compute the shortest tour visiting $P_1,\dots,P_i,p$. We need to iterate over index pairs $1\leq i \leq j \leq k$ during the computation, and the number of hubs can be bounded by $O(n^{1.5})$, so this results in an algorithm with running time $\Otilde(n^{1.5}k^2)$. Due to the quadratic dependence on $k$ this algorithm is not subquadratic for larger values of $k$, so we need a completely different strategy for the sparse case.

In the sparse case, we face a long sequence of polygons with few edges. Importantly, if the tour is visiting a pair of consecutive polygons $P_i$ and $P_{i+1}$ and makes a turn (i.e., changes from horizontal to vertical or vice versa) somewhere between their visits, then the turn can only be justified by a \emph{local} grid point, i.e., on some point that is the intersection of some horizontal and vertical lines through the vertices of~$P_i$ and~$P_{i+1}$. Since both polygons have few edges, the local grid has very small complexity. Tours making a turn between each consecutive pair can be handled using weighted Voronoi diagrams.

Unfortunately, it is possible that a tour traverses a subsequence of consecutive polygons without making any turns, and hence without snapping to the local grid. Tracking such tours efficiently is far from trivial. In order to compute tours that stay on a given horizontal line $y=y^*$, we build an auxiliary planar graph whose nodes are arranged by layers that correspond to the polygon index $1\leq i \leq k$, and inside each layer by $x$-coordinates that correspond to the $x$-coordinates where the polygons intersect the line $y=y^*$. The shortest horizontal tour on $y=y^*$ visiting the polygons corresponds to a shortest path in this graph that percolates from the first layer to the last layer. We cannot afford building these graphs from scratch for each relevant value of $y^*$, as this requires quadratic time. However, we observe that as we sweep a horizontal line bottom up, the number of updates to the planar graph is linear in total, and we can apply a multi-source distance oracle for dynamic planar graphs by Charalampopoulos and Karczmarz~\cite{CK22} to beat quadratic time.

Interpolating the dense and sparse cases requires further finesse. In particular, we need to deal with instances that contain both dense and sparse batches. A challenge is to pass information from one batch to the next and bridge the difference in strategies. The bridging is efficient only if the polygon delimiting the two batches has a small number of edges. Fortunately, we show a batching strategy such that all the delimiters have small complexity.

This concludes our overview of the main theorem. As a byproduct of the dense case we also obtain an algorithm for potentially overlapping polygons. The algorithm is truly-subquadratic (hence, faster than \Cref{thm:general}) whenever $k \leq n^{0.25-\eps}$ for any $\eps>0$.

\begin{restatable}{theorem}{dense}\label{thm:dense}
    \touringOrthoParam can be solved in $\Otilde(n^{1.5}k^2)$ time.
\end{restatable}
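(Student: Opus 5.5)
We follow the dense-case strategy sketched in the overview, now allowing consecutive polygons to overlap. First we apply the partition of De Berg and Van Kreveld~\cite{dBvK94} to the union of all polygon edges. This yields $O(n)$ axis-aligned rectangles with total stabbing complexity $O(n^{1.5})$. For every polygon $P_i$ and every rectangle $R$, the set $P_i\cap R$ is a union of horizontal stripes or a union of vertical stripes. We call the grid points induced by the rectangle boundaries \emph{hubs}; there are $O(n^{1.5})$ of them, counting the points where polygon edges meet rectangle boundaries.

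The first structural step is a lemma about optimal tours. Some shortest tour visiting $P_1,\dots,P_k$ can be cut at hubs into pieces so that each piece lies inside a single rectangle $R$. Such a piece goes from a hub $h$ to a hub $h'$ on $\bd R$, or to the end of the tour, while visiting a contiguous block $P_{i+1},\dots,P_j$. We would prove this by the usual grid-snapping argument: under $L_1$ we may assume the tour is a monotone-staircase-like orthogonal path with vertices on the grid. Every time the tour crosses a rectangle boundary we cut at a point, and we then argue that this point can be slid to a hub without increasing length or losing the visits. Sliding works because inside $R$ each $P_i$ is a set of parallel stripes, so visits are determined by a single coordinate.

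With this lemma in hand, we set up a dynamic program. Let $D[h,i]$ be the length of a shortest tour visiting $P_1,\dots,P_i$ and ending at hub $h$. The transition is
\[
D[h',j] = \min_{R,\,h\in\bd R,\; i\leq j} D[h,i] + c_R(h,h',i,j),
\]
where $c_R$ is the cost of a shortest path inside $R$ from $h$ to $h'$ visiting $P_{i+1},\dots,P_j$ in order. Because all of these polygons are stripes in $R$, the problem splits into one-dimensional subproblems. The horizontal-stripe polygons constrain only the $y$-coordinate and the vertical ones only the $x$-coordinate. For a fixed start coordinate, an interval-touring problem on a line has an optimal greedy solution, obtained by clamping successively to each interval. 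Inside $R$, however, a polygon contributing horizontal stripes and one contributing vertical stripes can interleave in the order, so the two coordinates are coupled. We would handle this by noting that in $L_1$ inside a rectangle we may move freely in $x$ and $y$ separately. The cost is then $|\Delta x|$ plus $|\Delta y|$ of a path whose $x$-projection tours the vertical-stripe sets and whose $y$-projection tours the horizontal-stripe sets, with no coupling beyond the ordering. Ordering can be realised by interleaving moves, since moving only $x$ never breaks a $y$-visit already made.

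The main obstacle is evaluating $c_R$ fast enough. Naively, each pair of hubs on $\bd R$ and each pair $(i,j)$ costs too much. Instead we would fix $R$ and $i$, sweep $j$ from $i$ upward, and maintain for each starting hub the greedy clamped one-dimensional position and accumulated cost incrementally. For each $j$ we then propagate $D[\cdot,i]$ on the hubs of $\bd R$ to $D[\cdot,j]$ on $\bd R$. The propagation to endpoint hubs is a lower-envelope (additively weighted one-dimensional distance) computation: each start hub produces an interval of clamped positions with a cost, and reaching a hub $h'$ adds $|x-x'|+|y-y'|$. This can be done by sorting plus prefix minima in time near-linear in the number of hubs on $\bd R$. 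Summed over $R$, this gives $\Otilde(n^{1.5})$ work per pair $(i,j)$, hence $\Otilde(n^{1.5}k^2)$ in total. If per-$R$ work depended on the stripe complexity, we would charge it to the stabbing number bound $O(n^{1.5})$ of the partition. The answer is $\min_h D[h,k]$, taken together with the end-of-tour pieces, which are handled in the same way without a target hub.
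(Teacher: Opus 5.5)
Your overall architecture matches the paper's: the rectangle partition, hubs, cutting an optimal tour at hubs, separating horizontal and vertical movement inside a rectangle, and a DP over pairs $(i,j)$ at $\Otilde(n^{1.5})$ per pair. The step that evaluates $c_R$, however, does not work. You reduce the inside-$R$ problem to one-dimensional \emph{interval} touring and solve it by greedy clamping. But $P_l\cap R$ is in general a union of many parallel stripes, for instance several teeth of a comb-shaped polygon crossing $R$; the partition only forbids vertices in the interior of $R$. So the one-dimensional subproblem is touring a sequence of \emph{unions} of intervals. There, clamping to the nearest point is wrong, and the right stripe depends on later polygons and on the target hub.

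A concrete failure: start at $x=5$, let $P_{i+1}\cap R$ have $x$-range $[0,1]\cup[10,11]$, and let $P_{i+2}\cap R$ have $x$-range $[20,21]$. Clamping moves to $1$ and pays $4+19=23$, while moving to $10$ pays $5+10=15$. Hence the cost-to-go from a start hub is no longer of the form $c+|x-g|$. Maintaining one clamped position per start hub as $j$ grows therefore does not compute $c_R$. Separately, the minimisation over start hubs is two-dimensional, since both clamped coordinates vary, so it is not a one-dimensional prefix-minimum computation.

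The missing idea, which is how the paper proceeds in \Cref{lem:progression}, is to split the start hubs according to the edge of $\partial R$ they lie on. For hubs on, say, the left edge $e$, the start $x$-coordinate is the constant $x(e)$. The value at a target hub $q$ then separates as $\mathrm{hcost}(x(e),x(q))+\min_{p\in e}\bigl(f_{i-1}(p)+\mathrm{vcost}(y(p),y(q))\bigr)$, that is, a single-source plus a multi-source one-dimensional problem. Each is solved exactly by Dijkstra on a layered graph. Its nodes are the stripe-boundary and hub coordinates in $R$. Its edges join neighbours within a layer, and join each node to its nearest nodes in the next layer that does not cover it; this also handles overlapping consecutive polygons. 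The graph has $O(N_R)$ nodes, so Dijkstra costs $O(N_R\log N_R)$, which recovers your $\Otilde(n^{1.5})$ per pair $(i,j)$.

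A smaller omission: your transitions with $i=j$ make $D[\cdot,j]$ depend on itself, because the tour may cross several rectangles without visiting a new polygon. Each layer therefore needs a closure step, $D[h',j]\leftarrow\min_h D[h,j]+\dist(h,h')$ over all hubs. The paper does this with an additively weighted $L_1$ Voronoi diagram on the hubs (\Cref{lem:teleportation}), in $\Otilde(n^{1.5})$ per $j$. The piece before the first hub also needs the same free-endpoint treatment you give the last piece.
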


In another result, we consider \EMPH{ortho-convex} polygons, which are orthogonal polygons with the property that any horizontal or vertical line intersects them in a segment. If the ortho-convex polygons are step-disjoint, then a vertical edge $S$ of $P_i$ cannot have points of $P_{i-1}$ both to its left and right. Hence, for each grid point $p \in S$, the shortest tour visiting $P_1,\dots,P_{i-1},p$ can be efficiently described by the distance functions on the vertical edges of $P_{i-1}$ that are ``facing''~$S$. The distance functions in turn are piecewise linear and have slopes $-1$, $0$ or $1$. By handling and updating these functions in a persistent data structure we are able to solve the step-disjoint ortho-convex case in near-linear time, avoiding the reliance on dynamic planar graph algorithms.

\begin{restatable}{theorem}{orthoconvex}\label{thm:orthoconvex}
    \touringOrthoParam can be solved in $O(n\log n)$ time for step-disjoint ortho-convex polygons.
\end{restatable}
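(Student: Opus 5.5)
We process the polygons one at a time, maintaining for the current polygon $P_i$ a compact description of the distance function $d_i$. Here $d_i(p)$ is the length of a shortest tour visiting $P_1,\dots,P_{i-1}$ and ending at $p$. We only keep $d_i$ on the boundary of $P_i$, and only on its vertical and horizontal edges that can serve as entry points. Because the sequence is step-disjoint, a shortest tour enters $P_i$ through $\partial P_i$. Ortho-convexity then gives a monotone structure: for a vertical edge $S$ of $P_i$, the previous polygon $P_{i-1}$ lies entirely on one side of the line through $S$, or at least the relevant part does. Combined with the $L_1$ metric, this lets us decompose $d_i$ restricted to $S$ as a lower envelope of contributions from the edges of $P_{i-1}$ that ``face'' $S$. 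These facing edges form a contiguous chain of $P_{i-1}$'s boundary, namely the staircase part visible in the appropriate direction. Since ortho-convex boundaries split into four monotone staircases, we can find this chain by binary search.

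The key structural lemma to establish first is the following. Restricted to any edge, $d_i$ is piecewise linear with slopes in $\{-1,0,1\}$. Moreover, the function on an edge $S$ of $P_i$ is obtained from $d_{i-1}$ on the facing chain by an $L_1$ ``infimal convolution'' with the distance. For a vertical $S$ at $x$-coordinate $x_S$ and a facing vertical edge $e$ at $x_e$, this reads
\[
d_i(x_S,y)=|x_S-x_e| + \min_{y'} \bigl(d_{i-1}(x_e,y') + |y-y'|\bigr),
\]
which is the $1$-Lipschitz lower envelope of $d_{i-1}$ on $e$, shifted by a constant. Taking minima over a monotone chain of facing edges, and over horizontal edges handled symmetrically, preserves the slope property. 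I would prove the lemma by induction on $i$, using that a shortest $L_1$ path between two ortho-convex regions can be taken monotone.

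Algorithmically, I represent each $d_i$ on each staircase as a sequence of linear pieces stored in a persistent balanced search tree, keyed by coordinate. The operations needed are: (i) splitting out the portion relevant to the facing chain; (ii) ``Lipschitzizing,'' meaning taking the lower $1$-Lipschitz envelope, which for slope-$\{-1,0,1\}$ functions amounts to removing dominated breakpoints; (iii) adding a constant or a linear term $|x_S-x_e|$, done via lazy tags; and (iv) concatenating or taking minima of pieces. The bound on breakpoints will come from an amortization argument: each vertex of $P_i$ creates $O(1)$ new breakpoints, and each breakpoint is deleted at most once. This gives $O(n)$ total breakpoint events at $O(\log n)$ each, hence $O(n\log n)$. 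Persistence is needed because the same chain of $P_{i-1}$ may face several edges of $P_i$, so we must query old versions without copying them.

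The main obstacle I expect is step (iv) together with the amortization. The minimum of the contributions from several facing edges onto one edge of $P_i$ could in principle create many crossings, and the same breakpoints of $d_{i-1}$ could be reused by many edges of $P_i$, which would break the charging argument. To control this, I would first try to show that along a staircase of $P_i$ the assignment of facing portions of $P_{i-1}$ is monotone and that these portions overlap in only $O(1)$ places per edge. Then each piece of $d_{i-1}$ is genuinely copied into only $O(1)$ edges of $P_i$, and the remaining edges inherit it through shared persistent subtrees whose cost is paid only at the $O(|P_i|)$ split points. The final answer is $\min$ of $d_k$ over $P_k$, obtained by a scan of the $O(|P_k|)$ stored pieces.
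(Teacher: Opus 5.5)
There is a genuine gap, and it sits exactly at the step you flagged, (iv). Your data-structure layer is the right kind of tool: persistent search trees over pieces of slope $-1,0,1$, with lazy shifts, split and concatenate. But you give no way to form $f_i$ (your $d_i$) on an edge of $P_i$ using $O(n(P_{i-1})+n(P_i))$ cheap operations.

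\begin{itemize}
\item A pointwise minimum of two mountain ranges is not an $O(\log n)$ operation, because the envelope can have as many crossings as the functions have pieces.
\item Your amortization cannot pay for those crossings. The total complexity of all $f_i$ can be $\Theta(n^2)$. Under persistence, a breakpoint shared by several versions can be deleted once in each version, so ``each breakpoint is deleted at most once'' fails.
\item Your proposed repair (a monotone assignment with $O(1)$ overlaps per edge) is not the right invariant. One long edge of $P_i$ can face $\Theta(n(P_{i-1}))$ edges of a staircase of $P_{i-1}$, whose contributions interleave along it.
\item ``Lipschitzizing'' is vacuous, since $f_{i-1}$ is already $1$-Lipschitz. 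The only real content is how a contribution extends beyond an edge's endpoints, namely by slope-$\pm1$ rays.
\end{itemize}

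The missing idea is to avoid general minima altogether.

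\begin{itemize}
\item \emph{Fragments.} Split every edge of $P_{i-1}$ and $P_i$ at the points of $\grid(\{P_{i-1},P_i\})$. By ortho-convexity there are only $O(n(P_{i-1})+n(P_i))$ fragments.
\item \emph{Values at fragment endpoints.} Compute $f_i$ exactly at the fragment endpoints on $\bd P_i$. Sliding the last point $r\in P_{i-1}$ along its edge toward $p$ never hurts, by $1$-Lipschitzness. Hence $f_i(p)=\min\{f_{i-1}(r)+\dist(r,p) : r\in V_{i-1}\cup\{q,q'\}\}$, where $V_{i-1}$ is the vertex set and $q,q'$ are the nearest points of $P_{i-1}$ on the vertical and horizontal lines through $p$. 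An additively weighted $L_1$ Voronoi diagram on $V_{i-1}$, plus two evaluations, gives this in $O(\log n)$ per endpoint.
\item \emph{Values on a fragment.} Take a horizontal fragment $S$ of $P_i$ with endpoints $q,q'$. An optimal last point either lies vertically on the unique nearest fragment $T$ of $P_{i-1}$ with the same $x$-range, or has $x$-coordinate outside $(x(q),x(q'))$. The latter holds because no vertex of $P_{i-1}$ has $x$-coordinate strictly inside that range and no point of $P_{i-1}$ lies on $S$; in that case the path can be routed through $q$ or $q'$. Hence
\[ f_i(x,y(S))=\min\bigl\{f_{i-1}(x,y(T))+|y(S)-y(T)|,\; f_i(q)+x-x(q),\; f_i(q')+x(q')-x\bigr\}. \]
\item \emph{Cost.} This is one persistent cut-out, one shift, and two minima with a \emph{single} line of slope $\pm1$. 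Each minimum costs $O(\log n)$: $f(x)\mp x$ is monotone for $1$-Lipschitz $f$, so the crossing is found by binary search and the dominated part is removed by one split.
\end{itemize}

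Concatenating fragments back into edges, iteration $i$ costs $O((n(P_{i-1})+n(P_i))\log n)$. The total is $O(n\log n)$ with no charging over breakpoints. Persistence is needed only because a fragment $T$ is aligned with up to two fragments of $P_i$.
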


While this paper is about fast algorithms for \touringOrtho, the problem is also interesting from the lower bound/fine-grained complexity perspective. In particular, it does not seem to have the same quantifier structure as typical geometric problems studied in the fine-grained literature such as curve and shape similarity problems~\cite{Bringmann14,BringmannK15,BringmannN22,BringmannKKMN23,BringmannSWW24}, intersection graph problems~\cite{BringmannKKNP22,ChanCCGKLZ25}, clustering~\cite{ChanHY23}, and point-line incidence problems~\cite{GajentaanO95,BarbaCILOS19}. We raise the following natural question.

\begin{question}
    Is there a super-linear conditional lower bound for touring a sequence of convex polygons under $L_2$, or for touring orthogonal polygons under $L_1$?
\end{question}

Towards this direction we consider the generalized problem in $3$-dimensional space: touring a sequence of $k$ orthogonal polytopes of $n$ vertices in total. The problem has a straightforward $\Otilde(n^3k)$ algorithm. We show a conditional lower bound assuming the orthogonal vectors hypothesis (OVH). See \cite{Bringmann19} for an overview of popular fine-grained conjectures.

\begin{restatable}{theorem}{lowerbound}\label{thm:lowerbound}
Assuming OVH, no algorithm can solve \touringOrthoThreeParam for step-disjoint 3-dimensional polytopes in time $O(n^{2-\epsilon} \poly(k))$, for any $\eps>0$.
\end{restatable}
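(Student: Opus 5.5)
We reduce from \OV: given sets $A,B\subseteq\{0,1\}^d$ with $|A|=|B|=N$ and $d=\omega(\log N)$ but $d=N^{o(1)}$, decide whether some $a\in A$, $b\in B$ satisfy $a\cdot b=0$. OVH says this needs $N^{2-o(1)}$ time. We build a step-disjoint instance of \touringOrthoThreeParam with $n=\Otilde(N)$ vertices and $k=\poly(d)$ polytopes. Then an $O(n^{2-\eps}\poly(k))$ algorithm would run in $N^{2-\eps+o(1)}$ time, contradicting OVH.

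\emph{Construction.} We use the third coordinate $z$ as a ``time/coordinate index'', and the $xy$-plane as the search space. Encode $a_i\in A$ by an $x$-position $x_i=i$, and $b_j\in B$ by a $y$-position $y_j=j$. The intended tour picks a vertical line $(x_i,y_j,\cdot)$ and travels upward in $z$. For each coordinate $t\in[d]$ we create a polytope $Q_t$ in the slab $z\in[2t,2t+1]$. It must contain the point $(x_i,y_j,z)$ iff $a_i[t]\cdot b_j[t]=0$. The forbidden set $\{(i,j): a_i[t]=b_j[t]=1\}$ is a product $X_t\times Y_t$, so its complement is $([N]\setminus X_t)\times[N]\,\cup\,[N]\times([N]\setminus Y_t)$. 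This is a union of $O(N)$ slabs, i.e.\ thin boxes $\{x_i\}\times[1,N]$ and $[1,N]\times\{y_j\}$ thickened by $\tfrac14$ and glued along a common base plate. Each $Q_t$ is then a single orthogonal polytope of $O(N)$ vertices. Gluing is done at a $z$-level slightly off the slab so that the plate does not create unwanted short detours; alternatively each $Q_t$ is a comb-like union connected through a far-away spine.

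\emph{Step-disjointness.} Consecutive polytopes $Q_t,Q_{t+1}$ sit in disjoint $z$-slabs, so the sequence is automatically step-disjoint. We add a start polytope $Q_0$ (a big flat plate at $z=0$ restricted to grid positions) and an end polytope similarly. In total $k=d+2$ and $n=O(Nd)$.

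\emph{Correctness via length threshold.} In $L_1$ any tour must climb at least the total $z$-extent $L_z$. A tour of length exactly $L_z$ (plus the fixed in-slab costs) moves purely in $z$, so it stays on one vertical line $(x,y)$. The tour passes through all $Q_t$ with no horizontal movement iff $(x,y)$ lies in every $Q_t$ along the climbing region. We make the ``arms'' of width less than $\tfrac12$ so that any point common to all of them lies near a genuine pair $(x_i,y_j)$, and then the pair is orthogonal. Conversely, an orthogonal pair gives a tour of length exactly the threshold. We must check that the base plates of the combs do not create a common vertical line; we place plates at alternating corners (e.g.\ at $x<0$ for even $t$ and $y<0$ for odd $t$) so that no line hits all plates. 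Thus the answer is $\le$ threshold iff an orthogonal pair exists.

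\emph{Main obstacle.} The delicate step is guaranteeing that each $Q_t$ is a single connected orthogonal polytope of size $O(N)$ without introducing spurious zero-horizontal-cost lines, e.g.\ through the connecting plates or through crossings of $x$-arms and $y$-arms. First I would try the alternating-plate trick plus separating the $x$-arms and $y$-arms into different sub-$z$-levels within the slab, which costs only an additive constant in $z$. Such a vertical line must then pass through one of the two sub-levels, and either sub-level alone certifies $a_i[t]=0$ or $b_j[t]=0$. Because $L_1$ length lower-bounds by $z$-extent, the threshold argument is exact and needs no precision tricks.
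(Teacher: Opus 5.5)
Your reduction is sound at its core and follows a genuinely different route from the paper's. You use $z$ only as the sequencing axis and build one polytope $Q_t$ per coordinate. Its $xy$-projection is the union of the ``rows'' with $a_i[t]=0$ and the ``columns'' with $b_j[t]=0$, so the disjunction in $a_i[t]b_j[t]=0$ is realized geometrically by a union. The threshold is then just the total height: the optimum equals the height iff some vertical line meets every $Q_t$.

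The paper instead uses $k=4d+2$ combs $V,H,A_t,B_t$ whose teeth are all parallel. Each polytope is therefore a single comb with one far-away base, and connectivity and complexity are immediate. It pins the horizontal position to some $(ci,cj)$ through the budget $8d+1<c$ and excludes the bases by the distance $D$. The disjunction is realized arithmetically through the $z$-offsets of the teeth, so each coordinate gadget costs $7$ or $13$.

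Your route buys fewer polytopes ($d+2$) and a one-line threshold argument. The paper's route never has to place rows and columns in the same polytope, and that is exactly where your construction becomes delicate. Two points there need correcting.

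\textbf{Crossings.} Crossings of $x$-arms and $y$-arms do not create spurious lines: the crossing over $(x_i,y_j)$ lies in $Q_t$ exactly when $a_i[t]=0$ or $b_j[t]=0$, which is the intended condition. The real problem is that crossing thickened arms at the same $z$-level produce $\Theta(N^2)$ vertices. So your claim of $O(N)$ vertices, made before you separate the sub-levels, is false. The separation is needed for complexity, not for correctness.

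\textbf{Connectors.} The criterion ``no line hits all plates'' is insufficient, because a vertical line may pass through the connectors of some $Q_t$ and the arms of the others. Moving a connector to a different $z$-level does not help either: a vertical tour sweeps the whole slab, so only the $xy$-projection matters.

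Concretely, put the plates at $x<0$ for even $t$ and at $y<0$ for odd $t$. The even $x$-arms must reach their plate through an $x$-direction spine at some $y\notin[1,N]$. If that spine sits at $y<1$, it crosses the odd $x$-arms, which are extended down to their own plate. The line through $(x_i,y_{\mathrm{spine}})$ then meets every $Q_t$ as soon as $a_i[t]=0$ for all odd $t$, which is a false positive.

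What you need is that each parity's projection outside the square $[1,N]^2$ is disjoint from the \emph{entire} projection of the other parity. One working choice is as follows.
\begin{itemize}
\item For even $t$, extend the $x$-arms up to a spine at $y=N+2$ and the $y$-arms left to a spine at $x=-1$. Join the two spines at the corner $(-1,N+2)$ by a vertical piece between the two sub-levels.
\item For odd $t$, use the mirrored L-shape at the bottom-right corner.
\end{itemize}
Then, up to the arm thickness, the even projections lie in $[-1,N]\times[1,N+2]$ and the odd ones in $[1,N+2]\times[-1,N]$. The intersection of all projections therefore equals the intersection of the arm sets inside the square, and each $Q_t$ is connected with $O(N)$ vertices (for $d\ge 2$). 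With these fixes your threshold argument goes through, with $n=O(Nd)$ and $k=d+2$.
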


\paragraph*{Organization}
The rest of this paper is organized as follows. \Cref{sec:prelim} introduces the fundamental concepts and structures. \Cref{sec:general} studies the general case without input restriction. \Cref{sec:step-disjoint,sec:orthoconvex} assume step-disjointness and present truly-subquadratic time algorithms for orthogonal polygons and ortho-convex polygons, respectively. Along the way we discuss how the ideas can be used to handle (a small number of) overlapping polygons. \Cref{sec:rectangles} gives a linear time algorithm for rectangles, and \Cref{sec:lower} gives a quadratic lower bound for the three-dimensional variant of the problem.

\section{Fundamental concepts}
\label{sec:prelim}
We write $[N] := \set{1, \dots, N}$. A polygon is \EMPH{orthogonal} if all its edges are horizontal or vertical. For simplicity of presentation, we assume that all polygons are in general position, i.e., no two edges are collinear. The assumption can be removed by imposing an ordering on collinear edges. For a polygon $P$, let $x(P)$ and $y(P)$ be the sets of $x$ and $y$-coordinates of the vertices of $P$, respectively. Denote $n(P) := |x(P)| = |y(P)|$. In the \touringOrthoParam problem, we are given orthogonal polygons $P_1, \dots, P_k$ with total complexity $n := \sum_{i=1}^k n(P_i)$, and the goal is to compute (the length of) a shortest tour under $L_1$ metric that visits $P_1, \dots, P_k$. We say that the input sequence of polygons is \EMPH{step-disjoint} if $P_i$ and $P_{i+1}$ are disjoint for all $1 \leq i \leq k-1$.

For each index $i \in [k]$ and point $p$, we define $\color{BrickRed} f_i(p)$ as the length of a shortest tour that visits $P_1, \dots, P_{i-1}, p$ in sequence. We also define ${\color{BrickRed}\nxt(i,p)} := \min \set{j \geq i : p \notin P_j}$. (If $j$ does not exist then let $\nxt(i,p) := k+1$.) These definitions play a central role in our algorithms.

It is sometimes convenient to represent tours as discrete sequences instead of continuous curves. The notion of skeletons serves this purpose.
\begin{definition}
    A \EMPH{skeleton} visiting $P_1, \dots, P_k$ is a sequence $(i_1,q_1), \dots, (i_m, q_m)$ such that $1 = i_1 \leq i_2, \dots, i_m  < i_{m+1} := k+1$ and $q_t \in \bigcap_{i_t \leq i < i_{t+1}} P_i$ for all $t \in [m]$. We say that the skeleton has \emph{size} $m$ and \emph{length} $\sum_{i=2}^m \dist(q_{i-1}, q_i)$.
\end{definition}

So far as the minimum length is concerned, skeletons and tours are equivalent:
\begin{lemma}
    \label{lem:skeleton}
    The minimum length of tours visiting $P_1, \dots, P_k$ is equal to the minimum length of skeletons visiting $P_1, \dots, P_k$.
\end{lemma}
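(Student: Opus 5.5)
We prove the equality by two inequalities, converting in each direction between tours and skeletons without increasing length.

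\emph{Skeleton to tour.} Given a skeleton $(i_1,q_1),\dots,(i_m,q_m)$ of length $L$, let $\pi$ be the polygonal curve $q_1 q_2 \cdots q_m$, where each piece $q_{t-1}q_t$ is an $L_1$-shortest (e.g.\ L-shaped) path. Then $\pi$ has $L_1$ length exactly $\sum_{t=2}^m \dist(q_{t-1},q_t)=L$. For each $i\in[k]$ there is a unique $t$ with $i_t\leq i<i_{t+1}$, and we set $p_i:=q_t$. This is valid because $q_t\in P_i$ by the definition of a skeleton. The indices $t$ are nondecreasing in $i$, so the points $p_1,\dots,p_k$ occur along $\pi$ in order; coinciding points are allowed. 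Hence $\pi$ is a tour of length $L$ visiting $P_1,\dots,P_k$, and the minimum over tours is at most the minimum over skeletons.

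\emph{Tour to skeleton.} Let $\pi$ visit $P_1,\dots,P_k$ via points $p_1,\dots,p_k$ appearing in order along $\pi$. Group consecutive indices whose points coincide: start a new group at index $i$ whenever $i=1$ or $p_i\neq p_{i-1}$ (as points of the curve). Let $i_1=1<i_2<\dots<i_m$ be the group starts and $q_t:=p_{i_t}$. Every $i$ with $i_t\leq i<i_{t+1}$ has $p_i=q_t$, so $q_t\in\bigcap_{i_t\leq i<i_{t+1}}P_i$, and the sequence is a skeleton. For its length, the subarcs of $\pi$ between consecutive $q_{t-1}$ and $q_t$ are disjoint except at endpoints, because the points appear in order along $\pi$. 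Each such subarc has $L_1$ length at least $\dist(q_{t-1},q_t)$ by the triangle inequality, since the length of a curve is at least the $L_1$ distance between its endpoints. Summing gives skeleton length $\leq \mathrm{len}(\pi)$.

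The two directions together show that the infima agree. To speak of minima rather than infima, I would note that the skeleton minimum is attained. For a fixed index pattern $(i_t)$, each $q_t$ ranges over a compact set (a finite intersection of closed polygons), and the length is continuous, so a minimum exists for that pattern. There are finitely many patterns, so the overall minimum is attained. The first direction then gives a tour achieving this value. The only subtle point is the ordering and parametrization in the second direction: a tour may pass through $p_i$ several times, so we must fix the specific curve parameters $s_1\leq\dots\leq s_k$ witnessing the visits and split $\pi$ at these parameters, not at the points themselves.
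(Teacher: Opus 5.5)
Your proof is correct and follows the same two-direction argument as the paper: connect consecutive skeleton points by $L_1$-shortest paths to get a tour of equal length, and conversely read a skeleton off the visited points $p_1,\dots,p_k$ and bound its length by the triangle inequality. Grouping coinciding points is harmless but unnecessary, since $(1,p_1),\dots,(k,p_k)$ is already a skeleton; your compactness argument that the minimum is attained is a bit of rigor the paper leaves implicit.
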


\begin{proof}
    If a tour visits $P_1, \dots, P_k$, then it visits a sequence of points $p_1 \in P_1, \dots, p_k \in P_k$, and its length is at least $\sum_{i=2}^k \dist(p_{i-1}, p_i)$. On the other hand, $(1,p_1), \dots, (k,p_k)$ is a skeleton, and its length is exactly $\sum_{i=2}^k \dist(p_{i-1}, p_i)$. This shows the $\geq$ direction.

    Conversely, if a skeleton $(i_1,q_1), \dots, (i_m,q_m)$ visits $P_1, \dots, P_k$, then we can define a tour by connecting $q_1, \dots, q_m$ in sequence. It visits $P_1, \dots, P_k$ because $i_1=1$, $i_{m+1}=k+1$, and $q_t \in \bigcap_{i_t \leq i < i_{t+1}} P_i$ for all $t \in [m]$. Its length is exactly the length of the skeleton. This shows the $\leq$ direction.
\end{proof}

The next lemma provides crucial insights into the structure of minimum length skeletons.

\begin{lemma}
    \label{lem:structure}
    Among all minimum length skeletons visiting $P_1, \dots, P_k$, there is a skeleton $(i_1,q_1), \dots, (i_m,q_m)$ with the four properties listed below. Here we denote $Q_t := \bigcap_{i_t \leq i < i_{t+1}} P_i$.
    \begin{enumerate}
        \item For each $t \in [m]$, we have $\nxt(i_t,q_t) = i_{t+1}$.
        \item For each $t \in [m]$, we have $q_t \in \bd Q_t$.
        \item For each $t \in [2,m]$, at least one of the following holds:
        \begin{enumerate}[(1)]
            \item $q_t$ is a vertex of $Q_t$ or an intersection point in $\bd Q_{t-1} \cap \bd Q_t$;
            \item $q_{t-1}, q_t$ are on horizontal edges of $Q_{t-1},Q_t$ respectively, and $x(q_{t-1}) = x(q_t)$;
            \item $q_{t-1}, q_t$ are on vertical edges of $Q_{t-1},Q_t$ respectively, and $y(q_{t-1}) = y(q_t)$;
        \end{enumerate}
        \item There exists $t \in [m]$ for which (1) holds.
    \end{enumerate}
\end{lemma}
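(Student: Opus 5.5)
We select a canonical optimum by a lexicographic potential and then show that any violation of one of the four properties lets us move to a skeleton that is no longer but strictly smaller in the potential. The space of skeletons of a fixed size is compact: the index sequence ranges over finitely many choices, and each $q_t$ lies in the compact polygon $Q_t$. Hence minimum length skeletons exist. Among them we take one that
\begin{enumerate}[(a)]
    \item has minimum size $m$, and
    \item subject to (a), minimizes a secondary potential, namely the number of points $q_t$ that are not ``pinned'', where a point is pinned if it is a vertex of $Q_t$ or lies in $\bd Q_{t-1}\cap\bd Q_t$.
\end{enumerate}
A tie-breaking perturbation argument will be needed to make the secondary choice well defined.

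\textbf{Property 1.} Since $q_t\in P_i$ for all $i_t\le i<i_{t+1}$, we always have $\nxt(i_t,q_t)\ge i_{t+1}$. Suppose the inequality is strict. Then $q_t\in P_{i_{t+1}}$, so we may drop $(i_{t+1},q_{t+1})$ by extending the range of $q_t$. The length does not increase, by the triangle inequality $\dist(q_t,q_{t+2})\le \dist(q_t,q_{t+1})+\dist(q_{t+1},q_{t+2})$, and the size drops. This contradicts (a). In the case $t=m$, strictness is impossible since $i_{m+1}=k+1$.

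\textbf{Property 2.} If $q_t$ is interior to $Q_t$, then the length, as a function of $q_t$, is $\dist(q_{t-1},q_t)+\dist(q_t,q_{t+1})$. This is convex and piecewise linear, so we can move $q_t$ straight toward $q_{t-1}$ (or toward $q_{t+1}$) without increasing the length, until it either hits $\bd Q_t$ or coincides with $q_{t-1}$. In the latter case $q_{t-1}\in Q_t$, so we can merge the two entries, contradicting (a).

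\textbf{Property 3.} This is the main local argument. Fix $t\ge2$ with $q_t$ not pinned, so $q_t$ lies in the relative interior of an edge $e$ of $Q_t$, say a horizontal one, and $q_t\notin\bd Q_{t-1}$. Sliding $q_t$ along $e$ changes the $x$-coordinate only. The $L_1$ length then depends on $x(q_t)$ through $|x(q_t)-x(q_{t-1})|+|x(q_t)-x(q_{t+1})|$, which is convex piecewise linear with breakpoints at $x(q_{t-1})$ and $x(q_{t+1})$. We slide in a non-increasing direction until one of three events happens.
\begin{enumerate}[(i)]
    \item We hit a vertex of $Q_t$, or a point of $\bd Q_{t-1}$. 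Then case (1) holds.
    \item We reach $x(q_t)=x(q_{t-1})$. If $q_{t-1}$ is on a horizontal edge, this gives case (2). Otherwise we need to argue further: in that situation $q_{t-1}$ can itself be slid, or the pair forms a case (3) configuration.
    \item We reach $x(q_{t+1})$. This breakpoint is handled by the symmetric condition at index $t+1$.
\end{enumerate}
To avoid cycling, we process $t=2,\dots,m$ in order and argue that earlier fixes are preserved. Alternatively, we use a potential counting the number of $t$ satisfying none of (1)--(3).

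I expect the main obstacle to be the interaction in case (ii) when $q_{t-1}$ and $q_t$ lie on edges of different orientations. There I would slide $q_{t-1}$ and $q_t$ simultaneously in the common free direction, keeping their alignment, until one of them becomes pinned. Some care is needed with the boundary indices $t=1$ and $t=m$, where only one neighbour contributes to the length.

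\textbf{Property 4.} If no $t$ satisfies (1), then every consecutive pair is aligned via (2) or (3). In that case the whole skeleton can be translated by a small vector: horizontally along a maximal run of type (2) pairs, and vertically along type (3) pairs. The translation is chosen compatibly, and it preserves all alignments and hence the length. We continue until some point hits a vertex or a boundary of the neighbouring $Q$, which yields (1). A subtlety is that mixed runs require composing translations, but a single component can be translated rigidly, since $L_1$ distances are translation invariant. Boundedness of the polygons guarantees that the translation stops.
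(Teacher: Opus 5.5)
\textbf{There is a genuine gap, and it is at Property 1.} From $\nxt(i_t,q_t)>i_{t+1}$ you only get $q_t\in P_{i_{t+1}}$. Deleting $(i_{t+1},q_{t+1})$ and letting $q_t$ cover $[i_t,i_{t+2})$ requires $q_t\in P_i$ for every $i<i_{t+2}$, i.e.\ $\nxt(i_t,q_t)\geq i_{t+2}$. In the intermediate case $i_{t+1}<\nxt(i_t,q_t)<i_{t+2}$ the only repair is to shift the index $i_{t+1}$ forward. That changes neither length nor size, so your criteria cannot exclude this case.

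A concrete example: take $P_1=[0,1]\times[0,3]$, $P_2=[\tfrac12,\tfrac52]\times[1,2]$, $P_3=[2,3]\times[\tfrac12,\tfrac52]$. The skeleton $(1,(1,1)),(2,(2,1))$ has minimum length $1$, minimum size $2$, and the minimum possible number (one) of unpinned points. Yet $\nxt(1,(1,1))=3\neq 2$, and $(1,1)\notin P_3$, so nothing can be dropped.

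The missing ingredient is the paper's tie-breaker on the indices themselves. Among minimum-length, minimum-size skeletons, take one whose gap vector $(i_2-i_1,\dots,i_{m+1}-i_m)$ is lexicographically maximal. Such skeletons satisfy Property 1, because shifting $i_{t+1}$ up to $\nxt(i_t,q_t)$ would increase the gap vector. Every later repair only moves points inside their $Q_t$ and keeps length, size and indices, so this class is preserved and Property 1 is restored automatically after each move. You need this too: moving $q_t$ within $Q_t$ can push it into $P_{i_{t+1}}$ and break Property 1.

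\textbf{Your potential argument also does not carry Properties 2 and 3.} Moving an interior $q_t$ onto $\bd Q_t$, or sliding $q_t$ until $x(q_t)=x(q_{t-1})$, need not pin anything. So the unpinned count need not drop, and you are left with a left-to-right repair whose invariants you never verify. Two specific problems arise inside that repair.
\begin{itemize}
\item Event (iii) is a dead end: reaching $x(q_{t+1})$ certifies nothing about index $t$. You should always slide towards $q_{t-1}$, which is non-increasing by the triangle inequality.
\item The perpendicular case is left open. If $q_t$ is on a horizontal edge and $q_{t-1}$ on a vertical one, the pair can never be a case-(3) configuration. Since the edges are perpendicular, there is also no direction in which both points slide along their edges while staying aligned.
\end{itemize}

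The paper resolves this by comparing the edges $e_{s-1}$ (vertical) and $e_s$ (horizontal):
\begin{itemize}
\item If they cross, move $q_s$ to the crossing point, which lies in $\bd Q_{s-1}\cap\bd Q_s$.
\item If $e_{s-1}$ lies entirely left/right of $e_s$, move $q_s$ to the nearer endpoint of $e_s$.
\item If $e_{s-1}$ lies entirely below/above $e_s$, move $q_{s-1}$ to the nearer endpoint of $e_{s-1}$. This makes $q_{s-1}$ a vertex, so index $s-1$ gets (1), and it reduces to the horizontal--horizontal case.
\end{itemize}
Processing $s=2,\dots,m$ in order, indices $t\leq s-2$ are never touched.

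Finally, in Property 4 there are no mixed runs to compose. If no index satisfies (1), then $q_1$ is not a vertex, so it lies on edges of a single orientation. Conditions (2)/(3) then propagate that orientation and the shared coordinate to all $q_t$, so one translation along that orientation suffices.
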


\begin{proof}
    Among all minimum length skeletons $(i_1, q_1), \dots, (i_m,q_m)$, pick those with the minimum size $m$. Among all picked skeletons, further pick those such that $(i_2-i_1, \dots, i_{m+1}-i_m)$ is lexicographically maximum. We call these skeletons \emph{irreducible}.

    \begin{claim*}
        Every irreducible skeleton $(i_1,q_1), \dots, (i_m,q_m)$ satisfies property 1 and has $q_1 \in \bd Q_1$.
    \end{claim*}

    \begin{claimproof}
        On one hand, $\nxt(i_t, q_t) \geq i_{t+1}$ because $q_t \in Q_t = \bigcap_{i_t \leq i < i_{t+1}} P_i$ by the definition of skeletons. On the other hand, $\nxt(i_t, q_t) \leq i_{t+1}$ because otherwise we can replace $i_{t+1}$ with $\nxt(i_t,q_t)$ in the skeleton, which results in another skeleton with the same length and size but a higher lexicographic rank, a contradiction to irreducibility. This shows $\nxt(i_t,q_t) = i_{t+1}$.

        Next we show $q_1 \in \bd Q_1$. Note that $q_1 \notin P_{i_2}$ since $\nxt(1, q_1) = i_2$. In particular, $q_1 \notin Q_2$. Suppose to contradiction that $q_1 \notin \bd Q_1$, then it must be in the interior of $Q_1 \setminus Q_2$. As $q_2 \in Q_2$, we can move $q_1$ towards $q_2$ while staying inside $Q_1$. The skeleton still visits $P_1, \dots, P_k$ but the length strictly decreases, which is a contradiction.
    \end{claimproof}
    
    In the rest of the proof, we will start from an arbitrary irreducible skeleton $\sigma$ and modify it into another irreducible skeleton $\sigma'$ that is closer to satisfying properties 2 and 3. In more detail, we will \emph{move} a point $q_t$ to $q_t'$, meaning that we replace $q_t$ with $q_t'$ in $\sigma$ to obtain $\sigma'$. We will make sure that $q_t' \in Q_t$ and $\sigma'$ has the same length as $\sigma$. Once these two conditions are met, it follows that $\sigma'$ still visits $P_1, \dots, P_k$ and has the same length, size and lexicographic rank as $\sigma$, so it remains irreducible.
    
    To implement the scheme, we first iterate over $t \in [m-1]$. We have $q_t \notin P_{i_{t+1}}$ because $\nxt(i_t,q_t) = i_{t+1}$ by property 1. In particular, $q_t \notin Q_{t+1}$. Since $q_{t+1} \in Q_{t+1}$, the shortest path between $q_t, q_{t+1}$ must cross the boundary $\bd Q_{t+1}$. Therefore, we can move $q_{t+1}$ to $\bd Q_{t+1}$ while preserving the length. The resulting skeleton is still irreducible. After all iterations, we obtain an irreducible skeleton with property 2.
    
    Next we iterate over $s = 2, \dots, m$. In step $s$, the goal is to move $q_{s-1}, q_s$ such that for every $t \in [2,s]$ either (1), (2) or (3) holds. By property 2, $q_{s-1}$ and $q_s$ are on edges $e_{s-1} \subset \bd Q_{s-1}$ and $e_s \subset \bd Q_s$, respectively. Assume that $e_s$ is horizontal; the vertical case is symmetric. We distinguish four cases, illustrated in the four columns of \Cref{fig:local-repair}.
    \begin{itemize}
        \item If $e_{s-1}$ is horizontal, then we move $q_s$ along $e_s$ towards $q_{s-1}$ until it hits a vertex of $Q_s$, or until $x(q_{s-1}) = x(q_s)$. This ensures (1) or (2) for $t=s$.
        
        \item If $e_{s-1}$ is vertical and entirely below/above $e_s$, then we move $q_{s-1}$ to the top/bottom endpoint of $e_{s-1}$. Now that $q_{s-1}$ is a vertex, (1) holds for $t=s-1$; moreover, $q_{s-1}$ must lie on a horizontal edge, so we can apply the previous case to move $q_s$ and ensure (1) or (2) for $t=s$.
        
        \item If $e_{s-1}$ is vertical and entirely to the left/right of $e_s$, then we move $q_s$ to the left/right endpoint of $e_s$ and ensure (1) for $t=s$.
        
        \item If $e_{s-1}$ is vertical and intersects $e_s$, then we move $q_s$ to the intersection and ensure (1) for $t=s$.
    \end{itemize}
    In any case the goal is achieved. Furthermore, $q_{s-1}, q_s$ remain on $\bd Q_{s-1}, \bd Q_s$, respectively, and the length of the skeleton does not change. Therefore, the resulting skeleton is still irreducible and satisfies property 2. After all iterations, we obtain an irreducible skeleton that satisfies both properties 2 and 3.
    
    \begin{figure}[hbtp]
        \centering
        \includegraphics{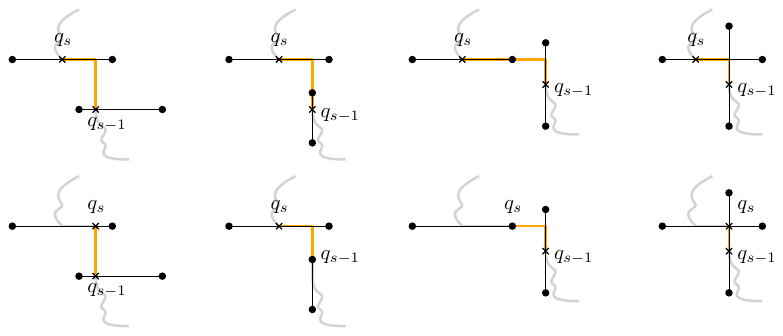}
        \caption{The four cases when $e_s$ is horizontal. The first and second rows illustrate the situation before and after the move, respectively. The orange path in the background illustrates a shortest path between $q_{s-1}$ and $q_s$.}
        \label{fig:local-repair}
    \end{figure}

    Finally, suppose that property 4 is not yet fulfilled. Then for all $t \in [2,m]$ either (2) or (3) holds. Assume by symmetry that $q_1$ is on a horizontal edge of $Q_1$, then it is not on a vertical edge as it is not a vertex. Hence $q_2$ must be on a horizontal edge of $Q_2$ and $x(q_1) = x(q_2)$. Propagating the argument, all $q_1, \dots, q_m$ are on horizontal edges and have the same $x$-coordinate. We move all points to the right by the same distance, until some point $q_t$ becomes a vertex of $Q_t$, which witnesses property 4. Clearly every point $q_s$ remains on a horizontal edge of $Q_s$ and has the same $x$-coordinate, and the length of the skeleton does not change. Therefore, irreducibility and properties 2, 3 are also preserved.
\end{proof}

Motivated by \Cref{lem:structure}, we define the \EMPH{grid} of a set of orthogonal polygons $\class{P}$ as
\[
    \grid(\class{P}) :=
    \left( \bigcup_{P \in \class{P}} x(P) \right) \times
    \left( \bigcup_{P \in \class{P}} y(P) \right) \subset \Reals^2.
\]
We define the \EMPH{portals} of $P_i$ as $\portal_i := \bd P_i \cap \grid(\{P_1, \dots, P_k\})$.

\begin{corollary}
    \label{cor:structure}
    Among all minimum length skeletons visiting $P_1, \dots, P_k$, there is a skeleton $(i_1, q_1), \dots, (i_m, q_m)$ with the four properties in \Cref{lem:structure}, as well as the property that $q_t \in \bigcup_{i_t \leq i < i_{t+1}} \portal_i$ for all $t \in [m]$.
\end{corollary}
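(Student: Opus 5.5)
Take the skeleton $(i_1,q_1),\dots,(i_m,q_m)$ from \Cref{lem:structure}. We show that every $q_t$ is a grid point. Together with property 2 this suffices. Indeed, $q_t \in \bd Q_t$ and $Q_t = \bigcap_{i_t\le i<i_{t+1}} P_i$, so $\bd Q_t \subseteq \bigcup_{i_t \le i < i_{t+1}} \bd P_i$, because a boundary point of an intersection of closed sets lies on the boundary of one of them. A point of $\bd P_i$ that is also a grid point is exactly a portal of $P_i$. So the whole task is grid membership.

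First note that $Q_t$ is an intersection of orthogonal polygons from the family. Hence every edge of $Q_t$ lies on a line $x=c$ or $y=c$ with $c$ a vertex coordinate of some $P_i$. Every vertex of $Q_t$ is then an intersection of such a horizontal and vertical line, so it is a grid point. An intersection point of $\bd Q_{t-1}\cap\bd Q_t$ is a grid point for the same reason. The only exception is when the two boundaries overlap along a collinear segment; I would handle this by general position, or simply by choosing the intersection to be an endpoint.

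Thus any $q_t$ satisfying case (1) of property 3 is a grid point, and property 4 guarantees at least one such index $t_0$. I then propagate outward from $t_0$ in both directions. Suppose $q_{t-1}$ is a grid point and $q_t$ satisfies case (2). Then $q_t$ lies on a horizontal edge of $Q_t$, so $y(q_t)$ is a grid $y$-coordinate. Also $x(q_t)=x(q_{t-1})$ is a grid $x$-coordinate, so $q_t$ is a grid point. Case (3) is symmetric. The same argument runs backwards: if $q_t$ is a grid point and case (2) holds for the index $t$, then $q_{t-1}$ lies on a horizontal edge and shares its $x$-coordinate with $q_t$, so $q_{t-1}$ is a grid point. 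Induction over $t$ in both directions from $t_0$ covers all indices, because each $t\in[2,m]$ satisfies one of (1)–(3).

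The main subtlety is the degenerate case where edges of $Q_{t-1}$ and $Q_t$ overlap collinearly. There the intersection point in case (1) need not be a grid point unless we take it to be an endpoint. I would resolve this with the general-position assumption, which says no two edges are collinear, or else tighten the choice made in the proof of \Cref{lem:structure}.
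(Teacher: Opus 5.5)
Your route is the same as the paper's: case-(1) points are grid points, links of type (2)/(3) carry grid membership along the skeleton, and $\bd Q_t\subseteq\bigcup_{i_t\le i<i_{t+1}}\bd P_i$ turns grid points into portals. The forward induction from $t_0$ is fine. The backward induction is not.

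To pass from $q_t$ to $q_{t-1}$ you need index $t$ to satisfy (2) or (3). Properties 1--4 allow the smallest index $t_0$ with (1) to satisfy \emph{only} (1). For instance, in the first case of the proof of \Cref{lem:structure}, $q_s$ can reach a vertex of $Q_s$ before $x(q_s)=x(q_{s-1})$. Then $q_1,\dots,q_{t_0-1}$ are non-vertex points. Each lies on a single edge, so all links among them are of one type, say (2). Hence they sit on horizontal edges with a common $x$-coordinate $c$, but nothing relates $c$ to $q_{t_0}$. Your sentence ``each $t\in[2,m]$ satisfies one of (1)--(3)'' gives no link across index $t_0$.

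Indeed, properties 1--4 alone do not suffice. Take $P_1=[0,2]\times[0,1]$ and $P_2=[1,3]\times[2,3]$. The skeleton $(1,(0.5,1)),(2,(1,2))$ satisfies all four properties, yet $(0.5,1)$ is not a grid point.

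What closes the gap is minimality of length. Translate $q_1,\dots,q_{t_0-1}$ together horizontally by a small $\eps$. None of them is a vertex, so each stays on its edge of $Q_t$ and the skeleton remains valid. The distances inside the chain are purely vertical and do not change. The only term that changes is $|c-x(q_{t_0})|$ inside $\dist(q_{t_0-1},q_{t_0})$. If $c\neq x(q_{t_0})$, moving towards $x(q_{t_0})$ strictly shortens the skeleton, contradicting minimality. Hence $c=x(q_{t_0})$ is a grid coordinate, and the whole initial chain consists of grid points.

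The paper handles this first piece with the phrase ``we can make the same argument from where it ends'', which implicitly relies on exactly this sliding argument. You should state it explicitly. The collinear-overlap issue you single out is handled by general position, as you say, but it is not where the real difficulty lies.
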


\begin{proof}
    Take a skeleton $(i_1, q_1), \dots, (i_m, q_m)$ given by \Cref{lem:structure}. By property 4, some of $q_1, \dots, q_m$ are vertices or intersections. We split the skeleton at these points into pieces. Consider any piece but the first one. The piece starts with a point $q_t$ that is a vertex or an intersection, and no other point is a vertex or an intersection. By property 3, points in this piece either all lie on horizontal edges and have $x$-coordinate $x(q_t) \in \bigcup_{i=1}^k x(P_i)$, or all lie on vertical edges and have $y$-coordinate $y(q_t) \in \bigcup_{i=1}^k y(P_i)$. Therefore, every point $q_s$ in the piece is a portal. Moreover, since $q_s \in \bd Q_s \subseteq \bigcup_{i_s \leq i < i_{s+1}} \bd P_i$, we have $q_s \in \bigcup_{i_s \leq i < i_{s+1}} \portal_i$. For the first piece, we can make the same argument from where it ends.
\end{proof}

When the polygon sequence is step-disjoint, we get a significantly simpler structure:
\begin{corollary}
    \label{cor:structure-disjoint}
    Among all shortest tours that visit a sequence of step-disjoint polygons $P_1, \dots, P_k$, there is a tour that visits a sequence of points $p_1 \in \portal_1, \dots, p_k \in \portal_k$ such that for each $i \in [2,k]$,
    \begin{enumerate}[(1)]
        \item $p_i$ is a vertex of $P_i$; or
        \item $p_{i-1}, p_i$ are on horizontal edges of $P_{i-1}, P_i$ respectively, and $x(p_{i-1}) = x(p_i)$; or
        \item $p_{i-1}, p_i$ are on vertical edges of $P_{i-1}, P_i$ respectively, and $y(p_{i-1}) = y(p_i)$.
    \end{enumerate}
\end{corollary}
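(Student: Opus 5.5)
We plan to derive this from \Cref{cor:structure} by exploiting step-disjointness. Take a minimum length skeleton $(i_1,q_1),\dots,(i_m,q_m)$ with the four properties of \Cref{lem:structure} and the portal property of \Cref{cor:structure}. The first step is to show that step-disjointness forces $m=k$ and $i_t=t$ for all $t$. Since $P_t\cap P_{t+1}=\emptyset$, no point lies in two consecutive polygons. Hence $\nxt(i_t,q_t)=i_t+1$, and by property 1 we get $i_{t+1}=i_t+1$. So $Q_t=P_t$, and we set $p_t:=q_t$. The tour connecting $p_1,\dots,p_k$ then has minimum length by \Cref{lem:skeleton}. Moreover, $p_t\in\portal_t$, because $\bigcup_{i_t\le i<i_{t+1}}\portal_i=\portal_t$.

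The second step is to translate property 3 into cases (1)--(3) of the statement. Cases (2) and (3) of the lemma carry over verbatim with $Q_t=P_t$. For case (1), either $p_t$ is a vertex of $P_t$, which is what we want, or $p_t\in\bd P_{t-1}\cap\bd P_t$. The latter is impossible, since $P_{t-1}$ and $P_t$ are disjoint closed sets and their boundaries cannot intersect.

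The main obstacle is making sure nothing subtle is lost. Polygons are closed sets, so disjointness excludes even shared boundary points, and this is exactly what removes the intersection alternative in (1). It also justifies $\nxt(t,p_t)=t+1$: the value is at least $t+1$ because $p_t\in P_t$, and at most $t+1$ because $p_t\notin P_{t+1}$. If the paper's convention treats polygons as open regions, we would instead argue directly that an intersection point of the two boundaries lies in both polygons, contradicting disjointness. With either convention the corollary follows immediately.
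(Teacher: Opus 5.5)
Your proposal is correct and follows the paper's own proof: you specialize \Cref{cor:structure}, use step-disjointness to get $\nxt(i_t,q_t)=i_t+1$ and hence $i_t=t$, $Q_t=P_t$ and $q_t\in\portal_t$, and you additionally make explicit that the alternative $q_t\in\bd P_{t-1}\cap\bd P_t$ in property 3(1) is vacuous for disjoint closed polygons, a step the paper leaves implicit. Your closing hedge about open polygons is unnecessary, and its reasoning would fail because boundary points do not belong to open sets. It does no harm, since the paper's polygons are closed.
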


\begin{proof}
    Take a skeleton $(i_1, p_1), \dots, (i_m, p_m)$ given by \Cref{cor:structure}. Since the polygon sequence is step-disjoint, we have $\nxt(i_t,p_t) = i_t+1$ for all $t$ by definition, so inductively $i_t = t$. Hence the skeleton has form $(1,p_1), (2,p_2), \dots, (k,p_k)$, and for all $i \in [k]$ we have $Q_i = P_i$ and $p_i \in \portal_i$.
\end{proof}

\section{Touring overlapping orthogonal polygons}
\label{sec:general}
In this section, we study \touringOrtho for the general case.
Dror \etal~\cite{DELM03} stated that the problem can be solved in $O(n^2)$ time without elaborating on the proof; a straightforward interpretation of their idea actually needs $O(n^2 k)$ time. We describe here a simple algorithm in time $O(n^2\log^2 n)$, which essentially closes the gap and also serves as a baseline for our other algorithms.

\general*

\begin{proof}
    As a preparation, we construct a binary tree, where each leaf node represents an element (i.e., a singleton interval) of $[k]$, and each non-leaf node represents the union of the intervals of its two children; this is a discrete variant of a segment tree~\cite{CGAA}. We write $I(u)$ for the interval represented by node $u$ and define $Q(u) := \bigcap_{i \in I(u)} P_i$. Note that $Q(u) = Q(v) \cap Q(w)$ where $v,w$ are the children of $u$, so we can compute $Q(u)$ for all $u \in T$ in a bottom-up fashion. Then we build a point location data structure for each~$Q(u)$.
    
    It is well known that the intersection of two orthogonal polygons $P$ and $Q$ can be computed in time $O\left((n(P) + n(Q))^2 \right)$, and a point-location data structure on a polygon $P$ can be built in time $O(n(P) \log n(P))$. So the total time spent per level of $T$ is $O\big( \big(\sum_{i=1}^k n(P_i) \big)^2 \log n\big) = O(n^2 \log n)$. As there are $\lceil \log k \rceil$ levels, the construction time is $O(n^2 \log^2 n)$.
    
    \begin{figure}[htbp]
        \centering
        \includegraphics{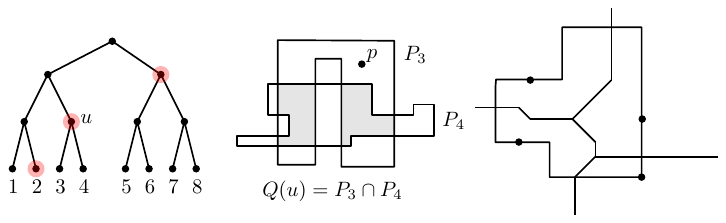}
        \caption{Left: an illustration of a segment tree over $[8]$, where the red nodes form the canonical decomposition of $[2,8]$. Middle: the gray region depicts $Q(u)$. Right: an additively weighted Voronoi diagram on four portals under the $L_1$ metric.}
        \label{fig:general}
    \end{figure}
    
    Next we describe a subroutine that, given an index $i \in [k]$ and a point $p$, computes the value $\nxt(i,p)$ in time $O(\log^2 n)$. To this end, first compute the canonical decomposition of the interval $[i,k]$; that is, find $l \leq \lceil \log k \rceil$ nodes $u_1, \dots, u_l \in T$ such that $[i,k] = I(u_1) \cup \cdots \cup I(u_l)$ is a disjoint union. Then compute the minimal $t \in [l]$ such that $p \notin Q(u_t)$. We descend from node $u_t$ towards the leaves; at each step we go to the left child $v$ if $p \notin Q(v)$, and the right child otherwise. In the end we arrive at a leaf that corresponds to exactly $\min \set{j \geq i: p \notin P_i \cap \cdots \cap P_j } = \nxt(i, p)$.
    
    Computing the canonical decomposition takes $O(\log k)$ time. Computing $t$ needs $l \leq \log k$ queries into the point-location data structures. The descent from $u_t$ to the leaf needs $O(\log k)$ queries as well. Since each query is answered in $O(\log n)$ time, the subroutine takes time $O(\log^2 n)$ as claimed.
    
    For each $i \in [k]$ and $p \in \portal_i$, we call the subroutine to compute $\nxt(i,p)$ and store the values in a look-up table. Since $\sum_{i=1}^k |\portal_i| \leq \sum_{i=1}^k 2n(P_i) \cdot n = 2n^2$, the table can be computed in time $O(n^2 \log^2 n)$.

    With these preparations, the main algorithm is a dynamic program. As the base case, we have $f_1(p)=0$ for all $p\in \portal_1$. We claim the following recursive formula:
    \begin{claim*}
        For all $2 \leq j \leq k$ and $q \in \portal_j$, we have
        \[ f_j(q) = \begin{cases}
            f_{j-1}(q) & \text{if } q \in P_{j-1}, \\
            \min \set{ f_i(p) + \dist(p,q) \::\: p \in \portal_i,\; i < j,\; \nxt(i,p) = j } & \text{otherwise}.
        \end{cases} \]
    \end{claim*}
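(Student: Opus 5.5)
We prove both inequalities for each case, using the definition of $f_j$ (a shortest tour visiting $P_1,\dots,P_{j-1},q$) together with \Cref{lem:skeleton}, and \Cref{cor:structure} applied to the truncated sequence $P_1,\dots,P_{j-1},\{q\}$.

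\emph{Case $q\in P_{j-1}$.} A tour visiting $P_1,\dots,P_{j-2},q$ also visits $P_1,\dots,P_{j-1},q$, since $q$ itself can serve as the point of $P_{j-1}$; the converse is trivial. Hence $f_j(q)=f_{j-1}(q)$. Note that $f_{j-1}$ is evaluated at $q\in\portal_j$, which is not necessarily a portal of $P_{j-1}$. This is fine, because the claim speaks of the function $f_{j-1}$ on all points, and the algorithm can follow the chain $q\in P_{j-1}, P_{j-2},\dots$ down to the index $i$ with $\nxt(i,q)=j$. So the case also fits the second formula structurally.

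\emph{Case $q\notin P_{j-1}$, upper bound.} Take $p\in\portal_i$ with $i<j$ and $\nxt(i,p)=j$. Then $p\in P_i\cap\dots\cap P_{j-1}$. Take an optimal tour for $f_i(p)$, which visits $P_1,\dots,P_{i-1}$ and ends at $p$, and append the segment $pq$. The resulting tour visits $P_1,\dots,P_{j-1}$ (the last ones all at $p$) and then $q$. Its length is $f_i(p)+\dist(p,q)$, so $f_j(q)$ is at most the minimum.

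\emph{Case $q\notin P_{j-1}$, lower bound; this is the main step.} Apply \Cref{cor:structure} to the sequence $P_1,\dots,P_{j-1},\{q\}$. This yields a minimum skeleton $(i_1,q_1),\dots,(i_m,q_m)$ whose last element must be $(j,q)$. Indeed, by property~1 the index sets are consecutive and $q\notin P_{j-1}$, so $q$ alone forms the last block, and $\{q\}$ is the only set with index $j$. Let $(i,p):=(i_{m-1},q_{m-1})$. Property~1 gives $\nxt(i,p)=j$, where $\nxt$ in the truncated instance coincides with the original one for indices below $j$ because $p\in P_i\cap\dots\cap P_{j-1}$.

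The delicate point is that \Cref{cor:structure} guarantees only $p\in\bigcup_{i\le i'<j}\portal_{i'}$ with respect to the grid of the truncated instance (which contains the grid lines through $q$), not $p\in\portal_i$. I would handle this as follows. First, the extra grid lines through $q$ are already original grid lines, since $q\in\portal_j$ lies on the original grid. So $p\in\portal_{i'}$ for some $i'\in[i,j-1]$. Next, the skeleton minus its last element shows that $f_{i}(p)$ is at most the skeleton length up to $p$: it visits $P_1,\dots,P_{i-1}$ and then $p$. Moreover $f_{i'}(p)\le f_i(p)$, since $p\in P_i\cap\dots\cap P_{i'-1}$. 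Finally $\nxt(i',p)=j$ still holds, because $p\in P_{i'},\dots,P_{j-1}$ and $p\notin P_j$.

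Therefore the pair $(i',p)$ appears in the minimum, and
\[
f_{i'}(p)+\dist(p,q)\le f_j(q).
\]
This completes the lower bound and hence the claim.
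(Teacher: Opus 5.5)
\textbf{There is a genuine gap in your lower bound.} It occurs where you pass from the truncated instance $P_1,\dots,P_{j-1},\{q\}$ back to the original one. Property~1 of \Cref{cor:structure}, applied to the truncated instance, gives $\nxt'(i,p)=j$, where the $j$-th set is $\{q\}$. This only tells you that $p\in P_i\cap\dots\cap P_{j-1}$ and $p\neq q$; it says nothing about whether $p\in P_j$. Yet the original condition $\nxt(i,p)=j$, and your closing assertion ``$p\notin P_j$'', need exactly $p\notin P_j$. Your justification ``because $p\in P_i\cap\dots\cap P_{j-1}$'' concerns the wrong indices.

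No better choice of skeleton can fix this, because the statement is false as written. Take $k=2$, $P_1=[0,1]^2$, $P_2=[\tfrac12,3]\times[-\tfrac1{10},\tfrac{11}{10}]$ (no collinear edges), and $q=(3,0)\in\portal_2\setminus P_1$. Then $f_2(q)=2$, attained only from $(1,0)$, which lies in $P_1\cap P_2$. The only portals $p$ of $P_1$ with $\nxt(1,p)=2$ are $(0,0)$ and $(0,1)$, so the right-hand side equals $3$.

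The paper's own proof makes the same identification: it reads off $\nxt(i_{m-1},p)=j$ from the skeleton of the truncated instance. So this is a defect in the statement, not an idea you failed to find. What your argument does prove is the identity with $\nxt(i,p)=j$ replaced by $\nxt(i,p)\geq j$, that is, $p\in P_i\cap\dots\cap P_{j-1}$. Your upper bound uses only this condition, and your lower bound produces such a pair $(i',p)$. Your handling of the truncated grid and the step $f_{i'}(p)\le f_i(p)$ are both sound.

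That corrected version, however, breaks the counting $\sum_j\card{S_j}=\sum_i\card{\portal_i}$ behind the running time. The discrepancy also matters for the algorithm as stated. For $P_1=[0,1]^2$, $P_2=[\tfrac12,\tfrac52]\times[\tfrac25,\tfrac35]$, $P_3=[2,3]\times[-\tfrac1{10},\tfrac{11}{10}]$, the optimum is $1$, via $(1,\tfrac25)\to(2,\tfrac25)$. But with $S_j$ as defined, the dynamic program returns $\tfrac75$.

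Separately, your side remark in the first case is wrong: since $q\in P_j$, no index $i$ has $\nxt(i,q)=j$. The first case itself is proved correctly.
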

    \begin{claimproof}
        If $q \in P_{j-1}$ then $f_j(q) = f_{j-1}(q)$ by definition. From now on we assume $q \notin P_{j-1}$. The $\leq$ direction is easy: $f_i(p)$ corresponds to a tour visiting $P_1, \dots, P_{i-1}, p \in P_i$, and $\dist(p,q)$ corresponds to a tour visiting $p, P_{i+1}, \dots, P_{j-1}, q$ since $\nxt(i,p) = j$. Their sum thus corresponds to a tour visiting $P_1, \dots, P_{j-1}, q$.
        
        For the $\geq$ direction, let $(i_1,q_1), \dots, (i_m,q_m)$ be a minimum length skeleton visiting $P_1, \dots, P_{j-1}, q$ given by \Cref{cor:structure}. Note that $q_m = q \notin P_{j-1}$ by definition, thus $m \geq 2$. Moreover, we have $i_m = j$ because $q \in \bigcap_{i_m \leq i \leq j} P_i$ but $q \notin P_{j-1}$.
        
        By the guarantee of \Cref{cor:structure}, $p := q_{m-1} \in \bigcup_{i_{m-1} \leq i < j} \portal_i$ and $\nxt(p,i_{m-1}) = i_m = j$. Hence there exists $i \in [i_{m-1}, j)$ such that $p \in \portal_i$. For this $i$ we clearly also have $\nxt(p, i) = j$. Finally, \Cref{lem:skeleton} implies that $f_j(q)$ is exactly the length of this skeleton, which is equal to $f_i(p) + \dist(p,q)$.
    \end{claimproof}
    
    The dynamic program iterates over $j=2,\dots,k$. In each iteration, we construct the additively weighted Voronoi diagram $D$ (under $L_1$ metric) on the point set
    \[ S_j := \set{ p \in \portal_i \::\: i < j,\; \nxt(i,p) = j }, \]
    where each point $p$ has weight $f_i(p)$. For each $q \in \portal_j$, if $q \in P_{j-1}$ then we assign $f_j(q) := f_{j-1}(q)$, else we assign $f_j(q) := D.\Call{dist}{q}$. The correctness follows from the claim above. After all iterations finish, we return $\min \set{ f_i(p) : p \in \portal_i, i \in [k], \nxt(p,i) = k+1}$.
    
    In each iteration $j$, constructing the Voronoi diagram takes $O(|S_j| \log n)$ time~\cite{klein1989concrete}, and the queries take $O(|\portal_j| \log n)$ time. Over all iterations, the time is $O\left( \sum_{j=1}^k (|S_j| + |\portal_j|) \log n \right)$. Note that
    \[ \sum_{j=1}^k |S_j|
    = \sum_{j=1}^k \sum_{i<j} \sum_{p \in \portal_i} \mathbf{1}_{\nxt(i,p)=j}
    = \sum_{i=1}^k \sum_{p \in \portal_i} 1
    = \sum_{i=1}^k |\portal_i|. \]
    Hence the total time is $O\left( \sum_{j=1}^k |\portal_j| \log n \right) = O(n^2 \log n)$.
\end{proof}

\section{Touring step-disjoint orthogonal polygons}
\label{sec:step-disjoint}
We move on to study step-disjoint orthogonal polygons $P_1, \dots, P_k$ and prove \Cref{thm:disjoint}, which breaks the quadratic-time barrier. As a rough outline, the algorithm decomposes the sequence of polygons into batches, where each batch either contains a small number of vertices (sparse batch), or has few polygons (dense batch). The algorithm processes the batches in order. In each round it extends the prefix shortest tours computed so far to visit one more batch. Different strategies are used for sparse and dense batches, and they are driven by different structural insights.

We will make use of two results from the literature. One is a generalized distance oracle for dynamic planar graphs \cite[Theorem 18]{CK22}:
\begin{theorem}
    \label{thm:dist-oracle}
    Let $G$ be a weighted planar digraph on $n$ nodes, with a set $U \subseteq V(G)$ of facilities. There exists a data structure maintaining $G$ under edge insertions, edge deletions and changes of $U$ in $\Otilde(n^{3/4} |U|^{1/4} + n^{4/5})$ update time. Given any $v \in V(G)$, it can compute $\min_{u \in U} \dist(u,v)$ in $\Otilde(1)$ time. The initialization time is $\Otilde(n)$.
\end{theorem}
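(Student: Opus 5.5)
This result is not proved in this paper. It is imported verbatim as \cite[Theorem 18]{CK22}, and my plan is to cite it rather than re-derive it. For orientation, here is how I would reconstruct it; the steps below are my own guess at the approach and are not checked against \cite{CK22}. The guiding idea is to add a virtual super-source $s$ joined to every facility by a zero-weight arc. Then $\min_{u\in U}\dist(u,v)=\dist(s,v)$, so the task becomes single-source distances from $s$ under updates. The one caveat is that $s$ breaks planarity, so it must be handled implicitly rather than inserted into $G$.

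\textbf{Preprocessing.} I would maintain an $r$-division of $G$ into $O(n/r)$ pieces. Each piece has $O(r)$ nodes and $O(\sqrt r)$ boundary nodes with $O(1)$ holes. For each piece I would store its dense distance graph (DDG) and an MSSP structure.
\begin{itemize}
\item An edge update touches $O(1)$ pieces, and rebuilding their DDGs and MSSP structures costs $\Otilde(r)$.
\item After every update, I would recompute $\dist(s,b)$ for all $O(n/\sqrt r)$ boundary nodes $b$ by FR-Dijkstra over the union of DDGs, costing $\Otilde(n/\sqrt r)$.
\item The facilities inside a piece $P$ enter this computation through the distances from $U\cap P$ to $\bd P$ inside $P$. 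This per-piece contribution is where the dependence on $|U|$ arises.
\end{itemize}

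\textbf{Queries in $\Otilde(1)$.} This is the step I expect to be the main obstacle. For $v$ in piece $P$, the answer is
\[
  \min\Bigl(\min_{b\in\bd P}\bigl(\dist(s,b)+\dist_P(b,v)\bigr),\ \min_{u\in U\cap P}\dist_P(u,v)\Bigr).
\]
Evaluating the first term naively costs $\Theta(\sqrt r)$ per query. To reach $\Otilde(1)$, I would build, after each update, an additively weighted Voronoi diagram inside every piece with sites on $\bd P$ and weights $\dist(s,b)$, in the style of Gawrychowski et al. A query is then a point location costing $\Otilde(1)$ via MSSP. The second term needs the same kind of structure over local facilities.

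The cost of rebuilding these diagrams for all pieces after each update is too large. So I would add a second, coarser level: only pieces whose boundary weights actually changed get recomputed, and far-away pieces are bucketed at a coarser granularity, trading $r$ against $n/r$ and $|U|$.

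\textbf{Balancing.} Finally I would balance the parameters: the local rebuild costs $\Otilde(r)$, the global recomputation over boundary nodes costs roughly $\Otilde(n/\sqrt r)$ times a factor depending on $|U|$, and the Voronoi rebuilds add their own cost. Optimizing $r$ should yield $\Otilde(n^{3/4}|U|^{1/4}+n^{4/5})$. A change of $U$ is treated like an edge update to $s$, affecting only the pieces containing the inserted or deleted facility. Since exact constants and exponents come from \cite{CK22}, in the paper itself I would simply invoke the citation.
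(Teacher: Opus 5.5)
Your plan matches the paper: the statement is imported from \cite[Theorem 18]{CK22} and used as a black box without proof, so citing it is all that is required. Your reconstruction sketch is not needed. Note, though, that the paper applies the theorem with the entire facility set $U$ replaced in a single update at every sweep step. So a scheme that handles facility changes one at a time, as localized edits to a super-source, would not by itself give the per-step cost the paper relies on.
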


The other is a rectangle partition of the plane with low stabbing number \cite[Lemma 3.1]{dBvK94}; we restate a slight variant and include a proof in the appendix.
\begin{lemma}
    \label{lem:rect-partition}
    In time $O(n)$, we can partition $\mathbb{R}^2$ into $n + 1$ rectangles such that
    \begin{enumerate}[(i)]
        \setlength{\itemsep}{0pt}
        \item no rectangle contains a vertex of $P_1, \dots, P_k$ in its interior;
        \item every vertical/horizontal line intersects at most $O(\sqrt{n})$ rectangles.
    \end{enumerate}
\end{lemma}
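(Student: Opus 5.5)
We would follow the classical construction of de Berg and van Kreveld: a recursive, kd-tree-like subdivision of the vertex set $V$ of $P_1,\dots,P_k$, with $|V| = n$ up to a constant factor. Since $n(P)=|x(P)|=|y(P)|$ and general position makes all coordinates distinct, the projections of $V$ to each axis give $n$ distinct values.

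\textbf{Splitting structure.} Sort $V$ by $x$ and by $y$; this is linear after a sort, or linear if we may assume presorted coordinates, which is the reading behind the $O(n)$ bound.

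First cut the plane by $\sqrt{n}$ vertical lines into vertical slabs. Place the lines at $x$-coordinates strictly between consecutive vertex coordinates, so that each slab contains about $\sqrt{n}$ vertices.

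Then, inside each slab, draw horizontal segments spanning the slab so that every vertex lies on one of these segments. Concretely, draw one horizontal segment through each vertex, restricted to its slab. This turns each slab with $m$ vertices into $m+1$ rectangles, and none of them has a vertex in its interior: every vertex lies on a slab-spanning horizontal segment, hence on a rectangle boundary. Summing over slabs gives $n+\sqrt{n}$ rectangles. To hit exactly $n+1$, we would instead place the vertical slab boundaries through vertices themselves. For the horizontal cuts, the vertex serving as the slab boundary need not get its own cut, as in the original lemma. The count $n+1$ follows from an Euler-type tally: each vertex is charged for exactly one cut segment, and each cut increases the number of faces by one.

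\textbf{Stabbing bound.}
\begin{itemize}
\item A vertical line lies in one slab, or on a boundary between two, so it meets only the rectangles of at most two slabs. That is $O(\sqrt{n})$ rectangles.
\item A horizontal line meets at most one rectangle per slab, plus rectangles whose boundary it runs along. This gives $O(\sqrt{n})$ over the $\sqrt{n}$ slabs.
\end{itemize}

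\textbf{Main obstacle.} The delicate part is the precise accounting: getting exactly $n+1$ rectangles, and making sure that rectangles touching a line only along a boundary are either not counted or still bounded by $O(\sqrt{n})$. Both come from choosing whether cuts pass through vertices or strictly between them, together with general position. The linear running time is routine once the vertices are sorted, since each slab is processed by a single scan.
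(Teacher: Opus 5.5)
Your construction is correct, and it takes a genuinely different route from the paper. The paper builds a kd-tree. Each box is split at the median $x$-coordinate and then at the median $y$-coordinate of the vertices inside it, recursively, until no box has a vertex in its interior. Property (ii) then follows from the recurrence $g(n)\le 2g(n/4)$: after one $x$-split and one $y$-split, an axis-parallel line meets only two of the four grandchildren, and each holds a quarter of the vertices.

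You use a flat two-level scheme instead: about $\sqrt n$ vertical slabs with $O(\sqrt n)$ vertex coordinates each, then a slab-wide horizontal cut at every vertex height. So (ii) needs no recursion. A vertical line sees the $O(\sqrt n)$ cells of one or two slabs, and a horizontal line sees at most two cells per slab.

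Your construction is also a single bucketing pass once the vertices are sorted by $x$ and by $y$. It therefore runs in $O(n)$ time on presorted input, and $O(n\log n)$ otherwise. The paper's proof never addresses the time bound, and its median recursion naturally costs $O(n\log n)$.

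Do not worry about hitting $n+1$ exactly. Each vertex shares its $x$- and its $y$-coordinate with a neighbour, so there are $2n$ vertices, and the figure $n+1$ is only right up to a constant factor in both arguments. Only an $O(n)$ count is ever used.

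Two remarks on what the partitions buy later in the paper.
\begin{itemize}
\item The paper's cuts all lie on vertex coordinates, and the proof of \Cref{lem:structure-dense} uses this: a grid-snapped tour then crosses rectangle boundaries only at grid points, i.e.\ at hubs. Your ``strictly between'' slabs lose this property. The lemma as stated does not need it, but the later argument does, so keep the variant with slab boundaries through vertices.
\item That variant also gives every horizontal grid line only $O(\sqrt n)$ hubs: about $\sqrt n$ crossings with slab boundaries, plus the grid points on at most two slab-wide cuts at that height. This is the per-line bound the dense-batch sweeps assert (``each $y$ hits $O(\sqrt{n})$ hubs''). The kd-tree does not give it, because its first horizontal cuts span half the bounding box and contain $\Theta(n)$ grid points.
\end{itemize}
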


\subsection{Decomposition into batches}
Let $0 < \alpha < \beta < 1$ be constants to be determined later. (For concreteness, think of $\alpha=0.85$ and $\beta=0.98$.) We will carefully choose some polygons $P_{i_1}, P_{i_2}, \dots, P_{i_r}$ as \EMPH{delimiters}, where $1 \leq i_1 < i_2 <\dots < i_r \leq k$. They split the input sequence into \EMPH{batches} $(P_1, \dots, P_{i_1}), (P_{i_1}, \dots, P_{i_2}), \dots, (P_{i_r}, \dots, P_k)$. We say that a batch $(P_i)_{a \leq i \leq b}$ is \EMPH{sparse} if $\sum_{i=a}^b n(P_i) \leq 3n^{\beta}$; and it is \EMPH{dense} if $b-a \leq n^{1-\alpha} + 1$. Note that a batch can be both sparse and dense.

\begin{lemma}
    \label{lem:batch}
    In time $O(k)$ we can compute a set of at most $3n^{1-\beta}$ delimiters, each of complexity at most $n^{\alpha}$, such that every batch is either sparse or dense.
\end{lemma}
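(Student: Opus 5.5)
The plan is a single left-to-right greedy scan that alternates sparse and dense batches, and only ever places delimiters on \emph{light} polygons. Call $P_i$ \emph{light} if $n(P_i) \leq n^{\alpha}$ and \emph{heavy} otherwise. Since $\sum_i n(P_i) = n$, there are fewer than $n^{1-\alpha}$ heavy polygons. In particular, any $n^{1-\alpha}$ consecutive indices contain a light one, so every maximal run of consecutive heavy polygons has length less than $n^{1-\alpha}$.

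First I precompute prefix sums of $n(P_i)$ in $O(k)$ time, so that batch weights can be read off in $O(1)$. The scan keeps a current start index $a$. Initially $a = 1$, which needs no delimiter even if $P_1$ is heavy; afterwards $a$ is always a light delimiter. One round proceeds as follows.
\begin{itemize}
\item If $\sum_{i=a}^{k} n(P_i) \leq 2n^{\beta}$, I close the final batch $(P_a,\dots,P_k)$, which is sparse, and stop.
\item Otherwise let $b$ be the smallest index with $\sum_{i=a}^{b} n(P_i) > 2n^{\beta}$. Let $c$ be the largest light index in $(a,b)$, if one exists. I declare $P_c$ a delimiter and make $(P_a,\dots,P_c)$ a batch. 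Its weight is at most $2n^{\beta}$, so it is sparse. (The weight bound holds because $c < b$. If $P_a$ itself is heavy, i.e.\ $a=1$, its weight is at most $n$, but then I simply treat the first batch separately or note that $b=a$ forces the next case.)
\item Next let $d$ be the smallest light index that is at least $b$, or $d = k$ if none exists. All indices strictly between $c$ and $d$ are heavy, and they form a single heavy run. Hence $d - c \leq n^{1-\alpha} + 1$, and $(P_c,\dots,P_d)$ is a dense batch. I make $P_d$ a delimiter and continue with $a := d$.
\item If no light index exists in $(a,b)$, the indices between $a$ and $d$ are again a heavy run, so $(P_a,\dots,P_d)$ is directly dense.
\end{itemize}
Every pointer only moves right, so the scan runs in $O(k)$ time. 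Every delimiter is light, so it has complexity at most $n^{\alpha}$.

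The step I expect to need care is bounding the number of delimiters by $3n^{1-\beta}$. Each non-final round produces at most two delimiters. The round's index range $[a,d]$ contains $[a,b]$, so its weight exceeds $2n^{\beta}$. Consecutive rounds overlap only at one shared delimiter, whose weight is at most $n^{\alpha}$. Let $r$ be the number of non-final rounds. Summing the round weights gives
\[
2n^{\beta} r < n + r\, n^{\alpha},
\]
so $r < n/(2n^{\beta} - n^{\alpha}) \leq n^{1-\beta}\,(1+o(1))$, using $\alpha < \beta$. Therefore the number of delimiters is at most $2r \leq 3n^{1-\beta}$ for $n$ large enough; small $n$ is handled trivially.

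The remaining details are boundary cases. One is a heavy $P_1$: its weight is then at most $n$, not at most $n^{\beta}$, and I would handle it by letting the first round start its weight count at $a+1$. The other is the case where no light index exists after $b$. Both need only routine checks that every batch still satisfies the sparse or dense condition.
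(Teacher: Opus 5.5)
Your proof is correct, but it is organized differently from the paper's. The paper first classifies the maximal runs of heavy (``red'') polygons as light or dark, according to whether their total complexity exceeds $n^\beta$. Each dark run together with its two flanking light polygons becomes a dense batch. Each stretch between consecutive dark runs is cut by a running-sum scan at the first light polygon after the sum exceeds $n^\beta$. This gives sparse batches of weight at most $3n^\beta$, and the delimiter count is $2n^{1-\beta}$ (flanking) plus $n^{1-\beta}$ (scan).

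You skip the light/dark distinction altogether. A single greedy pass with threshold $2n^\beta$ backs off to the last light index $c<b$, and the heavy run straddling $b$ becomes a dense batch irrespective of its total weight. The count then comes from charging each round more than $2n^\beta$ weight, with overlaps of at most $n^\alpha$. This treats the start and end of the sequence uniformly; the paper only spells out the stretch between two dark runs. It also yields sparse weight at most $2n^\beta$ and fewer delimiters. The price is that you may create many very short dense batches, but this is harmless for \Cref{thm:disjoint}, since each batch is charged $\widetilde{O}(n^\gamma)$ regardless of its type.

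A few small corrections:
\begin{itemize}
\item No special handling of a heavy $P_1$ is needed. The bound $\sum_{i=a}^{c} n(P_i) \le \sum_{i=a}^{b-1} n(P_i) \le 2n^\beta$ already includes $P_a$, and if $n(P_1) > 2n^\beta$ then $b=a$ and your dense branch applies. The patch you propose at the end (starting the count at $a+1$) would in fact break sparseness, as the first batch could then weigh $n(P_1)+2n^\beta$.
\item The count needs no ``$n$ large enough'' caveat. Since $2n^\beta - n^\alpha \ge n^\beta$, you get $r < n^{1-\beta}$, hence fewer than $2n^{1-\beta}$ delimiters for every $n$.
\item Do not declare $P_k$ a delimiter when $d=k$ is heavy.
\item Finding $c$ looks leftward from $b$, so ``every pointer only moves right'' is not literally true. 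It is still $O(k)$ overall, because the scanned range lies inside the current round.
\end{itemize}
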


\begin{proof}
    We color each polygon $P_i$ blue if $n(P_i) \leq n^{\alpha}$, and red otherwise. This breaks the input sequence into blue and red blocks. We further differentiate a red block $(P_i)_{a \leq i \leq b}$ by two types: it is \emph{light red} if $\sum_{i=a}^b n(P_i) \leq n^{\beta}$, and \emph{dark red} otherwise. Clearly there are at most $n^{1-\beta}$ dark red blocks.
    
    Now we fix two consecutive dark red blocks, and consider the sequence of polygons $(P_i)_{a \leq i \leq b}$ in between. It must start with a blue block, alternate between light red and blue blocks, and end with a blue block. We mark $P_a$ and $P_b$ as delimiters. We then iterate over $i = a, \dots, b$ and keep a running sum $N$. At iteration $i$ we let $N := N + n(P_i)$. If $N > n^{\beta}$ \emph{and} $P_i$ is blue, then we mark $P_i$ as a delimiter and reset $N$ to zero.

    We run the procedure between every pair of consecutive dark red blocks, and return all the marked delimiters. The running time is clearly $O(k)$. Next we argue that the required properties hold.

    The number of delimiters is bounded as follows. Between every pair of consecutive dark red blocks, we mark (i) a delimiter in the beginning; (ii) a delimiter in the end; and (iii) one or more delimiters in the middle. Over all consecutive pairs, the total contribution of (i)(ii) is at most twice the number of dark red blocks, that is $2n^{1-\beta}$. The total contribution of (iii) is at most $n^{1-\beta}$ because each time we mark a delimiter this way, the running sum must have exceeded $n^\beta$.

    By construction every delimiter is blue (i.e., has complexity at most $n^\alpha$). Now consider an arbitrary batch $(P_i)_{a \leq i \leq b}$. There are only two cases:
    \begin{itemize}
        \item It consists of a dark red block and two adjacent blue polygons. We have $b - a \leq n^{1-\alpha} + 1$ since each red polygon has complexity at least $n^\alpha$ and the total complexity is at most $n$. Hence the batch is dense.
        
        \item It is formed when we scan the sequence between consecutive dark red blocks. The scanning procedure ends the batch as soon as the running sum $N$ exceeds $n^{\beta}$ and we see a blue polygon. Since each light red block and each blue polygon can contribute at most $n^{\beta}$ to the sum, we have $\sum_{i=a}^b n(P_i) \leq 3n^{\beta}$. Hence the batch is sparse.\qedhere
    \end{itemize}
\end{proof}

From now on we assume that the decomposition in \Cref{lem:batch} is computed. Our algorithm will process the batches in order and pass necessary information from one batch to the next. When we finish processing a batch $(P_i)_{a \leq i \leq b}$, we would have computed $f_b(p)$ for all $p \in \portal_b$. We employ different strategies to process sparse and dense batches. Let us explain the high-level ideas.

Inside a sparse batch, we define \emph{terminals} as the local analogue of portals. More precisely, these are the boundary points on the grid induced by the batch. Clearly there are at most $(3n^{\beta})^2 = 9n^{2\beta}$ terminals. We show that there exists a global shortest tour that traverses this batch in one of three ways: (i) it visits a terminal on each polygon in the batch; (ii) it travels horizontally throughout the batch; or (iii) it travels vertically throughout the batch. To handle (i), we use a simple dynamic program over terminals, which costs only polylogarithmic time per terminal. To handle (ii) or (iii), we build a dynamic planar graph and use a line sweep procedure to extract shortest tours to all portals on the last polygon.

Inside a dense batch, the number of terminals can be quadratic, so (i) can no longer be handled efficiently even though the structural insight remains valid. To overcome this barrier, we partition the plane into rectangles with a low stabbing number, which effectively summarizes the geometry. This way, we obtain only $O(n^{3/2})$ (instead of $O(n^2)$) points worth considering in the dynamic program, which we call \emph{hubs}. Since the number of polygons is small in a dense batch, this already implies a subquadratic-time dynamic program. However, a hub might not lie on the boundary of any polygon, so we have to deal with the intricate problem of converting $f$-values on portals to $f$-values on hubs, and vice versa. We apply two line sweeps based on dynamic planar graphs, one before and one after the dynamic program, to convert between the two worlds.

\subsection{Processing sparse batches}
\begin{theorem}
    \label{thm:sparse-batch}
    Let $(P_i)_{a \leq i \leq b}$ be a sparse batch. Given $f_a(p)$ for all $p \in \portal_a$, we can compute $f_b(p)$ for all $p \in \portal_b$ in time $\Otilde(n^{2\beta} + n^{1+\alpha} + n^{(7+\alpha)/4} + n^{9/5})$.
\end{theorem}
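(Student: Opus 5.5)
We prove the theorem by a structural lemma for the batch followed by three computations, one per case of the lemma. Let $G_B := \grid(\{P_a,\dots,P_b\})$ be the local grid of the batch. Let the \emph{terminals} of $P_i$ be $\terminal_i := \bd P_i \cap G_B$ for $a \le i \le b$. Sparsity gives $\sum_i |\terminal_i| \le 2\cdot(3n^\beta)^2 = O(n^{2\beta})$.

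\emph{Structural step.} Take a shortest tour with portal points $p_a,\dots,p_b$ as in \Cref{cor:structure-disjoint}. We claim one of three cases holds:
\begin{itemize}
    \item[(i)] there is an index $i\in(a,b]$ such that $p_i, \dots, p_b$ are all terminals;
    \item[(ii)] $p_a,\dots,p_b$ all lie on vertical edges and share one $y$-coordinate;
    \item[(iii)] the same with horizontal edges and a shared $x$-coordinate.
\end{itemize}
By \Cref{cor:structure-disjoint}, each step $p_{i-1}\to p_i$ is of type (1), (2) or (3). A type-(1) step makes $p_i$ a vertex of $P_i$, hence a terminal. Take the last index $i>a$ with a type-(1) step. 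From $p_i$ onward, $p_{i+1},\dots,p_b$ inherit one coordinate from $p_i$. That coordinate is a vertex coordinate of a batch polygon, and the other coordinate comes from an edge of the respective polygon, so all of them are terminals. If no type-(1) step exists, then all points lie on edges of the same orientation with a common coordinate, which is case (ii) or (iii). Mixed chains such as $(2)$ followed by $(3)$ cannot occur, since consecutive points must share the edge orientation.

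\emph{Case (i).} For each $i\in(a,b]$ and $q\in\terminal_i$, compute $g_i(q)$, the best tour length ending at $q$ whose chain of terminals starts at $i$. The entry value $f_i(q)$ restricted to terminals is obtained as follows. For $i=a+1$, query the additively weighted $L_1$ Voronoi diagram over $\portal_a$ weighted by $f_a$. This costs $O(|\portal_a|\log n + n^{2\beta}\log n)$, and $|\portal_a| \le 2n(P_a)\,n \le n^{1+\alpha}$ because $P_a$ is a delimiter. For $i>a+1$, entry at $p_i$ via a vertex also needs $f_{i-1}$ on arbitrary portals of $P_{i-1}$. Here I would restrict to the Voronoi diagram over $\terminal_{i-1}$ with weights $f_{i-1}$, which covers transitions made entirely inside the batch. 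For the remaining first entry point $p_{a+1}$, if the tour comes from $\portal_a$, that is exactly the $i=a+1$ query. Transitions between consecutive layers are handled by the diagram on $\terminal_{i-1}$. Type (2)/(3) transitions are covered automatically because $\dist$ is the $L_1$ distance. This DP costs $\Otilde(n^{2\beta})$ in total. The last layer $P_b$ is a delimiter, and $f_b$ on all of $\portal_b$ (not only terminals) is obtained by one more Voronoi diagram over $\terminal_{b-1}$, or over $\portal_a$ if $b=a+1$, queried at all $|\portal_b|\le n^{1+\alpha}$ points.

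\emph{Cases (ii)/(iii): the main obstacle.} By symmetry, treat horizontal tours on a line $y=y^*$, for $y^*$ ranging over the $y$-coordinates of $\portal_a\cup\portal_b$. There are $O(n)$ relevant values, since portals lie on global grid lines. For fixed $y^*$, the tour moves horizontally and meets each $P_i\cap\{y=y^*\}$, a union of intervals, at an interval endpoint. Build a layered planar digraph with the following structure:
\begin{itemize}
    \item layer $i$ has one node per interval endpoint of $P_i$ on the line, ordered by $x$;
    \item within a layer, consecutive nodes are joined in both directions with weight equal to their $x$-difference;
    \item edges from layer $i-1$ to layer $i$ go to the nearest endpoints to the left and right, with $L_1$ weights.
\end{itemize}
Drawing layers as stacked horizontal paths makes this graph planar; one must check that the nearest-neighbour edges do not cross, which holds because they are monotone in $x$. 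The facilities are the layer-$a$ nodes, with weights attached via a super-source edge into each, and queries are the layer-$b$ nodes at portals. The graph has $N=O(n^\beta)$ nodes.

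Sweep $y^*$ upward. Each time the line passes a horizontal edge of some $P_i$ in the batch, only $O(1)$ nodes and edges change. This gives $O(n^\beta)$ updates in total. Changing facility weights when the line passes portals of $P_a$ gives $O(n^{1+\alpha})$ updates, since $n(P_a)\le n^\alpha$ and each of its edges carries at most $n$ portals. The number of facilities is $|U|\le n^\alpha$.

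By \Cref{thm:dist-oracle}, each update costs $\Otilde(N^{3/4}|U|^{1/4}+N^{4/5})\le \Otilde(n^{3/4}n^{\alpha/4}+n^{4/5})$. Over $O(n)$ updates this totals $\Otilde(n^{(7+\alpha)/4}+n^{9/5})$. The $O(n^{1+\alpha})$ queries at portals of $P_b$ cost $\Otilde(1)$ each.

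The final answer is $f_b(p)$ equal to the minimum over the three cases. I expect the hardest part to be showing that the graph faithfully models horizontal tours, in particular that intermediate layers may be visited at interval endpoints only, and ensuring planarity together with the update accounting.
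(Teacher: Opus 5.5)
Your decomposition (a DP over terminals plus a dynamic-planar-graph sweep for straight tours) is the paper's. However, the type-2/3 sweep does not reach the stated bound as you set it up.

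\textbf{Sweep accounting.} You count $O(n^{1+\alpha})$ changes of the source weights $f_a(\cdot,y^*)$, about $n^{\alpha}$ per sweep step. You then charge the oracle's update cost to only $O(n)$ updates. Charged honestly, your sweep costs $\Otilde\big(n^{1+\alpha}(n^{(3+\alpha)/4}+n^{4/5})\big)$, which is superquadratic for $\alpha=5/6$. The missing idea is that \Cref{thm:dist-oracle} lets you replace the entire facility set $U$ in one update. The paper splits the $y$-values into groups $Y$ of $n^{1-\alpha}$ consecutive values. For each $x\in x(P_a)$ and $h\in Y$ it adds a pendant node $s_{x,h}$ with edge weight $f_a(x,h)$; this is only $O(n)$ extra nodes and keeps the graph planar. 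At step $y$ it sets $U^{(y)}=\set{s_{x,y}}$, so each step costs $O(1)$ oracle updates. Re-initialising once per group ($n^{\alpha}$ groups, $\Otilde(n)$ each) is where the $n^{1+\alpha}$ term comes from.

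\textbf{Inter-layer edges.} Your nearest-endpoint edges cause a second problem. All endpoints $u_1<u_2<u_3$ of $P_{i-1}$ lying in one gap $(L,R)$ of $P_i$ are joined to both $L$ and $R$, so these edges cross. If layer $i-2$ also has nodes whose nearest endpoints are $u_1,u_2,u_3$ (easy to arrange), contracting the layer-$(i-2)$ path gives a $K_{3,3}$ minor, so your graph need not be planar. Moreover, a new interval of $P_i$ appearing in such a gap redirects the edges of every $u_j$ there, so one event is not $O(1)$ updates. The paper keeps fixed node sets $X_i=x(P_i)\cup x(P_{i-1})$ and joins consecutive layers only by zero-weight edges between copies of the same $x$-coordinate, activated at portals. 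These edges are vertical in the layered drawing and change $O(1)$ at a time.

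\textbf{Structural step.} There is also a gap that your terminal DP silently depends on. Your case (i) only makes the points after the \emph{last} in-batch vertex terminals. You notice that entering the first vertex $p_i$ (with $i>a+1$) needs $f_{i-1}$ at an arbitrary portal $p_{i-1}$, but then you simply restrict to $\terminal_{i-1}$ without justification. In fact $p_a,\dots,p_{i-1}$ may all lie on one line $x=x^*$ where $x^*$ is not a batch coordinate (e.g.\ $x^*=x(p_{a-1})$), so $p_{i-1}\notin\terminal_{i-1}$.

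What is needed is an exchange argument. The part between $p_a$ and $p_{i-1}$ is vertical, so the length depends on $x^*$ only through $|x(p_{a-1})-x^*|+|x^*-x(p_i)|$. Sliding $p_a,\dots,p_{i-1}$ together along their horizontal edges toward $x(p_i)$ therefore never increases the length. The slide stops at $x(p_i)$ or at an edge endpoint, and both are batch coordinates. Afterwards all of $p_a,\dots,p_{b-1}$ are terminals, which is the paper's case 1 and exactly what a DP over terminals requires.

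The same sliding, toward $x(p)$, is needed when no vertex occurs at all. This is because \Cref{cor:structure-disjoint} gives no coordinate relation between $p_{b-1}$ and the \emph{prescribed} endpoint $p\in\portal_b$. For the same reason, your claim that $p_b$ is a terminal in case (i) fails for general $p$.
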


We devote the section to proving \Cref{thm:sparse-batch}. Throughout we fix a sparse batch $(P_i)_{a \leq i \leq b}$. We define the \EMPH{terminals} of $P_i$ as $\terminal_i := \bd P_i \cap  \grid(\set{P_i}_{a \leq i \leq b})$. Note that $\terminal_i \subseteq \portal_i$. Since a sparse batch has complexity $O(n^{\beta})$, the number of terminals is $O(n^{2\beta})$.

\begin{lemma}
    \label{lem:structure-sparse}
	Let $p \in \portal_b$. Among all shortest tours visiting $P_1, \dots, P_{b-1}, p$, there is a tour that satisfies one of the following properties:
    \begin{enumerate}
        \item it visits $\portal_1, \dots, \portal_{a-1}, \terminal_a, \dots, \terminal_{b-1}, p$;
        \item it visits $\portal_1, \dots, \portal_a$ and stays on a horizontal line afterwards until it reaches $p$;
        \item it visits $\portal_1, \dots, \portal_a$ and stays on a vertical line afterwards until it reaches $p$.
    \end{enumerate}
\end{lemma}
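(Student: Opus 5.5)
We plan to apply \Cref{cor:structure-disjoint} to the step-disjoint sequence $P_1,\dots,P_{b-1},\{p\}$, where $\{p\}$ plays the role of the last polygon. This instance is step-disjoint because $p\in P_b$ and $P_{b-1}\cap P_b=\emptyset$. Its grid coincides with $\grid(\{P_1,\dots,P_k\})$, or is a subset of it, because $p$ is a portal. We therefore obtain a shortest tour through points $p_1\in\portal_1,\dots,p_{b-1}\in\portal_{b-1}$, $p_b=p$ such that every consecutive pair satisfies (1), (2) or (3) of \Cref{cor:structure-disjoint}. Since the tour's length is $\sum\dist(p_{i-1},p_i)$, we may take it to be an $L_1$ staircase between consecutive points. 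When two consecutive points share an $x$- or $y$-coordinate, the staircase is a straight segment.

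We look at the indices $i\in[a+1,b]$ for which (1) holds, i.e.\ $p_i$ is a vertex of $P_i$. Suppose no such index exists in $[a+1,b]$. Then for every $i\in[a+1,b]$ the pair $p_{i-1},p_i$ satisfies (2) or (3). Because $p_i$ is not a vertex for $i\geq a+1$, it lies on exactly one edge, so it is either on a horizontal or on a vertical edge, not both. Hence the type cannot switch along the chain: if $p_i$ is on a horizontal edge and $i<b$, then the pair $(p_i,p_{i+1})$ must be of type (2). Consequently all pairs in $[a+1,b]$ are of type (2), and then all of $p_a,\dots,p_b$ share an $x$-coordinate, so the tour stays on a vertical line after $p_a\in\portal_a$. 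This gives property 3. If all pairs are of type (3) instead, we obtain property 2 symmetrically. The endpoint $p=p_b$ is a vertex only if it is a vertex of $P_b$; in that case (1) holds at $i=b$, and we handle it in the next case.

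Otherwise, let $s\in[a+1,b]$ be an index with $p_s$ a vertex of $P_s$. Then $p_s$ has coordinates in $x(P_s)\times y(P_s)$, which lies inside the batch grid $\grid(\{P_i\}_{a\le i\le b})$, so $p_s$ is a terminal. We now propagate in both directions, as in the proof of \Cref{cor:structure}. Going forward from a terminal $p_i$ with $i<b-1$: if $p_{i+1}$ is a vertex, it is a terminal. If instead type (2) holds, then $x(p_{i+1})=x(p_i)$ is a batch $x$-coordinate, and $y(p_{i+1})$ is the $y$-coordinate of a horizontal edge of $P_{i+1}$, which is also a batch coordinate; so $p_{i+1}$ is again a terminal. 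Type (3) is symmetric. Going backward from $p_s$ down to index $a$, the same argument applies with the roles of consecutive points reversed. Therefore all of $p_a,\dots,p_{b-1}$ are terminals, which is property 1.

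The main subtlety is that the vertex index $s$ may equal $b$, in which case the chain of terminals starts at $p$ itself. This is not a problem: $p$ is a vertex of $P_b$, so its coordinates are batch coordinates, and the backward propagation from $p$ still goes through. A second point to verify carefully is that a non-vertex point lies on a unique edge, so that the type of consecutive pairs cannot switch between (2) and (3) along the chain. This holds because we assumed general position.
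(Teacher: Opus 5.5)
Your outline is the paper's: apply \Cref{cor:structure-disjoint} to $P_1,\dots,P_{b-1},\{p\}$, split at vertices, and propagate coordinates along the (2)/(3)-links. Forward propagation from an in-batch vertex $p_s$ with $s\le b-1$ is correct. The proof breaks at the two places where you need a link that the corollary does not supply.

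(a) In the instance $P_1,\dots,P_{b-1},\{p\}$ the last polygon is $\{p\}$, so condition (1) holds vacuously at $i=b$. Nothing then relates $p_{b-1}$ to $p$. Your case split on whether $p$ is a vertex of the \emph{original} $P_b$ mixes the two instances. In your first case the conclusion $x(p_{b-1})=x(p)$ is therefore unjustified.

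(b) Backward propagation ``with the roles reversed'' is invalid. The condition at index $i$ ties $p_{i-1}$ to $p_i$ only when $p_i$ is \emph{not} a vertex, so at $i=s$ nothing ties $p_{s-1}$ to $p_s$. The non-vertex points $p_a,\dots,p_{s-1}$ do share a coordinate, but it is inherited from a chain reaching back before index $a$, and it need not be a batch coordinate.

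Here is a concrete counterexample. Take $P_1=[-1,0]\times[-1,0]$, $P_2=[-10,10]\times[1,2]$, $P_3=[5,6]\times[4,6]$, the batch $a=2$, $b=3$, and $p=(5,4)$. The tour $(0,0)\to(0,1)\to(5,4)$ has length $9=\dist(P_1,p)$, so it is shortest. It satisfies the corollary, with (2) at $i=2$ and (1) at $i=3$. This is exactly your case $s=b$, which you flagged as unproblematic. Yet $p_2=(0,1)$ is not a terminal, since $0\notin x(P_2)\cup x(P_3)$, and the step $p_2\to p$ is not axis-parallel. So this tour has none of the three properties. The lemma still holds here (slide $p_2$ to $(5,1)$), but only after \emph{choosing} a suitable shortest tour rather than taking an arbitrary one from the corollary.

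The missing ingredient is a sliding (exchange) argument.
\begin{enumerate}
\item Let $p_c,\dots,p_d$ be the maximal run of non-vertex points containing the problematic points. Say they lie on horizontal edges with common $x$-coordinate $x^*$. Let $p_{d+1}$ be the first in-batch vertex after them, or $p$ if there is none.
\item Translate the whole run horizontally by $\delta$ toward $x(p_{d+1})$. Each point stays on its edge, and the distances inside the run, which are purely vertical, do not change. The distance $\dist(p_d,p_{d+1})$ shrinks by $\delta$, while $\dist(p_{c-1},p_c)$ grows by at most $\delta$. Hence the tour remains shortest.
\item Stop when $x^*=x(p_{d+1})$. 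This gives property 1, or property 3 if $p_{d+1}=p$.
\item Otherwise stop when some $p_j$ in the run reaches a vertex of $P_j$. If $j\ge a$, the run now has a batch $x$-coordinate and property 1 follows. If $j<a$, repeat with the shorter run $p_{j+1},\dots,p_d$.
\end{enumerate}
Throughout, points with index below $a$ stay at grid coordinates, so they remain portals.

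The paper's own proof is equally brief at these two spots. It asserts that the corollary gives $x(p_{b-1})=x(p)$, and it handles the first subsequence ``from where it ends''; the example above defeats the former as well. This sliding step is what is needed to make either argument rigorous.
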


\begin{proof}
    Let $\pi$ be a shortest tour visiting $P_1, \dots, P_{b-1}, p$ given by \Cref{cor:structure-disjoint}. We claim that $\pi$ is the tour we are looking for.
    
    To this end, define $p_1 \in \portal_1, \dots, p_{b-1} \in \portal_{b-1}$ as in \Cref{cor:structure-disjoint}. First suppose that none of $p_a, \dots, p_{b-1}$ is a vertex. If $p_{b-1}$ is on a horizontal edge (thus not on a vertical edge), then \Cref{cor:structure-disjoint} guarantees that $x(p_{b-2}) = x(p_{b-1}) = x(p)$ and $p_{b-2}$ is on a horizontal edge as well. Propagating the argument in reverse order of time, we conclude that $x(p_a) = \cdots = x(p_{b-1}) = x(p)$, so $\pi$ satisfies property~3. In the symmetric case that $p_b$ is on a vertical edge, we conclude that $\pi$ satisfies property~2.

    Next suppose that some of $p_a, \dots, p_{b-1}$ are vertices. We split the sequence $p_a, \dots, p_{b-1}$ at these vertices into subsequences. Consider any subsequence but the first one. It must start from a vertex, say $p_i$, and does not contain any other vertex. So either all points in it share the same $x$-coordinate $x(p_i)$, or all share the same $y$-coordinate $y(p_i)$. Hence all are terminals. For the first subsequence, we can make the same argument from where it ends. We have thus shown that $p_i \in \terminal_i$ for all $a \leq i < b$, hence $\pi$ satisfies property~1.
\end{proof}

For index $a \leq i \leq b$ and point $p \in \portal_i$, let $f_i^1(p)$ be the minimum length among all tours that visit $\portal_1, \dots, \portal_{a-1}, \terminal_a, \dots, \terminal_{i-1}, p$. Let $f_i^2(p)$ be the minimum length among all tours that visit $\portal_1, \dots, \portal_a, P_{a+1}, \dots, P_{i-1}, p$ and stay horizontal after $\portal_a$. Let $f_i^3(p)$ be defined similarly to $f_i^2(p)$ but now the tour must stay vertical after $\portal_a$. \Cref{lem:structure-sparse} states that $f_i(p) = \min\{f_i^1(p), f_i^2(p), f_i^3(p)\}$, so it remains to compute the three values separately.

\paragraph{Type-1 tours}
For $a \leq i \leq b$ and $p \in \portal_i$, we have the recursive formula
\[
    f_i^1(p) =
    \begin{cases}
        f_i(p) & i=a, \\
        \min \set{ f_{i-1}^1(q) + \dist(p,q) : q \in \terminal_{i-1} } & i \in (a,b].
    \end{cases}
\]
This leads to a straightforward dynamic program: For $i = a+1, \dots, b$, build a Voronoi diagram $D$ over all $q \in \terminal_{i-1}$ using $f_{i-1}^1(q)$ as additive weights, then let $f_i^1(p) := D.\Call{dist}{p}$ for all $p \in \terminal_i$. (In the last iteration we do so for all $p \in \portal_b$.)

Let us analyze the running time of iteration $i$. Building the Voronoi diagram takes time $\Otilde(\card{\terminal_{i-1}})$. Querying the Voronoi diagram takes time $\Otilde(\card{\terminal_i})$ in total. (For the last iteration querying takes time $\Otilde(\card{\portal_b})$, which is $\Otilde(n^{1+\alpha})$ because the delimiter $P_b$ has complexity $n^\alpha$.) Recall that $\sum_{i=a}^b \card{\terminal_i} =O(n^{2 \beta})$ by sparsity, the time complexity over all iterations is $\Otilde(n^{2\beta} + n^{1+\alpha})$.

\paragraph{Type-2/3 tours}
These two types are symmetric, so we focus on type 2. We sort $\class{Y} := y(P_1) \cup \cdots \cup y(P_k)$ in increasing order and cut it into contiguous groups of size $n^{1-\alpha}$ each.

Fix an arbitrary group $Y \subset \class{Y}$. We sweep a horizontal line $y \in Y$ bottom up. At each sweep step we aim to compute $f_2(p)$ for all $p \in \portal_b \cap y$. To this end we build a weighted planar digraph $G^{(y)}$ as follows (see \Cref{fig:line-sweep-1} for an illustration).
\begin{itemize}
    \item The node set consists of multiple layers $S \cup X_a \cup \cdots \cup X_b$, where
    \begin{align*}
        X_i &:= \begin{cases}
            x(P_i) & i = a \\
            x(P_i) \cup x(P_{i-1}) & a < i \leq b \\
        \end{cases} \\
        S &:= \{ s_{x,h} \;:\; x \in X_a\;, h \in Y \}.
    \end{align*}
    The nodes in each layer $X_i$ are arranged on a line in the natural order. A node $x \in X_i$ is \emph{active} if $(x,y)$ is a portal.

    \item For every pair of neighboring nodes $x, x' \in X_i$ in the same layer, we add an edge $x \leftrightarrow x'$ of weight $|x-x'|$.

    \item For every active node $x \in X_i$ and its copy $x' \in X_{i+1}$, we add an edge $x \to x'$ of weight 0.

    \item For every $s_{x,h} \in S$, we add an edge $s_{x,h} \to x$. The weight is $f_a(x,h)$ if $(x,h)$ is a portal, and $\infty$ otherwise.

    \item Finally, we specify $U^{(y)} := \{s_{x,y} : x \in X_a \}$ as the facilities.
\end{itemize}

\begin{figure}[htbp]
    \centering
    \includegraphics{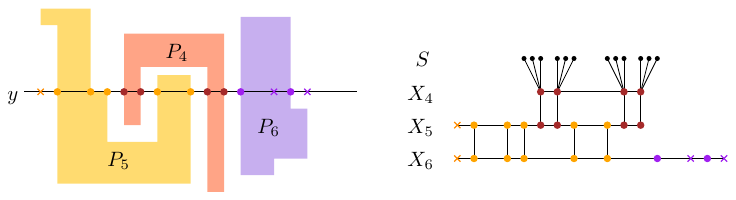}
    \caption{The structure of the plane graph $G^{(y)}$ for a batch of three polygons $P_4, P_5, P_6$. Active nodes are drawn as dots, and inactive nodes are drawn as crosses. Edges are oriented downwards, whose weights are omitted for clarity.}
    \label{fig:line-sweep-1}
\end{figure}

It is clear from the construction that $G^{(y)}$ is planar. Note that $|S| = O(n^\alpha \cdot |Y|) = O(n)$, and $\sum_{i=a}^{b} |X_i| \leq 2\sum_{i=a}^{b} n(P_i) = O(n)$ by sparsity. So the number of nodes is bounded by $O(n)$. Moreover, we have $f_2(x,y) = \min_{u \in U^{(y)}} \dist(u,x)$ for all $x \in X_b$. In other words, the $f^2$-values can be retrieved by querying the corresponding nodes in the graph.

Observe that $G^{(y)}$ changes marginally when we sweep $y$ for one step: The set of nodes do not change, and at most two inter-layer edges are inserted/removed due to activation/deactivation of nodes. The facilities $U^{(y)}$ change completely, but the size is bounded by $|x(P_b)| \leq n^\alpha$.

Applying the data structure from \Cref{thm:dist-oracle}, we get the following running time:
\begin{itemize}
    \setlength{\itemsep}{0pt}
    \item Initialization takes time $\Otilde(n)$;
    \item In each sweep step, it takes time $\Otilde(n^{(3+\alpha)/4} + n^{4/5})$ to update the graph, and time $\Otilde(n^\alpha)$ to query the distances to portals.
\end{itemize}

Finally, we sum over all groups $Y$. There are $n^\alpha$ groups, thus the same number of initializations. Over all groups there are $O(n)$ sweep steps. So the total time of handling type-2 tours in this batch is
\[
    \Otilde\left( n^{1+\alpha} + n^{1+(3+\alpha)/4} + n^{1+4/5} + n^{1+\alpha} \right)
    = \Otilde\left( n^{1+\alpha} + n^{(7+\alpha)/4} + n^{9/5} \right).
\]

\paragraph{Summary} Having computed $f_b^1, f_b^2, f_b^3$, we can compute $f_b(p) = \min\{f_b^1(p), f_b^2(p), f_b^3(p)\}$ for all $p \in \portal_b$. Correctness is guaranteed by \Cref{lem:structure-sparse}. This completes the proof of \Cref{thm:sparse-batch}.

\subsection{Processing dense batches}
\begin{theorem}
    \label{thm:dense-batch}
    Let $(P_i)_{a \leq i \leq b}$ be a dense batch. Given $f_a(p)$ for all $p \in \portal_a$, we can compute $f_b(p)$ for all $p \in \portal_b$ in time $\Otilde(n^\gamma)$, where
    $\gamma := \max \{ \frac{9}{5}, 1+\alpha, \frac{7+\alpha}{4}, \frac{5}{2}-\alpha, \frac{7}{2}-2\alpha, \frac{9-2\alpha}{4}, \frac{33-2\alpha}{16} \}$.
\end{theorem}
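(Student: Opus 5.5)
The plan follows the three-stage outline from the introduction. First, convert $f_a$ on $\portal_a$ into $f$-values on \emph{hubs} by a line sweep. Second, run a dynamic program over hubs and polygon indices. Third, convert back from hubs to $\portal_b$ by another line sweep.

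\textbf{Setup.} Apply \Cref{lem:rect-partition} to the polygons $P_a,\dots,P_b$ (total complexity at most $n$). The resulting rectangles $R$ satisfy two properties: each $P_i\cap R$ is a union of full-width horizontal stripes or full-height vertical stripes, and every axis-parallel line stabs $O(\sqrt n)$ rectangles. Define the hubs to be the intersections of rectangle boundaries with the grid lines through vertices of $P_a,\dots,P_b$. Each grid line meets $O(\sqrt n)$ rectangle boundaries, so there are $O(n^{3/2})$ hubs.

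\textbf{Structural lemma.} The key step is an analogue of \Cref{lem:structure-sparse}, proved from \Cref{cor:structure-disjoint}. Consider a shortest tour to $p\in\portal_b$ that visits $p_i\in\portal_i$. Look at the portion of the tour inside a single rectangle $R$. Because the polygons restricted to $R$ are stripes, any turn of the tour or any snap to a vertex can be moved to $\bd R$. This happens at a hub, since a vertex of some $P_i$ lies on $\bd R$ and fixes a grid coordinate. Between two consecutive hub visits, the tour is therefore either a straight horizontal or vertical run through stripes, or a single segment between two hubs on $\bd R$. Stated concretely, every shortest tour can be chosen so that it:
\begin{itemize}
\item either travels straight (horizontally or vertically) from $\portal_a$ through all of $P_{a+1},\dots,P_{b-1}$ to $p$, which is handled exactly as the type-2/3 tours in \Cref{thm:sparse-batch};
\item or leaves $\portal_a$ straight until a first hub, moves hub to hub, and finally travels straight from a last hub to $p$.
\end{itemize}

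\textbf{Middle stage.} For a hub $h$ and index $i$, let $g_i(h)$ be the shortest tour visiting $P_a,\dots,P_{i-1}$ and ending at $h$. To go from $g_i$ to $g_j$, use the fact that the straight segment inside a rectangle between two hubs on its boundary visits $P_i,\dots,P_{j-1}$ in order. Whether it does is a one-dimensional interval-touring question on stripes, decided greedily. Each rectangle $R$ has $O(\sqrt n)$-ish hubs on average per stabbing line. Aggregate transitions per pair $(i,j)$ over rectangles using Voronoi or 1D distance transforms along each boundary line. The total cost is $\Otilde(n^{3/2})$ per index pair, and with $(b-a)^2\le n^{2-2\alpha}$ pairs this gives $\Otilde(n^{7/2-2\alpha})$.

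\textbf{First and last stages.} Both conversions use a sweep with the dynamic planar distance oracle of \Cref{thm:dist-oracle}. The graph is built per group of $y$-values, with layers $X_i$ as before, but the sinks are now the hubs on the sweep line rather than portals of $P_b$. There are $O(n^{3/2})$ hub queries in total. On the entry side, the facilities are the up to $n^{\alpha}$ portals of the delimiter $P_a$. Balancing the group size against the update cost of $\Otilde(N^{3/4}|U|^{1/4}+N^{4/5})$ for graphs of size $N$ should produce the remaining terms of $\gamma$: $\frac{5}{2}-\alpha$, $\frac{9-2\alpha}{4}$ and $\frac{33-2\alpha}{16}$. Here the graph has $n^{1-\alpha}$ layers of size up to $n$, and we group the sweep so that the facility sets stay small. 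The straight-through tours contribute $n^{1+\alpha}$, $n^{(7+\alpha)/4}$ and $n^{9/5}$, exactly as in \Cref{thm:sparse-batch}.

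\textbf{Main obstacle.} I expect the structural lemma to be the hardest part. It must justify that turns and snaps inside a stripe-rectangle can be pushed to boundary hubs while preserving the visiting order across many overlapping stripe families. A secondary difficulty is the parameter bookkeeping in the two sweeps.
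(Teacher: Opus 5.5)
Your three-stage architecture matches the paper: a portal-to-hub sweep, a dynamic program over hubs and indices, and a hub-to-portal sweep. But the structural lemma that drives your middle stage is false. You rightly flagged it as the main obstacle. Inside a rectangle $R$, a shortest tour may have to \emph{reverse direction} at stripe boundaries that lie in the interior of $R$. Such reversal points cannot be pushed to $\bd R$ without increasing the length.

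Concretely, let $R=[0,10]^2$, and let consecutive polygons $P_i,P_{i+1},P_{i+2}$ meet $R$ in the stripes $y\in[8,10]$, $y\in[0,2]$ and $y\in[8,10]$; this is consistent with step-disjointness. The cheapest way to enter $R$ at a hub $(3,0)$, visit the three polygons in order, and leave at a hub $(5,10)$ is $(3,0)\to(3,8)\to(3,2)\to(3,10)\to(5,10)$, of length $24$. Its turning points $(3,8)$ and $(3,2)$ are interior grid points, and the line $x=3$ carries no hub strictly inside $R$. Every route in $R$ that changes direction only at hubs has length at least $30$; the best one reverses along the left edge.

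So a dynamic program whose transition is ``the straight segment from hub $p$ to hub $q$ visits $P_i,\dots,P_{j-1}$'', checked as a yes/no condition, overestimates $f$. Your dichotomy (straight run from $\portal_a$, then straight hub-to-hub moves) does not hold.

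The missing ingredient is \Cref{lem:v-h-independence}. Inside $R$, the horizontal and vertical movements of a progression phase decouple. Each is a shortest \emph{one-dimensional} tour, with back-and-forth allowed, through the vertical (respectively horizontal) stripes. Hence the transition from hub $p$ to hub $q$ through $P_i,\dots,P_j$ costs $\text{hcost}(R,i,j,x(p),x(q))+\text{vcost}(R,i,j,y(p),y(q))$, not $\dist(p,q)$ subject to a feasibility test.

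The real work is evaluating $\min_p\bigl(f_{i-1}(p)+\text{hcost}+\text{vcost}\bigr)$ for all hubs $q\in\bd R$ in $\Otilde(N)$ time per $(R,i,j)$. In \Cref{lem:progression} the paper fixes an edge $e$ of $R$, so the cost in the coordinate that is constant on $e$ factors out. It then runs Dijkstra on two layered graphs, one with a super-source joined to the hubs of $e$ by edges of weight $f_{i-1}$. Together with a Voronoi teleportation step per $j$ (\Cref{lem:teleportation}), this yields the $n^{7/2-2\alpha}$ term.

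Your outer stages are also only sketched. In the exit sweep, the facilities must carry hub values $f_i$ for \emph{every} $a\le i\le b$, because after its last hub the tour may still have to visit $P_i,\dots,P_{b-1}$ along the sweep line. So $|U^{(y)}|$ is of order $n^{1-\alpha}|X|$. This is what forces the group size $n^{(2\alpha-1)/4}$ and produces the terms $\frac{9-2\alpha}{4}$ and $\frac{33-2\alpha}{16}$, which your sketch asserts rather than derives.
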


We devote the section to proving \Cref{thm:dense-batch}. Throughout we fix a dense batch $(P_i)_{a \leq i \leq b}$, which by definition satisfies $b - a = O(n^{1-\alpha})$. We also fix a rectangle partition from \Cref{lem:rect-partition}. Property (i) in \Cref{lem:rect-partition} implies that for any polygon $P_i$ and rectangle $R$, the intersection $P_i \cap R$ is either a collection of vertical stripes, a collection of horizontal stripes, or empty.\footnote{Technically, $P_i$ and $R$ might intersect only on the boundary of $R$. Nevertheless, this corner case does not affect our arguments. Our algorithm still works by considering the adjacent rectangle $R'$ for which $P_i \cap R'$ is indeed either a collection of vertical stripes or a collection of horizontal stripes.}
If it is non-empty, we say that $P_i$ is \EMPH{vertical} or \EMPH{horizontal} in $R$, respectively.

A \EMPH{hub} is a point $p \in \grid(\set{P_1, \dots, P_k})$ that appears on the boundary of some rectangle. Equivalently, every hub is on the intersection of some vertical/horizontal line through the polygon vertices and some rectangle boundary. Since every line may generate at most $O(\sqrt{n})$ hubs by property (ii) in \Cref{lem:rect-partition}, and since there are $2n$ lines in total, we have the following:

\begin{observation}
    \label{obs:n-hub}
    The number of hubs is $O (n^{3/2})$.
\end{observation}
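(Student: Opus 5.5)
The plan is a direct counting argument over the lines that generate the grid, charging each hub to a grid line through it.

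First, I fix the set of grid lines. These are the vertical lines $x = c$ for $c \in \bigcup_i x(P_i)$ and the horizontal lines $y = c$ for $c \in \bigcup_i y(P_i)$. Since $\sum_i n(P_i) = n$, there are at most $n$ vertical and at most $n$ horizontal grid lines, so at most $2n$ lines in total. Every hub is a grid point, so it lies on some grid line. It therefore suffices to bound, for a single grid line $\ell$, the number of hubs on $\ell$ by $O(\sqrt n)$.

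Fix a vertical grid line $\ell$; the horizontal case is symmetric. A hub $p \in \ell$ lies on the boundary of some rectangle $R$ of the partition from \Cref{lem:rect-partition}, and $p \in R \cap \ell$, so $R$ is among the rectangles meeting $\ell$. By property (ii) there are only $O(\sqrt n)$ such rectangles. It remains to show that each such rectangle contributes $O(1)$ hubs on $\ell$.

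This per-rectangle step is the only delicate part, and it is where I expect the main obstacle. Suppose $\ell$ crosses the interior of $R$. Then $\ell \cap \bd R$ consists of at most two points, namely where $\ell$ meets the top and bottom edges of $R$. If instead $\ell$ contains a vertical side of $R$, the whole side lies on $\ell$. In that case I count differently: the grid points on that side are intersections of $\ell$ with horizontal grid lines, and I charge each of them to its horizontal line and to the rectangle through whose interior that horizontal line passes (or which has a corner there). Equivalently, I bound the hubs on $\ell$ by the number of rectangle edge endpoints and crossings along $\ell$. The sides of rectangles lying on $\ell$ are bounded by corners of rectangles meeting $\ell$, so I would apply property (ii) to a line $\ell'$ slightly shifted from $\ell$ to bound the number of rectangles adjacent to $\ell$, each contributing $O(1)$ distinguished points.

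To avoid this subtlety, the cleanest fix is to read the definition of hubs as intersections of grid lines with rectangle boundaries transversally. This matches the phrase ``every line may generate at most $O(\sqrt n)$ hubs''. Then every hub is the transversal crossing of some grid line with a boundary edge of a rectangle it stabs. Each grid line yields $O(1)$ crossings per stabbed rectangle, hence $O(\sqrt n)$ crossings per line.

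Summing over the $2n$ lines gives $O(n \cdot \sqrt n) = O(n^{3/2})$ hubs.
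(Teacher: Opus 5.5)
Your proof is correct and is essentially the paper's argument: there are at most $2n$ grid lines, and property (ii) of \Cref{lem:rect-partition} charges $O(\sqrt n)$ hubs to each. The subtlety you raise is real: the quadtree splits lie at vertex coordinates, so for example the first median split line contains $\Theta(n)$ hubs, and the literal claim of ``$O(\sqrt n)$ hubs on each line'' fails. Your transversal charging is not a reinterpretation but follows from the definition (a hub $p$ on an edge $e$ of a rectangle $R$ is a grid point, so the grid line through $p$ perpendicular to $e$ meets $R$ and meets each of the two edges of $R$ perpendicular to it in at most one point), and it is exactly what makes the paper's one-line count rigorous. Your intermediate ``shifted line'' count of corners can simply be dropped.
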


Before we continue, we introduce a canonical decomposition of a tour into phases. Assume that tour $\pi$ visits $p_a, \dots, p_b$ in sequence, where $p_i$ is the first point in $P_i$ visited by $\pi$. Let $q_1, \dots, q_h$ be the sequence of hubs visited by $\pi$. A subtour $\pi[q_a, q_{a+1}]$ between consecutive hubs is called a \EMPH{progression phase} if it contains some point $p_i$ where $a \leq i \leq b$ (generally it may contain a sequence of points $p_i, \dots, p_j$). The subtours sandwiched between progression phases are called \EMPH{teleportation phases}. See \Cref{fig:progress-teleport} for an example.

\begin{figure}[htb]
    \centering
    \includegraphics{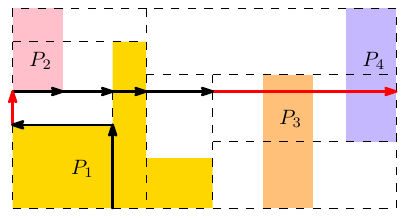}
    \caption{An illustration of a tour that visits four polygons $P_1, P_2, P_3, P_4$ and nine hubs. It has two progression phases (in red) and two teleportation phases (in black).}
    \label{fig:progress-teleport}
\end{figure}

\begin{lemma}
    \label{lem:structure-dense}
    Among all shortest tours that visit $P_1, \dots, P_b$ in sequence, there is a tour such that every progression phase is confined within a rectangle of the decomposition.
\end{lemma}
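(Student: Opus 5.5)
We start from a shortest tour $\pi$ with the structure given by \Cref{cor:structure-disjoint}: it visits portals $p_1\in\portal_1,\dots,p_b\in\portal_b$, and we may take $\pi$ to be a concatenation of $L_1$-shortest (monotone staircase) paths between consecutive $p_i$. We will choose $\pi$ so that, among all such shortest tours, the number of \emph{rectangle crossings} inside progression phases is minimal. The argument is then a local exchange: if some progression phase is not confined to a single rectangle, we reroute it through a hub without increasing the length.

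The first step is to pin down the geometry of a single progression phase. Consider a progression phase $\pi[q,q']$ between two consecutive hubs $q,q'$. By definition, $\pi$ visits no hub strictly between $q$ and $q'$. A crucial observation is that whenever $\pi$ crosses the boundary of a rectangle, it does so at a point on that boundary. If this crossing point happens to lie on a grid line through a polygon vertex, then it is a hub.

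So suppose the phase leaves a rectangle $R$. Then the crossing point $z$ lies on $\bd R$ but not on any grid line, since otherwise $z$ would be a hub. The next step is to show we can slide the crossing onto a grid line. Near $z$, the tour is a straight horizontal or vertical segment crossing $\bd R$ transversally, say a horizontal segment crossing a vertical side of $R$ at height $y(z)\notin\bigcup y(P_i)$.

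We then consider the maximal portion of $\pi$ adjacent to this segment that avoids grid lines of the relevant coordinate, and translate it vertically. The feasibility comes from \Cref{lem:rect-partition}(i) together with the stripe structure: within the strip between two consecutive horizontal grid lines, no polygon boundary changes its horizontal structure. Consequently, membership of the points $p_i$ in $P_i$ is preserved under the translation, as long as we stay within the strip, because each $p_i$ lies on a vertical edge or at a translated-consistent position. Properties (2)/(3) of \Cref{cor:structure-disjoint} give exactly this: the points along a non-turning stretch share a coordinate and lie on edges parallel to the translation direction.

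The length is linear in the translation parameter, so moving in one of the two directions does not increase it. We continue moving until either the segment hits a grid line, which makes $z$ a hub and splits the phase, or some turn point hits a vertex line, which creates a hub earlier in the tour. In either case the phase is split into pieces with fewer crossings, contradicting minimality.

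I expect the main obstacle to be making this translation argument rigorous. Two things need care. First, the translated stretch must keep all visited points inside their polygons, including the points $p_i$ that are vertices. Second, the translation must not create new rectangle crossings elsewhere. For the second issue, my plan is to handle it by noting that rectangle sides are themselves segments of lines through polygon vertices, as in the de Berg–van Kreveld construction, so hitting them also produces hubs. If that fails, I would use a potential that counts the crossings which are not at hubs.
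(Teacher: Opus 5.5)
Your goal of forcing every passage between rectangles to happen at a hub is the right one, but the argument does not close. It also misses the observation that reduces the lemma to two lines, which is essentially the paper's proof.

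\textbf{The missing observation.} The points $p_1,\dots,p_b$ given by \Cref{cor:structure-disjoint} are portals, hence grid points. So do not use arbitrary monotone staircases. Connect each $p_{i-1}$ to $p_i$ by an L-shaped path with corner $(x(p_i),y(p_{i-1}))$. This corner is again a grid point, the $L_1$ length is unchanged, and the whole tour now lies on grid lines. Every rectangle side also lies on a grid line, because the partition of \Cref{lem:rect-partition} splits at vertex coordinates, as the definition of hubs presupposes. Hence every point where this tour passes from one rectangle to another is a grid point on a rectangle boundary, i.e.\ a hub. Therefore a subtour between consecutive hubs stays inside one closed rectangle, and no exchange argument is needed.

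\textbf{Where your argument stalls.} You never make the minimized quantity precise, and you do not show that it decreases.
\begin{itemize}
\item If you slide a single horizontal segment at non-grid height, it may reach the end of an adjacent vertical segment before any grid line. It then merges with the next horizontal segment, whose height is again non-grid. The crossings of vertical rectangle sides remain non-hub, so their number is unchanged and minimality is not contradicted.
\item If you instead translate a longer stretch, only one of its horizontal segments lands on a grid line. The crossing at $z$ need not lie on that segment.
\item Your fallback potential, the number of non-hub crossings, has the same defect.
\end{itemize}
A potential such as the number of tour segments not contained in grid lines, as in the proof of \Cref{lem:grid-3d}, would make the snapping rigorous. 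But it only reproduces what the L-shaped choice gives for free.

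\textbf{Misdirected feasibility argument.} The stripe structure and properties (2)/(3) play no role here, and they would not protect a $p_i$ that is a vertex. The correct reason translations are safe is that every $p_i$ is a grid point. Hence a segment at a non-grid coordinate contains no $p_i$ at all.

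\textbf{A smaller inaccuracy.} You say the crossing point $z\in\bd R$ lies on no grid line. In fact it always lies on one, since the side of $R$ does. What makes $z$ a non-hub is that its \emph{other} coordinate is not a grid coordinate.
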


\begin{proof}
    We take a shortest tour $\pi$ that conforms to \Cref{cor:structure-disjoint}. Reccall that $\pi$ snaps to $\grid(\set{P_1, \dots, P_k})$, its intersection points with the rectangle boundaries are hubs. Recall that every progression phase of $\pi$ is a subtour between consecutive hubs, so it must be confined within a rectangle.
\end{proof}

Zooming into a progression phase, we show that vertical and horizontal movements are in a sense independent.
\begin{lemma}
    \label{lem:v-h-independence}
    Let $R$ be a rectangle from the decomposition. Consider a shortest tour and its subtour $\pi$ inside $R$. Suppose that $\pi$ starts at hub $p$, visits $P_i, \dots, P_j$ in sequence, and ends at hub $q$. Let $I_\mathrm{h} \subseteq \{i, \dots, j\}$ be the indices corresponding to horizontal polygons in $R$, and $I_\mathrm{v} \subseteq \{i, \dots, j\}$ be the indices corresponding to vertical polygons in $R$. Then the vertical movements of $\pi$, all combined, is a shortest vertical tour that visits $x(p), (P_t)_{t \in I_\mathrm{h}}, x(q)$ in $R$. Similarly, the horizontal movements of $\pi$, all combined, is a shortest horizontal tour that visits $y(p), (P_t)_{t \in I_\mathrm{v}}, y(q)$ in $R$.
\end{lemma}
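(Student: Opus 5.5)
Since $R$ is a rectangle and $L_1$ length splits additively into horizontal and vertical components, we decompose $\pi$ by projection. Parametrize $\pi$ and let $\pi_x$, $\pi_y$ be its projections to the $x$- and $y$-axes; then $|\pi| = |\pi_x| + |\pi_y|$, where $|\pi_x|$ is the total horizontal movement and $|\pi_y|$ the total vertical movement. (The statement pairs ``vertical movements'' with $x(p)$ and $(P_t)_{t\in I_\mathrm{h}}$; we read it as: the one-dimensional motion in the coordinate orthogonal to the stripes of that family. For polygons horizontal in $R$, i.e.\ horizontal stripes $R\cap P_t = \bigcup [x_0,x_1]\times J$, membership depends only on the $y$-coordinate. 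Dually, membership in vertical polygons depends only on $x$.) We prove the version in this reading, with the natural endpoint conditions.

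\textbf{Step 1 (feasibility of the projections).} Let $\tau_i \le \dots \le \tau_j$ be the times at which $\pi$ visits $P_i,\dots,P_j$. For $t\in I_\mathrm{h}$, $\pi(\tau_t)\in P_t\cap R$ is equivalent to $\pi_y(\tau_t)\in J_t$, where $J_t$ is the union of $y$-intervals of the stripes of $P_t$ in $R$. For $t\in I_\mathrm{v}$ the same holds with $\pi_x$. Hence $\pi_y$ is a one-dimensional tour from $y(p)$ to $y(q)$ visiting $(J_t)_{t\in I_\mathrm{h}}$ in order, and $\pi_x$ is a one-dimensional tour from $x(p)$ to $x(q)$ visiting $(K_t)_{t\in I_\mathrm{v}}$ in order. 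The ordering constraints of the two families are independent, since only the relative order within each family matters. In particular, $|\pi_y|$ is at least the optimum $L_\mathrm{h}$ of the first 1D problem and $|\pi_x|$ at least the optimum $L_\mathrm{v}$ of the second.

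\textbf{Step 2 (recombination).} Given optimal 1D tours $\sigma_y$ and $\sigma_x$ for the two problems, form the staircase curve in $R$ that first performs $\sigma_x$ (keeping $y$ fixed at $y(p)$) and then $\sigma_y$ (keeping $x$ fixed at $x(q)$). More generally, any interleaving of the two works.

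This curve stays in $R$ because $R$ is a product of intervals and both 1D tours stay within the corresponding side intervals of $R$. The optimal 1D tours can be taken to lie between the endpoints and the intervals, all of which lie inside the side of $R$.

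The curve visits every $P_t$ for $t\in I_\mathrm{v}\cup I_\mathrm{h}$, but the merged order must respect the index order $i,\dots,j$, and here lies the main obstacle. We handle it by scheduling in time: traverse both 1D tours simultaneously, advancing $\sigma_x$ or $\sigma_y$ as needed so that the $t$-th visit occurs in index order. That is, we process indices $t=i,\dots,j$ in order and, at each, advance only the coordinate relevant to $P_t$ until its visit point is reached. Since each 1D tour is monotone in its own visit sequence, the combined curve has length $|\sigma_x|+|\sigma_y| = L_\mathrm{v}+L_\mathrm{h}$.

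The curve also satisfies the membership condition at each visit: at the moment of the $t$-th visit, the relevant coordinate lies in the required interval, and the other coordinate lies in the side of $R$.

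\textbf{Step 3 (conclusion).} Replacing $\pi$ by this curve gives a valid subtour from $p$ to $q$ of length $L_\mathrm{v}+L_\mathrm{h} \le |\pi_x|+|\pi_y| = |\pi|$. Because the overall tour is shortest, equality must hold throughout, so $|\pi_y| = L_\mathrm{h}$ and $|\pi_x| = L_\mathrm{v}$. Together with Step 1, this shows that each projection is itself a shortest 1D tour, which is the claim.
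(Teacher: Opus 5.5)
Your argument is correct and is essentially the paper's exchange argument. The paper assumes a shorter 1D tour exists, splices it into $\pi$ in place of the corresponding movements while keeping their order, and derives a contradiction. You phrase the same idea as a sandwich: the projections are feasible 1D tours, and an index-ordered interleaving of optimal 1D tours gives a competing subtour. This makes explicit the feasibility and scheduling steps that the paper leaves implicit. Only the index-respecting interleaving works, though, not ``any interleaving'' as your first remark in Step~2 says. Your reading of the endpoints as $y(p),y(q)$ for the vertical part and $x(p),x(q)$ for the horizontal part is the intended one; the statement has them swapped, as the later use of vcost and hcost in the proof of Lemma~\ref{lem:progression} confirms.
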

\begin{proof}
    We prove the statement for vertical movements; the proof for horizontal movements is symmetric.
    Assume towards contradiction that there is a shorter vertical tour $\eta$ that visits $x(p), (P_t)_{t \in I_\mathrm{h}}, x(q)$ in $R$. We replace the vertical movement of $\pi$ that reaches a polygon $P_i$ by the movement of $\eta$ that reaches $P_i$, without changing the order of the movements. The resulting tour starts at hub $p$, visits $P_i, \dots, P_j$ in sequence, ends at hub $q$ and is shorter, contradicting the optimality of $\pi$.
\end{proof}

Our algorithm mirrors the alternation between progression and teleportation phases: In a progression phase we walk inside a rectangle and visit a sequence of new polygon(s), while in a teleportation phase we may jump far in space but do not visit any new polygon. The algorithm is outlined as follows:
\begin{itemize}
    \item We use the known $f_a(\cdot)$ values on portals of $P_a$ to initialize $f_a(\cdot)$ on hubs.

    \item We run a dynamic program that iterates over $j = a+1, \dots, b$. In each iteration we compute $f_j(\cdot)$ by implementing one alternation of progression and teleportation phases.
    
    \item In the end, all hubs receive the correct $f_b(\cdot)$ values. We project them back to $f_b(\cdot)$ values on portals of $P_b$, thereby achieving the goal of the batch.
\end{itemize}
We will now elaborate on the algorithm.

\paragraph{Initialization}
For initialization, we want to derive $f_a(\cdot)$ values on hubs from the already computed $f_a(\cdot)$ values on $\portal_a$. We use the same line-sweep procedure as we did for type-2/3 tours in sparse batches. Namely, we sweep a global line $y$ and maintain a dynamic planar graph $G^{(y)}$. The graph is defined as before, so the time analysis for updating edges/facilities remains valid.

The queries are slightly different though: For every hub $p = (x,y)$ and every index $a \leq i \leq b$, we initialize $f_i(p) := \min\{\Call{query}{x'} + |x'-x|, \Call{query}{x''} + |x''-x|\}$, where $(x',y)$ is the closest portal on $P_i$ to the left of $p$, and $(x'',y)$ is the closest portal on $P_i$ to the right of $p$. Hence, for every sweep step we make $O((b - a) \sqrt{n}) = O(n^{3/2-\alpha})$ queries. Summing over $O(n)$ sweep steps, the total query time is $\Otilde(n^{5/2-\alpha})$.

Altogether, the initialization costs time $\Otilde(n^{1+\alpha} + n^{(7+\alpha)/4} + n^{5/2-\alpha} + n^{9/5})$.

\paragraph{Dynamic program}
For each $a \leq j \leq b$, rectangle $R$ and hub $q \in \bd R$, we define $g_j(R,q)$ as the minimum length among all tours that visit $P_1, \dots, P_j, q$ and whose final leg is a progression phase in $R$. The next two lemmas relate $f$ and $g$.

\begin{lemma}
    \label{lem:teleportation}
    Fix $a \leq j \leq b$. Given $g_j(R,p)$ for all rectangles $R$ and hubs $p \in \bd R$, we can compute $f_j(q)$ for all hubs $q$ in time $\Otilde(n^{3/2})$.
\end{lemma}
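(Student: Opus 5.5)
We will realize the teleportation phase as a shortest-path computation on a planar graph whose vertices are the hubs. After the last progression phase ends at a hub $p \in \bd R$ (having already visited $P_1,\dots,P_j$), the tour may travel arbitrarily without visiting new polygons until it reaches the next hub $q$ where it stops. Hence
\[ f_j(q) = \min_{R,\; p \in \bd R} \bigl( g_j(R,p) + \dist(p,q) \bigr). \]
The claim that this is exactly $f_j$ on hubs follows from the phase decomposition together with \Cref{lem:structure-dense}. A shortest tour to $q$ visiting $P_1,\dots,P_j$ has some last progression phase in which $P_j$ is reached. That phase ends at a hub $p$ on the boundary of the rectangle $R$ confining it, and the remainder of the tour is at least $\dist(p,q)$. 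Conversely, every term on the right-hand side is realized by a tour.

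\textbf{Step 1: seed the sources.} For every hub $p$ we set the initial weight $w(p) := \min_R g_j(R,p)$, where $R$ ranges over the $O(1)$ rectangles containing $p$ on their boundary. By \Cref{obs:n-hub} there are $O(n^{3/2})$ hubs, so this step costs $O(n^{3/2})$.

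\textbf{Step 2: build the graph $H$.} The vertices of $H$ are the hubs. We connect two hubs by an edge of weight equal to their $L_1$ distance whenever they are consecutive along a grid line, or consecutive along a rectangle boundary. Rectangle boundaries are unions of segments of the arrangement, so $H$ is a planar graph with $O(n^{3/2})$ edges.

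\textbf{Step 3: check that $H$ preserves distances.} We need $\dist_H(p,q) = \|p-q\|_1$ for all hubs $p,q$. Consider the staircase (monotone) path from $p$ to $q$ that first travels along the grid line through $p$. Each time it crosses a rectangle boundary, the crossing point is a hub, because it lies on a grid line and on a rectangle boundary. Between two consecutive hubs on a grid line we have an edge of $H$. So it remains to argue that a monotone $L_1$-shortest path can always be routed through hubs at its turning points.

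This routing is the step I expect to be the main obstacle. A turn from the horizontal line $y=y(p)$ to the vertical line $x=x(q)$ happens at $(x(q),y(p))$, which is a grid point but possibly not a hub. My plan is to replace the turn inside the rectangle $R$ containing it. The path exits $R$ at hubs on $\bd R$, one on the horizontal grid line and one on the vertical grid line, and we reroute through the corner of $R$ via boundary edges. This keeps the path monotone, so its length is unchanged and equals the $L_1$ distance. The rerouting works because $\bd R$ is part of $H$ and the path remains monotone, and the argument applies to any monotone path.

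\textbf{Step 4: run Dijkstra.} We run multi-source Dijkstra on $H$ with initial labels $w(\cdot)$. The resulting label at each hub $q$ equals $\min_p\bigl(w(p)+\dist(p,q)\bigr) = f_j(q)$. Dijkstra takes $O(|H|\log|H|) = \Otilde(n^{3/2})$ time.

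Constructing $H$ requires listing, for each grid line, its hubs in sorted order, and, for each rectangle, the hubs on its boundary. Both can be obtained in $\Otilde(n^{3/2})$ time by sweeping the rectangle partition against the $2n$ grid lines, using property (ii) of \Cref{lem:rect-partition}. Alternatively, $H$ can be built once and reused for all $j$, which does not affect the per-$j$ bound.
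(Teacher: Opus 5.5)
Your proof is correct, but it evaluates the recursion differently from the paper.

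\textbf{What is shared.} Both arguments start from the same recursion $f_j(q)=\min_{p}\bigl(\dist(p,q)+\min_{R\ni p} g_j(R,p)\bigr)$ over hubs $p$. The paper evaluates it directly with an additively weighted $L_1$ Voronoi diagram on the $O(n^{3/2})$ hubs, weighted by $\min_R g_j(R,p)$, and then makes one query per hub.

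\textbf{What you do instead.} You build an explicit graph $H$ on the hubs, show it preserves $L_1$ distances exactly, and run multi-source Dijkstra. Your rerouting step, which you flagged as the obstacle, does go through:
\begin{itemize}
\item If the turn point $(x(q),y(p))$ is not a hub, it lies in the interior of a unique rectangle $R$. Neither $p$ nor $q$ can be interior to $R$.
\item So the horizontal leg enters $R$ through a vertical side at a hub $h_1$ (a grid line meeting $\bd R$), and the vertical leg leaves through a horizontal side at a hub $h_2$.
\item The arc of $\bd R$ from $h_1$ through the shared corner to $h_2$ is monotone. Chaining boundary edges weighted by $L_1$ distance therefore costs exactly $\|h_1-h_2\|_1$.
\item The $L_1$ edge weights also give $\dist_H\ge\|\cdot\|_1$ by the triangle inequality.
\end{itemize}
For the partition in the appendix, all rectangle sides lie on grid lines, so the corners are themselves hubs and the boundary edges are redundant. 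Your argument does not need this fact.

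\textbf{What each route buys.} The paper's route is a black-box reduction that needs no adjacency information among hubs and works for any site set. Yours needs only Dijkstra, but it relies on a special property of hubs: every crossing of a grid line with a rectangle boundary is a hub. That property is what makes an exact $L_1$-preserving graph of size $O(n^{3/2})$ possible; for arbitrary sites you would need Steiner points, such as the full grid. A side benefit is that $H$ does not depend on $j$, so it can be built once.

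\textbf{One small inaccuracy.} $H$ is not planar in general. A horizontal and a vertical grid-line edge crossing the same rectangle meet at a non-hub grid point in its interior. You never use planarity, so this is harmless.
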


\begin{proof}
    Recall that $f_j(q)$ is the minimum length of a tour that visits $P_1, \dots, P_j, q$. Note that
    \[ f_j(q) = \min_{\text{hub }p} \left( \dist(p,q) + \min_{R \ni p} g_j(R,p) \right). \]
    The $\leq$ direction is clear since the right hand side always composes a tour that visits $P_1, \dots, P_j, q$ in sequence. For the $\geq$ direction, consider a tour defining $f_j(q)$. If its last leg was a progression phase in some $R$, then by definition $f_j(q) = g_j(R,q)$. Else, its last leg was a teleportation phase from some hub $p$, prior to which was a progression phase in some $R$. So $f_j(q) = g_j(R,p) + \dist(p,q)$.
    
    To compute it efficiently, we build a Voronoi diagram $D$ on hubs, with each hub $p$ additively weighted by $\min_{R \ni p} g_j(R,p)$. Then for each hub $q$, we compute $f_j(q) := D.\Call{dist}{q}$. Correctness follows directly from the recursion above, so it remains to analyze time complexity. The Voronoi diagram has $O(n^{3/2})$ sites (hubs) by \Cref{obs:n-hub}. Computing the additive weight of each site takes time $O(1)$, so the construction of the diagram takes time $\Otilde(n^{3/2})$. Every hub gives rise to one query into the diagram, which takes time $\Otilde(1)$. So the total query time amounts to $\Otilde(n^{3/2})$.
\end{proof}

\begin{lemma}
    \label{lem:progression}
    Fix $R$ and $a \leq j \leq b$. Given $f_a(p), \dots, f_{j-1}(p)$ for all hubs $p \in \bd R$, we can compute $g_j(R,q)$ for all hubs $q \in \bd R$ in time $O (n^{1-\alpha} N \log N)$, where $N$ is the number of hubs on $\bd R$.
\end{lemma}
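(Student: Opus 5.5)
We compute $g_j(R,q)$ by reducing the progression phase inside $R$ to two independent one-dimensional problems, using \Cref{lem:v-h-independence}. A progression phase ending at $q$ and visiting $P_j$ last starts at some hub $p\in\bd R$ and visits $P_i,\dots,P_j$ inside $R$ for some $a< i\le j$. Before the phase, the tour has visited $P_1,\dots,P_{i-1}$ and stands at $p$, so its cost is $f_{i-1}(p)$. Hence
\[ g_j(R,q)=\min_{a< i\le j}\;\min_{p\in\bd R\text{ hub}}\Bigl(f_{i-1}(p)+V_{i,j}(x(p),x(q))+H_{i,j}(y(p),y(q))\Bigr). \]
Here $V_{i,j}(x,x')$ is the length of a shortest vertical tour from $x$ through the horizontal polygons among $P_i,\dots,P_j$ in $R$ to $x'$; recall that the vertical movements are governed by the horizontal stripes and have start/end determined by the endpoints. $H_{i,j}$ is the analogous horizontal quantity for the vertical polygons. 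This uses the convention, consistent with \Cref{lem:v-h-independence}, that the vertical and horizontal movements decouple. We must handle the one subtlety that the last polygon visited is $P_j$, but since $q$ is the phase's endpoint this is automatic.

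The algorithm iterates over the starting index $i$ from $j$ down to $a+1$, at most $n^{1-\alpha}$ values, and for each $i$ performs an $O(N\log N)$ computation. For fixed $i$, restricted to $R$, each horizontal polygon $P_t$ gives a union of $y$-intervals (stripes), and each vertical one a union of $x$-intervals. The one-dimensional ``interval touring'' cost from a start coordinate to an end coordinate is a convex-ish piecewise-linear function. The key structural fact I would prove is that a shortest 1D tour through a sequence of unions of intervals, starting at $y_0$ and ending at $y_1$, equals $\min_z(d(y_0,z)+|z-y_1|)$, where $d(y_0,\cdot)$ is obtained by repeatedly applying ``distance to the set $P_t\cap R$'' transforms. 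Since all hubs on $\bd R$ lie on $O(N)$ coordinates, we only need these functions at hub coordinates.

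Concretely, for fixed $i$ we form the combined weight $w(p)=f_{i-1}(p)$ on hubs and then propagate the minimum over $p$ through the polygons $P_i,\dots,P_j$. Rather than evaluating $V$ and $H$ pairwise, which costs $N^2$, we treat the 2D problem in $R$ as an $L_1$ propagation. Start from the weighted hub set. For each polygon $P_t$ (say horizontal in $R$), the new value at a point equals the min over points in $P_t\cap R$ of old value plus distance. Because horizontal stripes only constrain $y$, this is a 1D lower-envelope operation in $y$ with slopes $\pm1$, performed on a grid of $O(N)$ hub coordinates. This is clean only if we can keep the representation as ``functions of $x$'' plus ``functions of $y$''. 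This is exactly where the independence lemma is used. I expect it to be the main obstacle: arguing that the minimum over start hubs commutes with the separate $x$- and $y$-propagations, and that all breakpoints stay on hub coordinates so that each update is $O(N\log N)$ with a balanced tree or sorting.

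To obtain the stated bound, I would reverse the loop: sweep $i$ downward while maintaining the composed 1D maps for $P_i,\dots,P_j$. Each map is a piecewise-linear function on $O(N)$ breakpoints. Prepending one polygon updates the maps in $O(N\log N)$, after which we evaluate against $f_{i-1}$ on all hubs by a min-plus combination. Using the slope $\pm1$ structure, this combination is a sweep in $O(N\log N)$. Finally $g_j(R,q)$ is the minimum over $i$, giving $O(n^{1-\alpha}N\log N)$ in total.
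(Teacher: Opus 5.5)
Your recursion for $g_j(R,q)$ is the paper's. (There is a coordinate swap: the cost through the horizontal polygons depends on $y(p),y(q)$, and the cost through the vertical ones on $x(p),x(q)$.) The gap is the algorithmic core, which is exactly the step you flag as the main obstacle and leave unresolved.

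For fixed $i$ you must compute, for all hubs $q$, the quantity $\min_{p}\bigl(f_{i-1}(p)+\text{hcost}(x(p),x(q))+\text{vcost}(y(p),y(q))\bigr)$. Here the start hubs have both coordinates varying, and a minimum of an $x$-term plus a $y$-term over such hubs does not split into two one-dimensional minima.

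Your 2D $L_1$ propagation runs into the same problem. Once the weighted sources are spread over all of $\bd R$, the propagated function lives on the $O(N)\times O(N)$ grid of hub coordinates. It is not a sum of a function of $x$ and a function of $y$, so a propagation step is not a 1D operation on $O(N)$ coordinates.

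Likewise, the ``composed 1D maps'' you maintain while sweeping $i$ downward are two-argument functions of start and end coordinate. You give no reason they have $O(N)$ breakpoints, or why their min-plus combination with $f_{i-1}$ is an $O(N\log N)$ sweep. As written, that part is an assertion rather than an argument.

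The missing idea is to split $\bd R$ into its four edges and treat start hubs edge by edge. On the left edge $e$, every start hub has $x(p)=x(e)$, so for fixed $i$ the horizontal cost does not depend on $p$ and factors out:
\begin{align*}
&\min_{\text{hub }p\in e}\bigl(f_{i-1}(p)+\text{hcost}(R,i,j,x(e),x(q))+\text{vcost}(R,i,j,y(p),y(q))\bigr)\\
&\qquad=\text{hcost}(R,i,j,x(e),x(q))+\min_{\text{hub }p\in e}\bigl(f_{i-1}(p)+\text{vcost}(R,i,j,y(p),y(q))\bigr).
\end{align*}
This is precisely what makes your ``function of $x$ plus function of $y$'' representation valid.

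The first term is a single-source 1D problem. The second is a multi-source 1D problem, with weights $f_{i-1}(x(e),y)$ on edges out of a super-source, so the min-plus combination is done implicitly. The paper solves both by Dijkstra on a layered graph:
\begin{itemize}
\item the layers are the stripe-boundary coordinates of the relevant polygons inside $R$, plus a source layer and a sink layer of hub coordinates;
\item each node is linked to its two nearest nodes in the next layer that does not cover it.
\end{itemize}
This graph has $O(N)$ nodes and edges, so each $i$ is handled from scratch in $O(N\log N)$. Adding the two terms per $q$ and minimizing over the four edges and the $O(n^{1-\alpha})$ values of $i$ gives the bound. No incremental maintenance of composed maps over $i$ is needed.
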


\begin{proof}
    Write $i \preceq j$ if $a \leq i \leq j \leq b$ and each of $P_i, \dots, P_j$ intersects $R$. In this case, we can partition $\set{i, \dots, j} =: I_\mathrm{v} \cup I_\mathrm{h}$ based on whether a polygon is vertical or horizontal in $R$. Let $\text{hcost}(R, i, j, x, x')$ be the length of the shortest horizontal tour that visits $x, (P_t)_{t \in I_\mathrm{v}}, x'$ in $R$; let $\text{vcost}(R, i, j, y, y')$ be the length of the shortest vertical tour that visits $y, (P_t)_{t \in I_\mathrm{h}}, y'$ in $R$. We claim that
    \[
        g_j(R,q) = \min_{\substack{\text{hub }p \in \bd R\\i \preceq j}} \big(
            f_{i-1}(p) + \text{hcost}(R, i, j, x(p), x(q)) + \text{vcost}(R, i, j, y(p), y(q))
        \big).
    \]
    The $\leq$ direction is clear.
    For the $\geq$ direction, we consider a shortest tour that defines $g_j(R,q)$. Its final leg is a progression phase in $R$ that visits a non-empty sequence of polygons $P_i, \dots, P_j$ where $a \leq i \leq b$ and $i \preceq j$. This part of the tour may contain horizontal and vertical movements, and by \Cref{lem:v-h-independence} the two movements have length $\text{hcost}(R, i, j, x(p), x(q))$ and $\text{vcost}(R, i, j, y(p), y(q))$, respectively. Before this, the tour must have visited $P_1, \dots, P_{i-1}$, so it has length $f_{i-1}(p)$ by definition.

    In what follows, we show how to compute the recursion efficiently. For each edge $e \subset R$, we define
    \[
        g_j^e(R,q) := \min_{\substack{\text{hub }p \in e\\i \preceq j}} \big(
            f_{i-1}(p) + \text{hcost}(R, i, j, x(p), x(q)) + \text{vcost}(R, i, j, y(p), y(q))
        \big).
    \]
    Clearly $g_j = \min \set{g_j^e : e \text{ is an edge of } R}$. So it remains to compute $g_j^e$ for each $e$.
    
    By symmetry we focus on the case that $e$ is the left edge. Since $x(p) = x(e)$ is a fixed value for all hubs $p \in e$, we can factor out the horizontal cost and write
    \[
        g_j^e(R,q) = \min_{i \preceq j} \left(
        \text{hcost}(R, i, j, x(e), x(q)) + \min_{\text{hub } p \in e} (f_{i-1}(p) + \text{vcost}(R, i, j, y(p), y(q)))
        \right).
    \]
    
    From now on we fix $i \preceq j$. We compute $\text{hcost}(R, i, j, x(e), x(q))$ for all $q$ by running Dijkstra's algorithm on a layered graph. Specifically, let $L \subseteq \set{i, \dots, j}$ be the indices of the vertical polygons in $R$. For each $l \in L$ we define a layer $V_l := x(P_l \cap R)$. We also introduce a source layer $V_0 := \set{x(e)}$ as well as a sink layer $V_\infty$ that contains the $x$-coordinates of hubs on $R$. The edges are defined as follows:
    \begin{itemize}
        \setlength{\itemsep}{0pt}
        \item We arrange the nodes in each layer on a real line, and add an edge between every pair of adjacent nodes. The edge weight is the difference of their $x$-coordinates.

        \item For $l \in \set{0} \cup L$ and every $x \in V_l$, we find the smallest index $l' > l$ such that $P_{l'} \cap R$ does \emph{not} cover $x$. (We set $l' := \infty$ if no such index exists.) Then we add an edge from $x$ to
        \begin{itemize}
            \setlength{\itemsep}{0pt}
            \item the closest node $x' \in V_{l'}$ to the left of $x$; and
            \item the closest node $x' \in V_{l'}$ to the right of $x$.
        \end{itemize}
        The edge weights are both $|x - x'|$.
    \end{itemize}
    By construction, a shortest tour from $x(e)$ to $x \in V_\infty$ in the graph corresponds to a shortest horizontal tour that visits $x(e), (P_l)_{l \in L}, x$ in $R$. So to compute $\text{hcost}(R, i, j, x(e), x(q))$ for all hubs $q$, it suffices to run Dijkstra's algorithm from $x(e)$ and read off the distances to all $x \in V_\infty$. Note that the number of nodes is bounded by $O(N)$, and the number of edges is linear in the number of nodes. Hence Dijkstra's algorithm runs in time $O(N \log N)$.

    In a similar fashion we compute $\min_{p \in e} (f_{i-1}(p) + \text{vcost}(R, i, j, y(p), y(q)))$ for all $q$. This time, let $L \subseteq \set{i, \dots, j}$ collect the indices of the horizontal polygons in $R$. For each $l \in L$ we define a layer $V_l := y(P_l \cap R)$. We introduce a source node $s$ and two more layers $V_0, V_\infty$. The layer $V_0$ contains the $y$-coordinates of hubs on $e$. The layer $V_\infty$ contains the $y$-coordinates of all hubs on $R$. The edges are defined as follows:
    \begin{itemize}
        \setlength{\itemsep}{0pt}
        \item We add an edge from the source $s$ to every node $y \in V_0$ with weight $f_{i-1}(x(e),y)$.
        
        \item We arrange the nodes in each layer on a real line, and add an edge between every pair of adjacent nodes. The edge weight is the difference of their $y$-coordinates.

        \item For $l \in \set{0} \cup L$ and every $y \in V_l$, we find the smallest index $l' > l$ such that $P_{l'} \cap R$ does \emph{not} cover $y$. (We set $l' := \infty$ if no such index exists.) Then we add an edge from $y$ to
        \begin{itemize}
            \setlength{\itemsep}{0pt}
            \item the closest node $y' \in V_{l'}$ above $y$; and
            \item the closest node $y' \in V_{l'}$ below $y$.
        \end{itemize}
        The edge weights are both $|y - y'|$.
    \end{itemize}
    Then we run Dijkstra's algorithm from $s$ and read off the distances to all $y \in V_\infty$. The running time is again $O(N \log N)$.
    
    After we have computed the two terms, we add them for each $q$. We iterate the procedure for every $i \preceq j$ and thereby obtain $g_e$, from which we compute $g$. Since there are at most $b-a = O(n^{1-\alpha})$ iterations, the total running time is as claimed.
\end{proof}

\begin{corollary}
    \label{cor:progression}
    Fix $a \leq j \leq b$. Given $f_a(p), \dots, f_{j-1}(p)$ for all hubs $p \in \bd R$, we can compute $g_j(R,q)$ for all rectangles $R$ and hubs $q \in \bd R$ in time $\Otilde(n^{5/2-\alpha})$.
\end{corollary}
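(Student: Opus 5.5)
The plan is to apply \Cref{lem:progression} to every rectangle $R$ of the partition from \Cref{lem:rect-partition} and add up the costs. For a fixed $R$ with $N_R$ hubs on $\bd R$, \Cref{lem:progression} gives time $O(n^{1-\alpha} N_R \log N_R)$. The total is therefore
\[
O\Big(n^{1-\alpha}\log n \sum_R N_R\Big),
\]
so everything reduces to showing $\sum_R N_R = O(n^{3/2})$.

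To bound $\sum_R N_R$, I will count incidences between hubs and rectangle boundaries rather than hubs themselves. Fix one of the $O(n)$ vertical or horizontal grid lines through polygon vertices. By property (ii) of \Cref{lem:rect-partition}, it meets only $O(\sqrt n)$ rectangles. Take such a rectangle $R$ and such a line $\ell$.

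\begin{itemize}
\item If $\ell$ crosses the interior of $R$ transversally, it meets $\bd R$ in at most two points.
\item If $\ell$ runs along an edge of $R$, it contributes the hubs on that edge. Those hubs come from perpendicular grid lines meeting that edge, so they are charged to those perpendicular lines instead.
\end{itemize}

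Either way, each pair (hub $p$, rectangle $R$ with $p\in\bd R$) is charged to a pair (grid line, rectangle the line meets). More precisely, $p$ is the intersection of $\ell$ with $\bd R$ for a grid line $\ell$ through $p$ that is transversal to the edge of $R$ containing $p$. That line $\ell$ meets $R$ (possibly only at the corner, which we count as meeting), and each such (line, rectangle) pair is charged $O(1)$ times. This gives $\sum_R N_R = O(n\cdot\sqrt n) = O(n^{3/2})$.

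A hub can also lie on the boundaries of several rectangles; a corner, for instance, belongs to up to four. This only costs a constant factor, and the charging above already accounts for it.

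With $\sum_R N_R = O(n^{3/2})$, the total time is $O(n^{1-\alpha}\cdot n^{3/2}\log n)=\Otilde(n^{5/2-\alpha})$.

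The input $f_a,\dots,f_{j-1}$ on hubs is assumed available, so the only other work is locating, for each rectangle, its hubs and the polygons $P_i\cap R$ with $a\le i\le j$ that \Cref{lem:progression} needs. This is routine preprocessing within the same budget. Each $P_i\cap R$ is read off along the grid lines crossing $R$, whose total size is already covered by $N_R$ per polygon, and there are $O(n^{1-\alpha})$ polygons in the batch.

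The step needing care is the charging argument. In particular, I must make sure that hubs on a rectangle edge collinear with a grid line are charged to perpendicular lines, and that degenerate corner contacts are counted only $O(1)$ times.
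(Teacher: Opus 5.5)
Your proposal is correct and follows the paper's route: apply \Cref{lem:progression} to each rectangle $R$ and show that $\sum_R N_R = O(n^{3/2})$. The paper gets this bound in one line, by noting that each hub lies on the boundary of at most four rectangles and invoking \Cref{obs:n-hub}; you instead re-derive it with a direct (grid line, rectangle) charging, which is equally valid and more explicit about hubs on rectangle edges that are collinear with grid lines.
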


\begin{proof}
    We apply \Cref{cor:progression} for each rectangle $R$. Since each hub appears on at most four rectangles, the sum over all $N$'s is bounded by four times the number of hubs, which is $O(n^{3/2})$.
\end{proof}

Now we can assemble the dynamic program. For $j = a+1, \dots, b$, we apply \Cref{cor:progression} to compute $g(\cdot,\cdot,j)$ from $f_a(\cdot), \dots, f_{j-1}(\cdot)$, then we apply \Cref{lem:teleportation} to compute $f_j(\cdot)$ from $g(\cdot,\cdot,j)$. Hence, each iteration corresponds to an alternation of progression and teleportation phases. When the last iteration finishes, we obtain $f_b(q)$ for all hubs $q$.

Since the number of iterations is $b-a = O(n^{1-\alpha})$, the time complexity is bounded by $\Otilde(n^{1-\alpha} \cdot (n^{3/2} + n^{5/2-\alpha})) = \Otilde(n^{7/2-2\alpha})$.

\paragraph{Projection}
As the final step, we project the computed $f_b$-values on hubs to $f_b$-values on portals. Not surprisingly, we apply yet another line sweep, but the details are quite different from initialization.

Let $\class{Y} := \bigcup_{i=1}^k y(P_i)$. We sort $\class{Y}$ in increasing order and break it into contiguous groups of size $n^{(2\alpha-1)/4}$ each. Now fix an arbitrary group $Y \subset \class{Y}$. We sweep a horizontal line $y \in Y$ bottom up. Since each $y$ hits $O(\sqrt{n})$ hubs, there are at most $O(n^{(2\alpha+1)/4})$ relevant hubs for the group. We collect the $x$-coordinates of these hubs into a set $X$.

For each line $y$, we build a weighted planar digraph $G^{(y)}$ as follows:
\begin{itemize}
    \item The node set consists of multiple layers $S \cup X_a \cup \cdots \cup X_b$, where
    \begin{align*}
        X_i &:= \begin{cases}
            X \cup x(P_i) & i = a \\
            X \cup x(P_i) \cup x(P_{i-1}) & a < i \leq b \\
        \end{cases} \\
        S &:= \{ s_{i,x,h} \;:\; a \leq i \leq b,\; x \in X\;, h \in Y \}.
    \end{align*}
    The nodes in each layer $X_i$ are arranged on a line in the natural order. A node $x \in X_i$ is \emph{active} if $(x,y)$ is a portal.

    \item For every pair of neighboring nodes $x, x' \in X_i$ in the same layer, we add an edge $x \leftrightarrow x'$ of weight $|x-x'|$.

    \item For every active node $x \in X_i$ and its copy $x' \in X_{i+1}$, we add an edge $x \to x'$ of weight 0.

    \item For every $s_{i,x,h} \in S$, we add an edge $s_{i,x,h} \to x$. The weight is $f_i(x,h)$ if $(x,h)$ is a hub, and $\infty$ otherwise.

    \item Finally, we specify $U^{(y)} := \{s_{i,x,y} : a \leq i \leq b, x \in X \}$ as the facilities.
\end{itemize}
See \Cref{fig:line-sweep-2} for an illustration. Clearly the graph is planar. Since the complexity of this batch is at most $O(n)$, and $b - a = O(n^{1-\alpha})$, we can bound $|G^{(y)}| = O(n + n^{1-\alpha} \cdot |X| \cdot |Y|) = O(n)$ and $|U^{(y)}| = O(n^{1-\alpha} \cdot|X|) = O(n^{(5-2\alpha)/4})$.

\begin{figure}[htbp]
    \centering
    \includegraphics{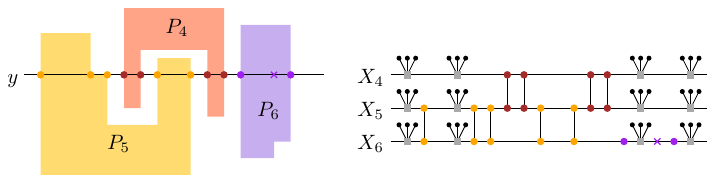}
    \caption{The structure of the plane graph $G^{(y)}$ for a batch of three polygons $P_4, P_5, P_6$. Active nodes are drawn as dots, and inactive nodes are drawn as crosses. The nodes that correspond to $X$ are drawn as squares. Edges are oriented downwards, whose weights are omitted for clarity.}
    \label{fig:line-sweep-2}
\end{figure}

When we sweep $y$ for one step, the graph $G^{(y)}$ changes only marginally. The set of nodes do not change, and at most two edges are inserted/removed due to activation/deactivation of nodes. The facilities $U^{(y)}$ change completely. To retrieve $f_b(p)$ for every $p=(x,y) \in \portal_b \cap y$, we simply query the distance from $U^{(y)}$ to $x \in X_b$. The data structure in \Cref{thm:dist-oracle} takes $\Otilde(n)$ time at initialization. For each sweep step, it takes time $\Otilde(n^{(17-2\alpha)/16})$ to update the graph and facilities, and time $\Otilde(n^{\alpha})$ to query the distance to portals.

Finally, we sum over all groups $Y$. There are $n/|Y| = n^{(5-2\alpha)/4}$ groups, thus the same number of initializations. Over all groups there are $O(n)$ sweep steps. So the total time is
\[
    \Otilde\left( n^{1+(5-2\alpha)/4} + n^{1+(17-2\alpha)/16} + n^{1+\alpha} \right)
    = \Otilde\left( n^{(9-2\alpha)/4} + n^{(33-2\alpha)/16} + n^{1+\alpha} \right).
\]

\paragraph{Summary}
\Cref{thm:dense-batch} now follows by chaining the initialization, dynamic program and projection.

\subsection{Wrapping up the proofs of \Cref{thm:disjoint,thm:dense}}
Having shown our batching strategy and the way both sparse and dense batches are handled, we are now ready to prove our main result.

\disjoint*

\begin{proof}
    We first compute a decomposition into sparse and dense batches by \Cref{lem:batch}, which takes time $O(k)$.
    If the first batch is sparse we can set $f_1(p)=0$ for all portals $p \in \portal_1$. Because the batch is sparse there are at most $O(n^{1+\beta})$ of portals on $P_1$. If the first batch is dense we alter its initialization procedure slightly: We use a single source node instead of a source layer in the graph $G^{(y)}$, and all edges incident to the source have weight zero; during the line sweep we do not update the facilities. The time bound remains valid. Then we chain \Cref{thm:sparse-batch,thm:dense-batch} to process the batches in order. Each batch requires time $\Otilde(n^\gamma)$ where
    \[
        \gamma := \max \left\{
            \frac{9}{5},\; 2\beta,\; 1+\alpha,\;
            \frac{7+\alpha}{4},\; \frac{5}{2}-\alpha,\; \frac{7}{2}-2\alpha,\;
            \frac{9-2\alpha}{4},\; \frac{33-2\alpha}{16}
        \right\}.
    \]
    If the last batch is sparse we return the minimum of $f_k(p)$ over all $p \in \portal_k$. This gives the shortest tour length by \Cref{cor:structure-disjoint}. Since the batch is sparse, the time is bounded by $O(|\portal_k|) = O(n^{1+\beta})$. If the last batch is dense we alter the projection procedure slightly. Instead of making a layer with portals in the graph $G^{(y)}$, we create a sink node and add an edge from each node in the last layer to the sink. The minimum tour length is exactly the distance from the source layer to the sink. The time bound remains valid. If we are interested in reconstructing the actual tour, we can use standard backtracking in the dynamic program.
    
    We choose parameters $\alpha := 5/6 < 47/48 =: \beta$, thus $\gamma = 47/24$. Since there are at most $3n^{1-\beta} = O(n^{1/48})$ batches, the entire algorithm runs in time $\Otilde(n^{95/48})= \Otilde(n^{2-\frac{1}{48}})$.
\end{proof}

Our handling of the dense cases yields a proof of \Cref{thm:dense} as a byproduct.
\dense*
\begin{proof}
    Consider the entire instance as a dense batch and use the adaptations for the first and last batch as in the proof of \Cref{thm:disjoint}. The runtime follows from the fact that the batch has length $k$ instead of $O(n^{1-\alpha})$.
\end{proof}

\section{Touring step-disjoint ortho-convex polygons}
\label{sec:orthoconvex}
In this section we study step-disjoint \EMPH{ortho-convex} polygons. A polygon $P$ is ortho-convex if $P \cap l$ is a (possibly empty) line segment for every horizontal and vertical line $l$. We show the following theorem.

\orthoconvex*

Our algorithm iterates over $P_1, \dots, P_k$. In iteration $i \in [k]$, the goal is to compute for every edge $e \subset \bd P_i$ the restriction $f_i^e$ of $f_i$ on $e$. After all iterations, we simply output the minimum of $f_k^e(p)$ over all edges $e \subset \bd P_k$ and portals $p \in e$. The correctness is guaranteed by \Cref{cor:structure-disjoint}.

Here is the challenge: In the worst case the total complexity of $f_i$ over all $i \in [k]$ is $\Theta(n^2)$, so we cannot explicitly compute them all. We bypass the issue by representing them in some (interrelated) data structures which avoid explicit evaluation on too many points.

In \Cref{sec:mountain-range}, we formalize the interface and guarantee of the data structure. Then in \Cref{sec:update}, we explain how to implement the algorithm efficiently by applying the data structure as a blackbox. Finally, in \Cref{sec:ds} we implement the data structure itself.

\subsection{Dynamic mountain ranges}
\label{sec:mountain-range}
We call a continuous, piecewise linear function $f: [a,b] \to \Reals$ a \EMPH{mountain range} if every piece has slope $0$, $-1$ or $1$. Its \emph{complexity} $|f|$ is the number of pieces it contains. A \emph{dynamic} mountain range $f$ is a mountain range that is initially zero everywhere and undergoes the following updates:
\begin{itemize}
    \item $\Call{restrict}{a,b}$: restrict the domain to $[a,b]$.
    
    \item $\Call{shift}{\delta}$: add $\delta \in \Reals$ to $f$ pointwise.
    \item $\Call{relax}{\lambda, \gamma}$: given $\lambda \in \set{-1,1}$ and $\gamma \in \Reals$, replace $f$ by $f'$ where
    \[ f'(x) := \min \set{ f(x),\; \lambda x + \gamma } \]
    
    \item $\Call{join}{g}$: denote the domain of $f$ by $[a,b]$. Given another mountain range $g$ over domain $[b,c]$ such that $f(b) = g(b)$, replace $f$ by $f'$ where
    \[ f'(x) := \begin{cases}
        f(x) & x \in [a,b],\\
        g(x) & x \in [b,c].
    \end{cases} \]
\end{itemize}
It also supports the query operation $\Call{evaluate}{x}$: given any $x$ in the domain, report $f(x)$.

\begin{lemma}
    \label{lem:mountain-range}
    There is a fully-persistent data structure that maintains a dynamic mountain range $f$ in amortized time $O(\log |f|)$ per operation.
\end{lemma}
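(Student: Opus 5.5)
We represent a mountain range by its breakpoints, stored as a sequence of pieces in a fully persistent balanced search tree keyed by $x$. Path copying on a red--black tree or treap gives full persistence: every update copies only the $O(\log|f|)$ nodes on the touched root-to-leaf paths. Each leaf stores one piece: its left endpoint, its slope in $\{-1,0,1\}$, and its value at the left endpoint, the last stored relative to lazy tags. Each internal node additionally carries a lazy additive tag for values.

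Most operations are then routine.
\begin{itemize}
\item \Call{shift}{$\delta$} puts a lazy tag on the root, so it costs $O(1)$ new nodes.
\item \Call{evaluate}{$x$} descends to the piece containing $x$, accumulating tags along the way, in $O(\log|f|)$ time.
\item \Call{restrict}{$a,b$} splits the tree at $a$ and at $b$, possibly cutting one piece at each end, and keeps the middle part. This costs $O(\log|f|)$ time.
\item \Call{join}{$g$} concatenates the two trees. Continuity at $b$ guarantees consistency, and the cost is $O(\log(|f|+|g|))$. For full persistence $g$ may share nodes with $f$; path copying handles this, since nodes are immutable.
\end{itemize}

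The real work is \Call{relax}{$\lambda,\gamma$}. Take $\lambda=1$; the case $\lambda=-1$ is symmetric. Consider $h(x)=f(x)-(x+\gamma)$. Every piece of $f$ has slope at most $1$, so $h$ is non-increasing. Hence $\{x: f(x)>x+\gamma\}$ is a prefix $[a,x^*)$ of the domain. The new function equals the line $x+\gamma$ on $[a,x^*]$ and equals $f$ afterwards.

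We locate $x^*$ by a binary search down the tree. Monotonicity of $h$ lets us decide at each node which side to enter, by evaluating $h$ at the node's boundary breakpoint. For this we keep for each subtree its leftmost breakpoint and its value, which are maintainable under tags. Inside the found piece we solve a linear equation to get $x^*$.

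We then split off the prefix $[a,x^*]$, discard it, and prepend a single new piece of slope $1$. This costs $O(\log|f|)$ time, and the complexity increases by at most one.

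I expect relax to be the main point to check, specifically that the region where the line wins is a contiguous prefix (or suffix), so that a single split suffices. The monotonicity argument above settles this, and with it the bound is actually worst case rather than merely amortized.

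One subtlety is degenerate plateaus where $h$ is constant. There the comparison should be tie-broken toward keeping $f$.

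The discarded pieces are never touched again, so even a naive implementation that deletes them one by one would be amortized $O(\log|f|)$: each piece is created once and deleted at most once, with the potential equal to the number of pieces. This matches the "amortized" in the statement. In the persistent setting the discarded pieces are simply left as dead nodes, so the split is the intended implementation.

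All operations create $O(\log|f|)$ new nodes, which gives full persistence within the claimed bound.
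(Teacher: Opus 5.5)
Your proof is correct. Its core matches the paper's.
\begin{itemize}
\item Breakpoints are stored in a search tree keyed by $x$.
\item \textsc{shift} is an additive offset at the root. The paper stores relative offsets $\delta_u$, so a vertex's height is the sum of the offsets on its root path; this is your lazy tag in another form.
\item \textsc{relax} is a binary search that uses the monotonicity of $f(x)-(x+\gamma)$, followed by discarding a prefix.
\item \textsc{join} is tree concatenation.
\end{itemize}

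The real difference is where persistence comes from. The paper implements the structure with splay trees, obtains amortized ephemeral bounds, and then cites the generic Driscoll--Sarnak--Sleator--Tarjan transformation for pointer-based structures. You use path copying on a balanced tree. This is less of a black box and more robust, for three reasons.
\begin{itemize}
\item Amortized splay-tree bounds do not automatically survive full persistence: an expensive operation can be replayed on the same unfavourable old version.
\item Splaying turns even \textsc{evaluate} into a structural update.
\item \textsc{join} of two distinct versions, as used when the fragments $f_i^S$ are pieced together into $f_i^e$, is a confluent operation. The 1989 full-persistence construction does not cover it, whereas immutable shared nodes handle it directly.
\end{itemize}

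Two small remarks. With red--black trees your bounds are genuinely worst case; treaps would give only expected bounds. Your plateau tie-breaking is immaterial, since where $f(x)=x+\gamma$ both choices give the same function. The paper's route is shorter to write down, but yours more convincingly delivers the persistence guarantee the lemma claims. Implementing \textsc{restrict} and \textsc{relax} by splits, rather than by the deletion-based amortization you sketch, is exactly what makes this work.
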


We defer the proof to \Cref{sec:ds}. As a remark, \emph{persistence} is needed in our application because we sometimes want to make parallel updates to the same mountain range and keep all the results.

\subsection{The algorithm}
\label{sec:update}

The notion of (dynamic) mountain ranges is motivated by the following consideration. \Cref{cor:structure-disjoint} implies that for each $i \in [k]$ the function $f_i^e$ is a mountain range whose pieces meet at portals; moreover, $\sum_{e \subset \bd P_i} |f_i^e| = O(n)$ due to ortho-convexity. Note that in the worst case $\sum_{i=1}^k \sum_{e \subset P_i} |f_i^e|$ can be $\Theta(n^2)$, so we cannot afford to represent all of them explicitly. The trick is to break down these long mountain ranges into fragments that are interrelated by the shift/relax operations. In this way, we only need to represent a small subset of fragments by the dynamic mountain range data structure and infer the others when requested.

More specifically, our algorithm initializes a dynamic mountain range $f_1^e$ for every edge $e \subset \bd P_1$, using \Cref{lem:mountain-range}. Now focus on iteration $i \in [2,k]$. We inductively assume that the previous iteration has computed $f_{i-1}^e$ for all edges $e \subset \bd P_{i-1}$. We call $p \in \bd P_{i-1} \cup \bd P_i$ a \EMPH{terminal} if $p \in \grid(\set{P_{i-1}, P_i})$. Note that a terminal is a portal, but not vice versa. We split every edge of $P_{i-1}, P_i$ at terminals, resulting in \EMPH{fragments}. There are at most $O(n(P_{i-1}) + n(P_i))$ terminals (and thus fragments) because each vertical/horizontal line hits the boundary of an ortho-convex polygon at most twice. It is straightforward to obtain $f_{i-1}^T$ for all fragments $T \subset \bd P_{i-1}$ using \Call{restrict}{}. Based on these dynamic mountain ranges, the next two lemmas allow us to compute $f_i^S$ for all fragments $S \subset \bd P_i$ efficiently. Finally, we can piece together $f_i^e$ for all edges $e \subset \bd P_i$ using \Call{join}{}, and finish the goal of iteration $i$.

\begin{lemma}
    \label{lem:dist-on-terminal}
    Given dynamic mountain ranges $f_{i-1}^T$ for all fragments $T \subset \bd P_{i-1}$, we can compute $f_i(p)$ for all terminals $p \in \bd P_i$ in time $O((n(P_{i-1})+n(P_i))\log n)$.
\end{lemma}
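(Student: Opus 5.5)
We plan to use \Cref{cor:structure-disjoint}, applied to the prefix $P_1,\dots,P_{i-1},p$. It gives an optimal tour whose point $p_{i-1}\in\portal_{i-1}$ is related to $p$ in one of three ways: (1) $p$ is a vertex of $P_i$, and $p_{i-1}$ is arbitrary; (2) $p_{i-1}$ and $p$ lie on horizontal edges with equal $x$-coordinate; (3) the same with vertical edges and equal $y$-coordinate. A terminal $p$ has both coordinates in $x(P_{i-1})\cup x(P_i)$ and $y(P_{i-1})\cup y(P_i)$. Hence in cases (2) and (3) the matching point $p_{i-1}$ is itself a terminal of $P_{i-1}$. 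Therefore
\[ f_i(p)=\min_{q\in\bd P_{i-1}}\big(f_{i-1}(q)+\dist(q,p)\big), \]
and the minimum may be restricted to fragments of $P_{i-1}$.

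We then aim to evaluate $F(p):=\min_T\min_{q\in T}(f_{i-1}^T(q)+\dist(q,p))$ for all $O(n(P_{i-1})+n(P_i))$ terminals $p$.

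\textbf{Reducing fragments to endpoints.} Fix a fragment $T$, say horizontal, and a terminal $p$ not in $P_{i-1}$. Since $T$ contains no terminal in its interior, the line $x=x(p)$ either passes through an endpoint of $T$ or misses the interior of $T$. In the latter case, $\dist(q,p)=|x(q)-x(p)|+|y(T)-y(p)|$ has constant slope $\pm1$ along $T$. Since $f_{i-1}^T$ is a mountain range, it has slope in $\{-1,0,1\}$, so $f_{i-1}^T+\dist(\cdot,p)$ has slope in $\{0,\pm2\}$ or $\{0,\pm 1\}$-type combinations of a fixed sign. In particular it is never strictly smaller in the interior than at both endpoints; it is monotone up to flat parts. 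So the minimum over $T$ is attained at an endpoint of $T$, which is a terminal.

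Consequently, $F(p)=\min_{q}(f_{i-1}(q)+\dist(q,p))$, where $q$ ranges only over terminals of $P_{i-1}$. Their number is $O(n(P_{i-1})+n(P_i))$, because each grid line meets an ortho-convex boundary at most twice.

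\textbf{Algorithm.} First, evaluate $f_{i-1}$ at every terminal of $P_{i-1}$ with \Call{evaluate}{} on the relevant fragment, in $O(\log n)$ time each. Then build an additively weighted $L_1$ Voronoi diagram on these weighted sites, as in the proof of \Cref{thm:general}, in $O(m\log m)$ time. Finally, query it at each terminal $p$ of $P_i$ in $O(\log n)$ time. The total is $O((n(P_{i-1})+n(P_i))\log n)$, matching the claim.

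\textbf{Main obstacle.} The delicate step is the endpoint reduction. It must be stated carefully with the grid lines of both polygons, so that no breakpoint of $\dist(\cdot,p)$ falls strictly inside a fragment. It also relies on the pieces of $f_{i-1}^T$ meeting only at portals, and on their slopes being in $\{-1,0,1\}$, so that $f_{i-1}^T+\dist(\cdot,p)$ has no interior strict minimum. Let $\sigma$ be the slope of $\dist(\cdot,p)$ along $T$. Where the sum has slope $\sigma+s$ with $s\in\{-1,0,1\}$, the signs never reverse from negative to positive strictly inside $T$ unless $s$ does. If that happens, we fall back to the observation that $f_{i-1}^T$ itself, by \Cref{cor:structure-disjoint}, has its local minima only at portals arising from vertices. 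Checking this may require more care.
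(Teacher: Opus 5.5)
Your proof is correct, but it takes a different route from the paper. The paper restricts the last point of $P_{i-1}$ visited before $p$ to $V_{i-1}\cup\{q,q'\}$. Here $V_{i-1}$ is the vertex set of $P_{i-1}$, and $q,q'$ are the points of $P_{i-1}$ closest to $p$ on the vertical and horizontal lines through $p$; the paper justifies this by appealing to \Cref{cor:structure-disjoint} and ortho-convexity. It then builds the weighted Voronoi diagram only on $V_{i-1}$, and for each query evaluates $f_{i-1}$ at $q$ and $q'$ separately.

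You instead use all terminals of $P_{i-1}$ (the fragment endpoints) as sites, and justify this by a direct monotonicity argument. Since $p$ is a terminal, the point $(x(p),y(T))$ cannot lie in the relative interior of a horizontal fragment $T$. Hence $\dist(\cdot,p)$ has constant slope $\sigma\in\{\pm1\}$ along $T$. Adding $f_{i-1}^T$, whose slopes lie in $[-1,1]$, leaves a monotone function that is minimized at an endpoint of $T$.

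Your argument is more elementary and self-contained. It only needs that $f_{i-1}$ is $1$-Lipschitz, which is immediate from its definition, and it uses ortho-convexity only to bound the number of sites by $O(n(P_{i-1})+n(P_i))$. The paper's version uses fewer Voronoi sites, at the cost of locating and evaluating $q,q'$ per query. The running times agree.

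Two clean-ups are needed.

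\textbf{The fallback.} The worry in your last paragraph is unfounded, and the fallback should be deleted (it rests on an unproven claim about local minima anyway). On $T$ the slope of the sum is $\sigma+s$ with $\sigma$ fixed and $s\in\{-1,0,1\}$. So it lies always in $\{0,1,2\}$ or always in $\{-2,-1,0\}$, and no sign change can occur, whatever $s$ does.

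\textbf{The first paragraph.} Your case analysis misreads \Cref{cor:structure-disjoint}: applied to a sequence ending at the single point $p$, its case (1) holds vacuously. The analysis is not needed anyway. The formula $f_i(p)=\min_{q\in\bd P_{i-1}}(f_{i-1}(q)+\dist(q,p))$ follows directly from the definition. Since $p\notin P_{i-1}$ by step-disjointness, an $L_1$-shortest path from any $q\in P_{i-1}$ to $p$ passes through some $q'\in\bd P_{i-1}$ with $\dist(q,q')+\dist(q',p)=\dist(q,p)$. Moreover $f_{i-1}(q')\le f_{i-1}(q)+\dist(q,q')$, which gives the formula.
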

\begin{proof}
    We start with a structural observation. Let $p$ be a terminal lying on the boundary of $P_i$. Let $q$ be the point in $P_{i-1}$ closest to $p$ with $x(p) = x(q)$ (if it exists), and let $q'$ be the point in $P_{i-1}$ closest to $p$ with $y(p) = y(q')$ (if it exists). Note that they are uniquely defined since $P_{i-1}$ is ortho-convex. Let $V_{i-1}$ be the set of vertices in $P_{i-1}$. We claim that $f_i(p) = \min \set{ f_{i-1}(r) + \dist(r,p) : r \in V_{i-1} \cup \set{q,q'}}$.
    
    Indeed, note that $q,q' \in \bd P_{i-1}$ due to step-disjointness. Since $p$ is a portal and $q,q'$ share a coordinate with $p$, both points are portals. Now consider a shortest tour that ends at $p$. By \Cref{cor:structure-disjoint}, we may assume that the tour visits either a vertex of $P_{i-1}$ or $q$ or $q'$. On the other hand, the concatenation of a tour realizing $f_{i-1}(r)$ and the tour $r \to p$ visits $P_1,\ldots,P_i$ in order for any point $r\in V_{i-1}\cup \{q,q'\}$. Therefore, the claim holds.

    As preprocessing, we compute $f_{i-1}(r)$ for all $r \in V_{i-1}$ using \Call{evaluate}{}, and then construct the additively weighted Voronoi diagram of $V_{i-1}$ where each point $r$ has weight $f_{i-1}(r)$. This takes time $O(n(P_{i-1})\log n)$.
    
    After the preprocessing, we are ready to compute $f_i(p)$ for each terminal $p \in \bd P_i$. To this end, we query the Voronoi diagram to obtain $\min \{ f_{i-1}(r)+\dist(r,p) : r\in V_{i-1}\}$. We compute the two terminals $q,q'$ (as defined before) and $f_{i-1}(q),f_{i-1}(q')$ using \Call{evaluate}{}. Finally, we can compute $f_i(p)$ using the claim above.
    
    Since the computation takes $O(\log n)$ time for each terminal by \Cref{lem:mountain-range}, and there are $O(n(P_{i-1}) + n(P_i))$ terminals on $\bd P_i$, the overall running time is $O((n(P_{i-1})+n(P_i))\log n)$.
\end{proof}

\begin{lemma}
    \label{lem:dist-on-fragment}
    Given dynamic mountain ranges $f_{i-1}^T$ for all fragments $T \subset \bd P_{i-1}$, together with $f_i(p)$ for all terminals $p \in \bd P_i$, we can compute $f_i^S$ for any fragment $S \subset \bd P_i$ in time $O(\log n)$.
\end{lemma}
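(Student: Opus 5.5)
The plan is to express $f_i^S$ as the restriction of a single facing fragment of $P_{i-1}$, shifted by a constant and relaxed by two slope-$\pm1$ lines through the endpoints of $S$. This needs a constant number of persistent operations from \Cref{lem:mountain-range}, each costing $O(\log n)$.

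By symmetry let $S$ be horizontal, with endpoints $s_0=(x_0,y_S)$ and $s_1=(x_1,y_S)$, $x_0<x_1$. Both endpoints are terminals, so $f_i(s_0)$ and $f_i(s_1)$ are given. Since $S$ is a fragment, no vertical line through a vertex of $P_{i-1}$ or $P_i$ meets the open interval $(x_0,x_1)$. Hence the open strip $(x_0,x_1)\times\Reals$ contains no vertex of $P_{i-1}$. Consider $P_{i-1}\cap\bigl((x_0,x_1)\times\Reals\bigr)$. By ortho-convexity each vertical line meets $P_{i-1}$ in one segment, and by step-disjointness that segment lies entirely above or entirely below $y_S$. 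With no vertices in the strip, $P_{i-1}$ is either absent from the strip or occupies it as a band bounded by horizontal edges. In the latter case exactly one horizontal fragment $T\subset\bd P_{i-1}$ (at height $y_T$) faces $S$, that is, it is the boundary part of $P_{i-1}$ closest to $S$ across $[x_0,x_1]$, and the vertical gap $d:=|y_T-y_S|$ is constant.

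I then claim that for all $x\in[x_0,x_1]$,
\[
  f_i^S(x)=\min\bigl\{\, f_{i-1}^T(x)+d,\; f_i(s_0)+(x-x_0),\; f_i(s_1)+(x_1-x) \,\bigr\},
\]
where the first term is dropped if no facing $T$ exists.

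\emph{The $\le$ direction.} The first term is a tour to $(x,y_T)\in P_{i-1}$ followed by a vertical move to $(x,y_S)$. The other two hold because $f_i$ is $1$-Lipschitz in $L_1$: a tour reaching $s_0$ extends along $S$.

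\emph{The $\ge$ direction.} Write $f_i(p)=\min_{r\in P_{i-1}}\bigl(f_{i-1}(r)+\dist(r,p)\bigr)$ for $p=(x,y_S)$, and split on the position of the minimizing $r$.
\begin{itemize}
  \item If $x(r)\le x_0$, then $\dist(r,p)=\dist(r,s_0)+(x-x_0)$, so the value is at least $f_i(s_0)+(x-x_0)$. The case $x(r)\ge x_1$ is symmetric with $s_1$.
  \item If $x_0<x(r)<x_1$, then $r$ lies in the band, and the point $r'=(x(r),y_T)$ on $T$ separates $r$ from $p$ vertically. Thus $\dist(r,p)=\dist(r,r')+\dist(r',p)$, and $f_{i-1}(r')\le f_{i-1}(r)+\dist(r,r')$. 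Hence the value is at least $f_{i-1}^T(x(r))+d+|x(r)-x|$. Finally, $f_{i-1}^T(x(r))+|x(r)-x|\ge f_{i-1}^T(x)$, again by the Lipschitz property restricted to $T$.
\end{itemize}

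\emph{Implementation.} Take a persistent copy of $f_{i-1}^T$. Apply \Call{restrict}{$x_0,x_1$} and \Call{shift}{$d$}. Then apply \Call{relax}{$1,\,f_i(s_0)-x_0$} and \Call{relax}{$-1,\,f_i(s_1)+x_1$}. If there is no facing $T$, start from a fresh zero range on $[x_0,x_1]$ shifted by a large constant instead. Each operation costs $O(\log n)$ by \Cref{lem:mountain-range}, and persistence keeps $f_{i-1}^T$ intact for other fragments facing the same $T$.

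Locating $T$ for each $S$ is a one-time merge of the sorted fragment endpoints of $P_{i-1}$ and $P_i$ by $x$ (resp.\ $y$), done together with \Cref{lem:dist-on-terminal}; alternatively, a binary search gives it in $O(\log n)$ per fragment. The main obstacle I expect is the structural step: that a single horizontal fragment of $P_{i-1}$ faces all of $S$, and in particular that $T$'s domain covers all of $[x_0,x_1]$ so that \Call{restrict}{} is legal. This needs care with degenerate boundary cases, such as a vertex of $P_{i-1}$ lying exactly on the line $x=x_0$, but the general position assumption should handle them.
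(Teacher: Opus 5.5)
Your proof is correct and arrives at exactly the paper's formula and implementation: a persistent copy of the facing fragment's mountain range, shifted by the vertical gap $|y_T-y_S|$ and relaxed by the two slope-$\pm1$ lines through the endpoints of $S$.

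The difference is how you justify the $\geq$ direction. The paper cites \Cref{cor:structure-disjoint} to classify the last point $r\in P_{i-1}$ into three cases: $r$ is vertically aligned with $p$, horizontally aligned with $p$, or a vertex. It then treats each case. Ortho-convexity sends a vertically aligned $r$ to $T$. In the other two cases $x(r)\notin(x_0,x_1)$, so the path to $p$ passes an endpoint of $S$. You instead split directly on whether $x(r)$ lies in the open strip $(x_0,x_1)\times\Reals$. You use only the $1$-Lipschitz property of $f_{i-1}$ and $f_i$ and the band shape of $P_{i-1}$ inside that strip, so you need no normalization of $r$. This is a real simplification. Applied to $P_1,\dots,P_{i-1},\set{p}$, the corollary says nothing about the final step, since $p$ is trivially a vertex of $\set{p}$. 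So the paper's classification of $r$ tacitly relies on a short sliding argument of the same Lipschitz flavour. Your version also explicitly handles the case where no fragment of $P_{i-1}$ faces $S$. It also correctly uses $f_i(s_0),f_i(s_1)$, the values the lemma supplies at terminals of $P_i$; the paper's display writes $f_{i-1}(q),f_{i-1}(q')$ instead. The paper's route is shorter only because it reuses a structural result already in hand.

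The obstacle you flag at the end is not one. Since $s_0,s_1$ are terminals, $x_0,x_1\in x(P_{i-1})\cup x(P_i)$. So the facing edge of $P_{i-1}$ is itself cut at $x_0$ and $x_1$ and nowhere in between, which gives $T=[x_0,x_1]\times\set{y_T}$ exactly. The \textsc{restrict} step is then a no-op, and no general-position assumption is needed.
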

\begin{proof}
    We show how to do this for a horizontal fragment $S$ of $P_i$; the vertical fragments can be handled symmetrically.
    Let terminals $q,q'$ be the left and right ends of $S$. Let $T$ be the horizontal fragment in $P_{i-1}$ closest to $S$ such that $x(S) = x(T)$. For any point $p \in S$, we claim that
    \[
        f_i(p) = \min \begin{Bmatrix}
            f_{i-1}(\proj_T(p))+|y(S)-y(T)|,\\
            f_{i-1}(q) + x(p)-x(q),\\
            f_{i-1}(q') + x(q')-x(p)
        \end{Bmatrix}.
    \]
    To see this, observe that there are three possibilities of the last visited point $r \in P_{i-1}$ to reach $p$ by \Cref{cor:structure-disjoint}.
    For the first case that $x(p) = x(r)$, since $P_{i-1}$ is ortho-convex, all points in $P_{i-1}$ with $x$-coordinate $x(p)$ must appear on the same side of $S$. Hence $r \in T$, or in other words $r = \proj_T(p)$.
    For the second case that $y(p) = y(r)$, since $P_{i-1}, P_i$ are disjoint, $x(r) \notin [x(q), x(q')]$. In particular, the tour $r \to p$ goes through $q$ or $q'$.
    For the third case that $r$ is a vertex, $x(r) \notin (x(q), x(q'))$ by definition of fragments, thus the tour $r \to p$ goes through $q$ or $q'$. Therefore, the claim holds.

    To compute $f_i^S$, we first compute the closest horizontal fragment $T \subset \bd P_{i-1}$ with $x(S)=x(T)$. The first term in the claim can be obtained by $f_{i-1}^T.\Call{shift}{|y(S)-y(T)|}$. The second term is a linear function in $x(p)$ with slope $+1$, namely $x(p) \mapsto x(p) + (f_{i-1}(q)-x(q))$. Similarly, the third term is a linear function in $x(q)$ with slope $-1$, namely $x(p) \mapsto -x(p) + (f_{i-1}(q')+x(q'))$. So the minimum can be computed by calling $\Call{relax}{1, f_{i-1}(q)-x(q)}$ and then $\Call{relax}{-1, x(q')+f_{i-1}(q')}$. All the operations cost $O(\log n)$ time by \Cref{lem:mountain-range}.
\end{proof}

Chaining \Cref{lem:dist-on-terminal,lem:dist-on-fragment}, the description of iteration $i$ is complete. Since each iteration takes $O((n(P_{i-1})+n(P_i))\log n)$ time, overall we need $O\left(\sum_{i=1}^k n(P_i) \log n\right) = O(n\log n)$ time. In the end, we return $\min_{p\in\portal_k} f_k(p)$, which is correct by \Cref{cor:structure-disjoint}. Since $|\portal_k| \leq 2n$ by ortho-convexity, we can find the minimum in $O(n\log n)$ time. This concludes the proof of \Cref{thm:orthoconvex}.

\subsection{Implementing a dynamic mountain range}
\label{sec:ds}
We will now prove \Cref{lem:mountain-range}.
\begin{proof}
    We will first design a non-persistent data structure based on splay trees. We treat the mountain range as a polyline whose vertices are stored in a splay tree. Specifically, every node $u$ in the tree stores two numbers $x_u \in [a,b]$ and $\delta_u \in  \Reals$, and the tree is ordered by the first number. We write $y_u := \sum_v \delta_v$ where $v$ runs over all ancestors of $u$ (including itself). The idea is to interpret $(x_u, y_u)$ as a vertex of the polyline. While $y_u$ is not explicitly stored, we make sure that it is invariant under splay operations. For example, when the tree performs a zig step as in \Cref{fig:splay-zig}, we set $\delta_u' := \delta_u + \delta_v$, $\delta_v' := -\delta_u$, $\delta_w' := \delta_w + \delta_u$ and keep everything else in place.

    \begin{figure}[h]
        \centering
        \includegraphics{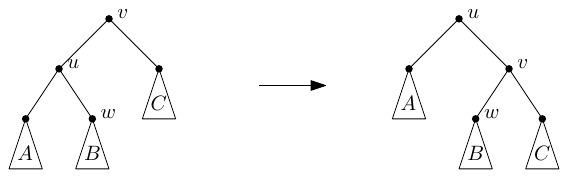}
        \caption{A zig step of the splay tree.}
        \label{fig:splay-zig}
    \end{figure}

    Under this interpretation, we can later speak of \emph{inserting a vertex $(x,y)$ to the tree}. What we really mean is that we create a new node $u$ with $x_u := x$, insert it to the splay tree, and set up the $\delta_v$'s so that $y_u = y$ and the remaining $y_v$'s do not change. The time cost is dominated by the cost of inserting an element to the tree.
    
    Later we also invoke a well-known fact: Given any decreasing/increasing property $\mathcal{Z}(x)$ in terms of $x$, we can binary search for the greatest/smallest $x_u$ in the tree such that $\mathcal{Z}(x_u)$ holds. The time cost is asymptotically equal to the cost of finding a given key in the tree.
    
    Initially, the tree contains only two nodes that correspond to vertices $(-\infty,0)$ and $(\infty,0)$. The subsequent queries and updates are handled as follows.
    \begin{itemize}
        \item For query $\Call{evaluate}{x}$, we search for the greatest $x_u \leq x$ and the smallest $x_v \geq x$ in the tree. Note that $(x_u, y_u)$ and $(x_v, y_v)$ together bound the line segment which contains $x$, so we just output $\frac{y_v-y_u}{x_v-x_u} (x - x_u) + y_u$.
        
        \item For $\Call{restrict}{a, b}$, we insert new vertices $(a, \Call{evaluate}{a})$ and $(b, \Call{evaluate}{b})$ into the tree, then we remove all nodes $u$ in the tree such that $x_u \notin [a,b]$.
        
        \item For $\Call{shift}{\delta}$, we increase $\delta_u$ by $\delta$ where $u$ is the root of the splay tree.

        \item For $\Call{relax}{\lambda, \gamma}$, if $\lambda = 1$ then we search for the greatest $x_u$ such that $y_u \geq x_u + \gamma$. This works because the slopes in $f$ is bounded by $1$, meaning that the property $f(x) \geq x + \gamma$ is decreasing. Let node $v$ be the successor of $u$. Note that the segment between $(x_u,y_u)$ and $(x_v,y_v)$ has slope $0$ or $-1$, and it intersects the line $y = x+\gamma$ at some point $(x^*, y^*)$. We compute $(x^*, y^*)$ and insert it as a new vertex into the tree, and remove all nodes $w$ such that $x_w < x^*$. The $\lambda = -1$ case is symmetric.
        
        \item For $\Call{join}{g}$, we join the two trees that represent $f$ and $g$.
    \end{itemize}

    The correctness of the algorithm is self-explanatory. Since each update boils down to a constant number of elementary operations in the splay tree, the amortized time per update is $\log(|f|)$.

    Finally, notice that our data structure is pointer-based. It is known that any such data structure can be transformed into a fully-persistent one with only a constant-factor overhead in time complexity~\cite{DRISCOLL198986}. This finishes the proof.
\end{proof}

\section{Touring rectangles}
\label{sec:rectangles}
This section aims to prove the following theorem:

\rectangle*

Let \touringInt be the problem of computing a shortest tour that visits a given sequence of $k$ intervals in $\Reals$. In the first lemma below, we reduce \touringOrtho for rectangles to two instances of \touringInt. In the second lemma below, we solve \touringInt in linear time. The two lemmas together imply \Cref{thm:rectangle}. We remark that the approach generalizes to higher dimensional boxes.

\begin{lemma}
    If \touringInt is in time $O(k)$, then \touringOrtho for rectangles is in time $O(k)$.
\end{lemma}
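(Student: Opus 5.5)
The plan is to show that for rectangles the $L_1$ touring problem separates into an $x$-problem and a $y$-problem. Write each rectangle as a product $P_i = I_i \times J_i$ of closed intervals $I_i, J_i \subset \Reals$. By \Cref{lem:skeleton} (taking the skeleton $(1,p_1),\dots,(k,p_k)$), the minimum tour length equals
\[
    \min\Big\{ \sum_{i=2}^k \dist(p_{i-1},p_i) \;:\; p_i \in P_i \text{ for all } i \Big\}.
\]
With $p_i = (x_i,y_i)$ and the Manhattan metric, $\dist(p_{i-1},p_i) = |x_{i-1}-x_i| + |y_{i-1}-y_i|$. The constraint $p_i \in P_i$ is exactly $x_i \in I_i$ and $y_i \in J_i$. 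The feasible set is therefore a product of the feasible set for $(x_1,\dots,x_k)$ and the feasible set for $(y_1,\dots,y_k)$, and the objective is a sum of a function of the $x$'s and a function of the $y$'s. The minimum therefore splits as
\[
    \min_{x_i \in I_i} \sum_{i=2}^k |x_{i-1}-x_i| \;+\; \min_{y_i \in J_i} \sum_{i=2}^k |y_{i-1}-y_i|.
\]

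Next I will argue that each summand is the optimum of an \touringInt instance. The one-dimensional analogue of \Cref{lem:skeleton} holds by the same proof: a tour in $\Reals$ visiting $I_1,\dots,I_k$ in order passes through points $x_i \in I_i$ in order, so its length is at least $\sum_{i=2}^k |x_{i-1}-x_i|$. Conversely, the piecewise linear curve through $x_1,\dots,x_k$ is a tour of exactly that length. So the first summand is the answer to \touringInt on $I_1,\dots,I_k$, and the second is the answer on $J_1,\dots,J_k$.

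The algorithm reads off the $2k$ intervals in $O(k)$ time, runs the assumed $O(k)$-time \touringInt algorithm twice, and returns the sum, for $O(k)$ time in total. If an actual tour is wanted, the optimal sequences $(x_i)$ and $(y_i)$ recovered from the two interval instances combine into points $p_i = (x_i,y_i) \in P_i$. Connecting these points by $L_1$-shortest paths gives an optimal tour.

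There is no real obstacle here. The only point to check is that the one-dimensional tour notion matches the point-sequence formulation. Degenerate rectangles (segments or points) cause no trouble, since the intervals are then allowed to be degenerate.
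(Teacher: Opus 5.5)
Your proof is correct and takes essentially the paper's approach: split the $L_1$ length into its $x$- and $y$-parts, solve two \touringInt instances, and recombine the optimal coordinate sequences into points $(x_i,y_i)\in P_i$. The only difference is cosmetic: the paper gets the lower bound by projecting a shortest continuous tour onto the two axes, whereas you go through the point-sequence formulation of \Cref{lem:skeleton} and use that the objective separates over $x$ and $y$.
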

\begin{proof}
    Let $R_1, \dots, R_k$ be the input sequence of rectangles, where $R_i = X_i \times Y_i$ for $i \in [k]$. We compute the shortest one-dimensional tour $\tau$ (respectively $\tau'$) for the \touringInt instance $X_1, \dots, X_k$ (respectively $Y_1, \dots, Y_k$). See \Cref{fig:rectangles} for an example. We claim that $\norm{\tau} + \norm{\tau'}$ is equal to $l^*$, the length of the shortest tour that visits $R_1, \dots, R_k$.

    \begin{figure}[hb]
    	\centering
    	\includegraphics[page=7]{figures/rectangles.pdf}
    	\caption{A shortest tour that visits a sequence of rectangles. The tour decomposes into horizontal movements (green) and vertical movements (orange), which are the shortest tours of the projected intervals on the $x$- and $y$-axis, respectively.}
    	\label{fig:rectangles}
    \end{figure}
    
    Consider any shortest tour $\pi$ that visits $R_1, \dots, R_k$. Its projection $\pi_x$ to the $x$-axis must visit the interval sequence $X_1, \dots, X_k$ (thus a candidate for $\tau$), and its projection $\pi_y$ to the $y$-axis must visit the interval sequence $Y_1, \dots, Y_k$ (thus a candidate for $\tau'$). Hence $\norm{\tau} + \norm{\tau'} \leq \norm{\pi_x} + \norm{\pi_y} = \norm{\pi} = l^*$.

    Conversely, assume that $\tau$ (respectively $\tau'$) goes through $x_1 \in X_1, \dots, x_k \in X_k$ (respectively $y_1 \in Y_1, \dots, y_k \in Y_k$) in sequence. Then $(x_i,y_i) \in R_i$ for all $i \in [k]$, and this sequence of points specifies a tour that visits $R_1, \dots, R_k$ with length exactly $\norm{\tau} + \norm{\tau'}$. Hence $\norm{\tau} + \norm{\tau'} \geq l^*$.
\end{proof}

\begin{lemma}
     \touringInt can be solved in time $O(k)$.
\end{lemma}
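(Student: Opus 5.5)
Our plan is a greedy algorithm that maintains the set of optimal current positions. Let $X_i=[l_i,r_i]$. For each $i$ consider the function $F_i(x)$, the length of a shortest tour visiting $X_1,\dots,X_{i-1},x$ with $x\in X_i$. We will show by induction that $F_i$ has a simple shape: there is an interval $[u_i,v_i]\subseteq X_i$ and a value $c_i$ such that
\[
F_i(x)=c_i+\dist(x,[u_i,v_i]) \qquad \text{for } x\in X_i,
\]
that is, it equals $c_i$ on $[u_i,v_i]$ and grows with slope $1$ away from it. The base case is $u_1=l_1$, $v_1=r_1$, $c_1=0$.

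\emph{Inductive step.} For $x\in X_{i+1}$ we have $F_{i+1}(x)=\min_{y\in X_i}(F_i(y)+|x-y|)$. Every point of $X_i$ can reach $[u_i,v_i]$ at a cost no larger than the extra weight it carries, so this minimum equals $c_i+\dist(x,[u_i,v_i])$. The update then splits into two cases.
\begin{itemize}
\item If $[u_i,v_i]\cap X_{i+1}\neq\emptyset$, set $[u_{i+1},v_{i+1}]:=[u_i,v_i]\cap X_{i+1}$ and $c_{i+1}:=c_i$. Restricting $c_i+\dist(x,[u_i,v_i])$ to $X_{i+1}$ agrees with $c_{i+1}+\dist(x,[u_{i+1},v_{i+1}])$, because the distance to the intersection coincides with the distance to $[u_i,v_i]$ for points of $X_{i+1}$ (both intervals are convex).
\item Otherwise $X_{i+1}$ lies, say, entirely to the right of $v_i$. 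Set $u_{i+1}=v_{i+1}:=l_{i+1}$ and $c_{i+1}:=c_i+l_{i+1}-v_i$. For $x\in X_{i+1}$ the function is $c_i+(x-v_i)$, which matches this form. The left case is symmetric.
\end{itemize}

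\emph{Answer and running time.} The answer is $\min_{x\in X_k}F_k(x)=c_k$. Each step costs $O(1)$, so the total time is $O(k)$. A tour can be reconstructed by backtracking: pick a point of $[u_k,v_k]$, and at each earlier step take its nearest point in $[u_i,v_i]$.

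The main subtlety, and the step to check carefully, is the claim that $F_{i+1}$ restricted to $X_{i+1}$ keeps the ``flat interval plus slope-$1$'' shape. This holds because the infimal convolution of $c+\dist(\cdot,I)$ with $|\cdot|$ is again $c+\dist(\cdot,I)$. The one point to verify is that restricting the domain of $F_i$ to $X_i$ (rather than all of $\mathbb{R}$) makes no difference. This is true since $[u_i,v_i]\subseteq X_i$, and moving toward the flat interval never increases the cost.
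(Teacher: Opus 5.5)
Your proof is correct, but it takes a different route from the paper. The paper first handles the special case $X_1\cap X_2=\emptyset=X_{k-1}\cap X_k$ with a greedy that commits to one point per interval: $x_1$ is the point of $X_1$ closest to $X_2$, and each later $x_i$ is the point of $X_i$ closest to $x_{i-1}$. It proves optimality by a lower bound. At the ``crucial'' (turning) indices $i_1<\dots<i_l$, the greedy points alternate between left and right ends of their intervals, so every tour has length at least $\sum_t \dist(X_{i_t},X_{i_{t+1}})$. The general case is then reduced to this one by replacing the longest prefix and the longest suffix that have nonempty common intersection with those intersections, with a continuity argument that every tour must pass through them.

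You instead run an exact dynamic program on the value function with the invariant $F_i=c_i+\dist(\cdot,[u_i,v_i])$, where $[u_i,v_i]\subseteq X_i$. Your phrase about points ``reaching $[u_i,v_i]$ at a cost no larger than their extra weight'' is loose. The real justification is the one you give at the end: the triangle inequality $\dist(y,I)+|x-y|\geq\dist(x,I)$, with equality attained at the projection of $x$ onto $I$, which lies in $X_i$. With that, overlapping consecutive intervals are handled uniformly, and you need no preprocessing and no separate lower-bound certificate.

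Your invariant also explains why the paper's reduction is needed. For the first indices, $[u_i,v_i]=\bigcap_{j\le i}X_j$ with $c_i=0$. From the first index where this intersection becomes empty, $[u_i,v_i]$ stays a single point. That is exactly why committing greedily to one point is safe once the prefix has been collapsed as in the paper. What the paper's route buys in exchange is an explicit combinatorial certificate of optimality, namely the sum of gaps between consecutive crucial intervals. Yours is essentially the one-dimensional case of the piecewise-linear value-function viewpoint that the paper uses for mountain ranges in the ortho-convex section.
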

\begin{proof}
    Let $X_1, \dots, X_k \subset \Reals$ be the input intervals. We first assume $X_1 \cap X_2 = \emptyset$ and $X_{k-1} \cap X_k = \emptyset$. In this special case, we solve the problem by a simple greedy algorithm. Let $x_1$ be the point in $X_1$ closest to $X_2$. For $i = 2, \dots k$, let $x_i$ be the point in $X_i$ closest to $x_{i-1}$. We return the tour $\tau = (x_1,\dots,x_k)$. Clearly the algorithm runs in time $O(k)$ and returns a tour $\tau$ visiting $X_1, \dots, X_k$. It remains to prove that $\tau$ has minimum length.
    
    We say that index $i$ is \emph{crucial} if it is an endpoint or a turning point, i.e.,
    \begin{itemize}
        \item $i \in \{1,k\}$; or
        \item for the smallest index $j \geq i$ such that $x_j \neq x_i$, we have $x_{i-1},x_j < x_i$ or $x_{i-1},x_j > x_i$.
    \end{itemize}
    Let $1 = i_1 < i_2 < \cdots < i_l = k$ be the subsequence of crucial indices. For all $t \in [l]$, we claim that $x_{i_t}$ is an end of $X_{i_t}$; moreover, it is the left end if $x_{i_t-1} < x_{i_t}$ and the right end otherwise.
    
    For $t = 1$, the point $x_{i_1} = x_1$ is an end of $X_1$ because the algorithm picks it as the closest point to $X_2$, and $X_1 \cap X_2 = \emptyset$. Similarly, for $t = l$, the point $x_{i_l} = x_k$ is an end of $X_k$ because the algorithm picks $x_k$ as the closest point to $x_{k-1} \in X_{k-1}$, and $X_{k-1} \cap X_k = \emptyset$. For $1 < t < l$, we have $x_{i_t-1} \neq x_{i_t}$ by cruciality. In particular, $x_{i_t-1} \notin X_{i_t}$ and $x_{i_t}$ must be an end of $X_{i_t}$ because the algorithm picks $x_{i_t}$ as the point in $X_{i_t}$ closest to $x_{i_t-1}$. Moreover, if $x_{i_t-1} < x_{i_t}$ then the point must be the left end, and otherwise it must be the right end. The claim is now proven.

    It is easy to see that $x_{i_1}, \dots, x_{i_l}$ alternate as left/right ends of the corresponding intervals. In particular, $\dist(x_{i_t}, x_{i_{t+1}}) = \dist(X_{i_t}, X_{i_{t+1}}) > 0$ for all $t \in [l-1]$. So any tour that visits $X_1, \dots, X_k$ in sequence has length at least
    \[
        \sum_{t=1}^{l-1} \dist(X_{i_t}, X_{i_{t+1}})
        = \sum_{t=1}^{l-1} \dist(x_{i_t},x_{i_{t+1}})
        = \norm{\tau}.
    \]
    This shows the optimality of $\tau$.

    Finally, we reduce the general case to the special case above. Given any instance $X_1, \dots, X_k$, if the intervals have a common intersection then the optimal tour clearly has length zero. Otherwise, we compute the largest index $l$ such that $\bigcap_{i=1}^l X_i\neq \emptyset$, as well as the smallest index $r$ such that $\bigcap_{i=r}^k X_i\neq \emptyset$. Note that $l < r$. Construct a new sequence $\bigcap_{i=1}^l X_i, X_{l+1}, \dots, X_{r-1}, \bigcap_{i=r}^k X_i$. By the choices of $l,r$, the first two intervals in the new sequence are disjoint, so are the last two intervals. We invoke the greedy algorithm on the new sequence and output the tour it returns.
    
    Clearly the algorithm returns in time $O(k)$ a tour visiting $X_1, \dots, X_k$. To see correctness, we claim that every tour visiting $X_1, \dots, X_k$ must visit the new sequence. Denoting $[a,b] := \bigcap_{i=1}^l X_i$, the two ends $a,b$ come from the left end of some interval $X_i$ and the right end of some interval $X_j$, respectively, where $i,j \in [l]$. Every tour visiting $X_1, \dots, X_l$ must visit a point $x \geq a$ (in order to visit $X_i$) and a point $x' \leq b$ (in order to visit $X_j$). Regardless of the order of visiting $x,x'$, the tour must visit the interval $[a,b] = \bigcap_{i=1}^l X_i$ by continuity. By a symmetric argument, every tour visiting $X_r, \dots X_k$ must visit $\bigcap_{i=r}^k X_i$. Therefore, every tour visiting $X_1, \dots, X_k$ must visit the new sequence, as claimed.
\end{proof}

\section{Touring orthogonal polytopes}
\label{sec:lower}

In this section, we study the generalized problem \touringOrthoThreeParam. \Cref{lem:structure} naturally generalizes to three dimensions, but for our purpose we will only state a weaker lemma. We define $\grid := \left( \bigcup_{i=1}^k x(P_i) \right) \times \left( \bigcup_{i=1}^k y(P_i) \right) \times \left( \bigcup_{i=1}^k z(P_i) \right) \subset \Reals^3$.
\begin{lemma}
    \label{lem:grid-3d}
    Among all shortest tours visiting polytopes $P_1, \dots, P_k \subset \Reals^3$, there is a tour visiting $p_1, \dots, p_k$ where $p_i\in P_i \cap \grid$ for all $i\in [k]$.
\end{lemma}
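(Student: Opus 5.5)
The plan is to reduce the statement to a linear-programming/coordinate-separability argument rather than redo the full combinatorial analysis of \Cref{lem:structure}. Fix any shortest tour; by \Cref{lem:skeleton} (whose proof is dimension-independent) we may replace it by a sequence of points $p_1\in P_1,\dots,p_k\in P_k$ minimizing $\sum_{i=2}^k \norm{p_i-p_{i-1}}_1$. Since the $L_1$ length splits as the sum of the three coordinate contributions, the cost is $\sum_i |x(p_i)-x(p_{i-1})| + \sum_i |y(p_i)-y(p_{i-1})| + \sum_i |z(p_i)-z(p_{i-1})|$. The goal is to move the points, without increasing this cost, until every coordinate of every $p_i$ lies in the corresponding coordinate set of the grid.

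First, localize. Consider the arrangement of all axis-parallel planes $x=c$, $y=c$, $z=c$ through vertices of $P_1,\dots,P_k$. It partitions $\Reals^3$ into relatively open grid cells (open boxes, open rectangles, open segments, grid points). Each cell is either contained in $P_i$ or disjoint from its interior in a controlled way: because $P_i$ is an orthogonal polytope whose facets lie on these planes, the closure of any cell meeting $P_i$ is contained in $P_i$. So the membership constraint $p_i\in P_i$ can be replaced, for the optimal solution, by $p_i\in \overline{C_i}$ where $C_i$ is the cell containing $p_i$. Now $\overline{C_i}$ is a closed axis-parallel box with corners on the grid, hence a product of intervals with endpoints in $\bigcup x(P_j)$, $\bigcup y(P_j)$, $\bigcup z(P_j)$ respectively.

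Second, with the boxes fixed, the problem decouples into three independent one-dimensional problems: minimize $\sum_i |x_i - x_{i-1}|$ subject to $x_i\in[a_i,b_i]$, and likewise for $y$ and $z$. Each is a linear program after the standard substitution $t_i\ge \pm(x_i-x_{i-1})$; its feasible region is a nonempty polyhedron that is pointed in the $x$-variables (bounded), so an optimum is attained at a vertex. At a vertex, each $x_i$ is determined by tight constraints, which are either bounds $x_i\in\{a_i,b_i\}$ or equalities $x_i=x_{i-1}$ (from $t_i=0$ tight on both sides); chaining equalities, every $x_i$ equals some endpoint $a_j$ or $b_j$, hence lies in $\bigcup_j x(P_j)$. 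Alternatively, avoid LP language: a maximal run of equal consecutive values that is not at an endpoint can be slid in the direction where the number of neighbours on one side is at least the other, not increasing cost, until it hits an endpoint or merges with a neighbour; this terminates since the number of distinct runs drops or a run gets pinned.

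Replacing the coordinates gives points $p_i'\in\overline{C_i}\subseteq P_i$ on the grid with total length at most the original, hence a shortest tour as required. The step I expect to need the most care is the first one, namely justifying that the closure of the cell containing $p_i$ lies in $P_i$ (handling $p_i$ on the boundary of $P_i$, where the cell is lower-dimensional), since polytopes are closed and their boundary lies in the union of the grid planes; I would argue that each relatively open cell is either entirely in $P_i$ or entirely outside it, because crossing $\bd P_i$ requires crossing a grid plane, and then use closedness of $P_i$.
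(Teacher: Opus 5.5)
Your proposal is correct, and it takes a genuinely different route from the paper.

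\textbf{The paper's route.} The paper argues on the continuous curve. Starting from a shortest tour, it repeatedly takes a maximal piece lying in an axis-parallel plane not in $\mathcal H$. It translates that piece perpendicularly to the plane, adjusting the two adjacent perpendicular segments, until the plane reaches a grid plane or an adjacent segment collapses. At the end every axis-parallel segment lies on a grid line, and grid points of visit are read off from this. It is essentially a three-dimensional version of the local repair in \Cref{lem:structure}.

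\textbf{Your route.} You instead pass to the point sequence and freeze each $p_i$ in the closed grid box $\overline{C_i}\subseteq P_i$. You then use separability of the $L_1$ norm to split the problem into three independent \touringInt instances with box constraints. Their optima can be taken at interval endpoints, either via an LP vertex or by sliding runs. This is exactly the reduction the paper uses for rectangles in \Cref{sec:rectangles}, applied locally to grid boxes.

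\textbf{What each buys.} Your argument is shorter and fully rigorous, and it works verbatim in any dimension. It does not need to track what happens to visited points on the shrinking adjacent segments, nor to argue termination of a snapping process. The paper's argument directly produces a tour running along grid lines, which is the form used in \Cref{lem:3dim}. You recover this anyway by joining consecutive grid points $p_i'$ with staircase paths along grid lines.

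\textbf{The step you flagged.} It needs slightly more than ``crossing $\bd P_i$ requires crossing a grid plane''. Lower-dimensional cells lie inside grid planes, which is exactly where $\bd P_i$ also lives, so that reasoning does not cover them. It suffices to note three things:
\begin{itemize}
\item each open $3$-cell lies entirely in the interior or entirely in the exterior of $P_i$;
\item a point $p$ belongs to $P_i$ (the closure of its interior) if and only if some open $3$-cell whose closure contains $p$ lies in $P_i$;
\item this set of incident $3$-cells is the same for all points of a relatively open cell.
\end{itemize}
Equivalently, $P_i$ is a union of closed grid boxes, so you may even take a full-dimensional closed grid box containing $p_i$ as its constraint box.
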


\begin{proof}
    Let $\mathcal H$ be the set of axis-aligned planes that pass through the grid. We start from an arbitrary shortest tour $\pi$ and iteratively snap it to $\mathcal H$. Let $\gamma \subseteq \pi$ be a maximal subtour that lies on an axis-aligned plane $h(\gamma) \notin \mathcal H$. Let $e,e'$ be segments in $\pi$ preceding and succeeding $\gamma$; both are perpendicular to $h(\gamma)$ by maximality of $\gamma$. We translate $\gamma$ in a direction perpendicular to the plane, towards the further endpoint of $e$, until $h(\gamma)$ hits that endpoint or $h(\gamma) \in \mathcal{H}$. The translation either decreases the number of segments in $\pi$, or makes $h(\gamma) \in \mathcal{H}$. We repeat until there is no $\gamma$ satisfying the initial condition. Upon termination, every segment in the $x$-direction in $\pi$ has the form $[x_1,x_2] \times \set{y} \times \set{z}$ where $y \in \bigcup_{i=1}^k y(P_i)$ and $z \in \bigcup_{i=1}^k z(P_i)$. Segments in the $y$-direction and $z$-direction satisfy the symmetric properties. These properties imply that $\pi$ visits $p_1, \dots, p_k$ where $p_i \in P_i \cap \grid$ for all $i \in [k]$.
\end{proof}

Lemma~\ref{lem:grid-3d} yields a straightforward algorithm for \touringOrthoThreeParam:

\begin{lemma}
    \label{lem:3dim}
    \touringOrthoThreeParam can be solved in $\Otilde(n^3 k)$ time.
\end{lemma}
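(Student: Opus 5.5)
Using \Cref{lem:grid-3d}, it suffices to consider tours that visit one grid point $p_i \in P_i \cap \grid$ for every $i \in [k]$, and whose length is $\sum_{i=2}^k \dist(p_{i-1},p_i)$ under $L_1$. The grid has at most $n^3$ points, since each of the three coordinate sets has size at most $n$. So we run a dynamic program over pairs $(i,p)$ with $p \in P_i \cap \grid$, computing
\[
F_i(p) = \min_{q \in P_{i-1}\cap\grid} \bigl(F_{i-1}(q) + \dist(q,p)\bigr), \qquad F_1(p)=0,
\]
and return $\min_p F_k(p)$. Correctness follows directly from \Cref{lem:grid-3d}: any grid-point sequence with $p_i \in P_i$ gives a tour of exactly this length, and some shortest tour has this form.

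The work lies in evaluating each layer transition in $\Otilde(n^3)$ time rather than $O(n^6)$. I would set $G_i(p) := F_i(p)$ for $p \in P_i\cap\grid$ and $G_i(p) := \infty$ on other grid points. Then $F_i$ is the restriction to $P_i$ of the $L_1$ distance transform of $G_{i-1}$ over the full grid $X\times Y\times Z$. The $L_1$ metric separates into coordinates, so the transform can be computed with three one-dimensional passes. First, for every grid line parallel to the $x$-axis, compute $H(x,y,z)=\min_{x'}(G(x',y,z)+|x-x'|)$ by one left-to-right sweep and one right-to-left sweep, each taking linear time in the number of points on the line. Then repeat the same on $H$ along the $y$-direction, and then along the $z$-direction. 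The composition equals $\min_q (G(q)+\|q-p\|_1)$, because the minimization over $q=(x',y',z')$ splits into nested minimizations over each coordinate. Each pass costs $O(n^3)$, so each layer costs $O(n^3)$ in total.

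Two preprocessing details remain. We need to determine which grid points lie in each $P_i$. A point-location structure on each polytope gives $O(\log n)$ per query, for $\Otilde(n^3)$ per layer. Alternatively, a sweep along each grid line, reading the polytope's crossings, also works. Summed over $k$ layers, this gives $\Otilde(n^3k)$. The main obstacle I foresee is only bookkeeping: making membership testing on the grid efficient for arbitrary orthogonal polytopes, and handling closed boundaries correctly. Both seem routine. The separable distance-transform step is the key idea.
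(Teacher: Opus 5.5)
Your proposal is correct. It shares the paper's skeleton: both use \Cref{lem:grid-3d} to restrict attention to grid points, and both propagate optimal values layer by layer over $i=1,\dots,k$. The difference lies in how the within-layer step is evaluated. The paper builds $k$ copies of the grid graph, with each grid point joined to its nearest neighbours along the three axes. It adds zero-weight edges from grid points of $P_i$ in layer $i$ to layer $i+1$, plus a source and a sink, and runs a single Dijkstra on this graph of size $O(n^3k)$. That argument implicitly uses the fact that grid-graph distances coincide with $L_1$ distances between grid points. You instead write each layer transition as an $L_1$ distance transform over the full grid. You evaluate it exactly with three one-dimensional forward/backward sweeps, justified by the separability of the $L_1$ norm into nested coordinate-wise minimizations. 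Your route costs $O(n^3)$ per layer with no priority queue, so the DP itself has no logarithmic factor. It also needs only two layers of $O(n^3)$ values in memory rather than an explicit graph of size $O(n^3k)$. The paper's route is shorter to state and uses only a standard shortest-path black box. Both versions must know which grid points lie in each $P_i$. The paper leaves this implicit; your point-location or grid-line treatment suffices within the $\Otilde$ bound, since polylogarithmic query time is all you need.
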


\begin{proof}
By \Cref{lem:grid-3d}, we can restrict our attention to tours whose vertices lie on the grid. Note that the grid has size $O(n^3)$. We connect each grid point to its closest neighbors along each axis direction, with the edge weight being the geometric distance. We take $k$ copies of this grid and call them \emph{layers}. For every layer $i$, we add a zero-weight directed edge from each grid point in $P_i$ to the corresponding point in layer $i+1$. We attach a source vertex $s$ with outgoing zero-weight edges to all grid points in $P_1$. Symmetrically, we attach a sink vertex $t$ with incoming zero-weight edges from all grid points in $P_k$. In the constructed directed weighted graph, the shortest path from $s$ to $t$ is a shortest tour visiting the input sequence. It can be computed via Dijkstra's algorithm in $\Otilde(n^3 k)$ time.
\end{proof}

On the other hand, we prove a quadratic lower bound by a reduction from the \OV problem: Given two collections of $d$-dimensional binary vectors $A=\{a_1,\dots, a_N\} \subset \{0,1\}^d$ and $B=\{b_1,\dots, b_N\} \subset \{0,1\}^d$, decide whether there exist $i, j \in [N]$ such that the inner product of $a_i$ and $b_j$ is zero. The orthogonal vectors hypothesis (OVH) conjectures that \OV is not in time $O(N^{2-\varepsilon}\poly(d))$ for any $\eps>0$ \cite{ovh}.

\lowerbound*

\begin{proof}
Given an \OV instance $A=\{a_1,\dots, a_N\}\subset \{0,1\}^d$ and $B=\{b_1,\dots, b_N\}\subset\{0,1\}^d$, we take $c := 10d$, $D = 20dN$, and construct a sequence of $k := 4d+2$ orthogonal polytopes each of complexity $O(N)$. (More precisely, we construct a sequence of objects each consisting of axis-aligned rectangles and segments. They can be inflated into orthogonal polytopes while keeping the same complexity up to a constant factor.) Throughout the construction we refer to the $x$- and $y$-axes as horizontal and vertical directions, respectively.

Each polytope has a comb-like structure; see Figure~\ref{fig:ovh}. The teeth of the comb are either all horizontal or all vertical, separated by distance at least $c-1$; perpendicular to the teeth is a base rectangle. In more detail, each vertical comb uses the base rectangle $[c,cN+1]\times \{D\} \times [0,5]$ with vertical line segments attached. Each horizontal comb uses the base rectangle $\{D\} \times [c,cN+1] \times [0,5]$ with horizontal line segments attached.

\begin{figure}[htb]
    \centering
    \includegraphics{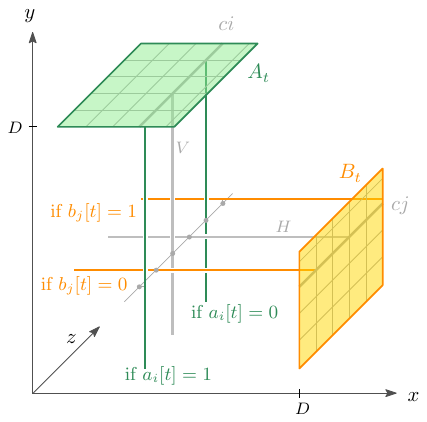}\hspace*{0.5cm}
    \includegraphics{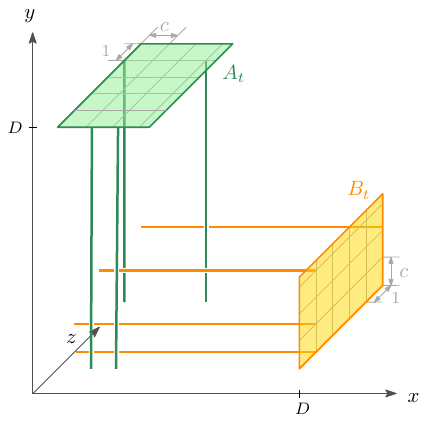}
    \caption{Left: the $i$-th segments in $A_t$ and $V$, together with the $j$-th segments in $B_t$ and $H$. Right: the comb $A_t$ that corresponds to $(a_1[t], a_2[t], a_3[t], a_4[t]) = (0,1,1,0)$, and the comb $B_t$ that corresponds to $(b_1[t], b_2[t], b_3[t], b_4[t]) = (0,0,1,0)$.}
    \label{fig:ovh}
\end{figure}

For each dimension $1\leq t\leq d$, we construct a vertical comb $A_t$ and a horizontal comb $B_t$. The comb $A_t$ has $N$ vertical line segments, where the $i$-th segment encodes the coordinate $a_i[t]$ and is defined as $\set{ci + a_i[t]} \times [c,D] \times \set{4-4a_i[t]}$. Similarly, the comb $B_t$ has $N$ vertical segments, where the $j$-th segment encodes the coordinate $b_j[t]$ and is defined as $[c,D] \times \set{cj+b_j[t]} \times \set{1+4b_j[t]}$.

We also define an auxiliary vertical comb $V$ and an auxiliary horizontal comb $H$. The comb $V$ has $N$ vertical segments where the $i$-th segment is defined as $\set{ci} \times [c,D] \times \set{3}$. The comb $H$ has $N$ horizontal segments where the $j$-th segment is defined as $[c,D] \times \set{cj} \times \set{2}$.

Finally, we define the touring sequence as
\[\mathcal S:= (V,~H,A_1,B_1,V,~H,A_2,B_2,V,~\dots, H,A_d,B_d,V,~H).\]
The sequence has length $k=4d+2$ and every comb has complexity $O(N)$. The sequence is also step-disjoint: It alternates between vertical and horizontal combs, whose bases clearly do not intersect, and whose line segments have different $z$-coordinates and thus do not intersect either.

\begin{claim*}
    Fix $i, j \in \set{1,\dots,N}$ and $t \in \set{1, \dots, d}$. The shortest (local) tour that starts from $(ci,cj,2)$, visits $H, A_t, B_t, V$ in sequence and ends at $(ci,cj,3)$, has length 13 if $a_i[t] = b_j[t] = 1$ and 7 otherwise.
\end{claim*}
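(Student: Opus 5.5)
The plan is to prove matching upper and lower bounds. Each coordinate axis is handled separately, since the $L_1$ length of a polygonal path is the sum of the lengths of its projections to the three axes. Write $\alpha := a_i[t]$ and $\beta := b_j[t]$. Observe first that the start $(ci,cj,2)$ lies on the $j$-th segment of $H$, and the end $(ci,cj,3)$ lies on the $i$-th segment of $V$. So the only real constraint is to pick points $p_A \in A_t$ and then $p_B \in B_t$ in between. The resulting local tour has length
\[ \dist((ci,cj,2),p_A) + \dist(p_A,p_B) + \dist(p_B,(ci,cj,3)). \]

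\emph{Localization.} I would first argue that an optimal local tour uses only the $i$-th segment of $A_t$ and the $j$-th segment of $B_t$. The bases lie at $y=D$ or $x=D$, at distance $\Omega(dN)$ from the start, and any other tooth of $A_t$ (resp.\ $B_t$) has $x$-coordinate (resp.\ $y$-coordinate) differing from $ci$ (resp.\ $cj$) by at least $c-1 = 10d-1$. Since the tour must return to $x=ci$, $y=cj$, using such a point costs at least $2(c-1) > 13$. This already exceeds both claimed values, provided the upper bounds below hold. Hence we may assume $p_A = (ci+\alpha,\, y_A,\, 4-4\alpha)$ and $p_B = (x_B,\, cj+\beta,\, 1+4\beta)$, with $y_A, x_B$ free in $[c,D]$.

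\emph{Exact computation per axis.} The cost now splits into three terms.
\begin{itemize}
\item The $x$-projection goes $ci \to ci+\alpha \to x_B \to ci$, with minimum $2\alpha$ (take $x_B$ between the two).
\item The $y$-projection goes $cj \to y_A \to cj+\beta \to cj$, with minimum $2\beta$.
\item The $z$-projection is forced: $2 \to 4-4\alpha \to 1+4\beta \to 3$.
\end{itemize}
The per-axis minima can be attained simultaneously, by choosing $x_B = ci$ or $ci+\alpha$ and $y_A = cj$ or $cj+\beta$. Therefore the minimum equals the sum of the three minima, which is both an upper and a lower bound. Checking the four cases:
\begin{itemize}
\item $(\alpha,\beta)=(0,0)$: $z$-path $2\to4\to1\to3$ costs $7$, total $7$.
\item $(1,0)$: $z$-path $2\to0\to1\to3$ costs $5$, plus $x$-cost $2$, total $7$.
\item $(0,1)$: $z$-path $2\to4\to5\to3$ costs $5$, plus $y$-cost $2$, total $7$.
\item $(1,1)$: $z$-path $2\to0\to5\to3$ costs $9$, plus $2+2$, total $13$.
\end{itemize}

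The main obstacle is making the localization step rigorous. Concretely, I must show that a detour to a different tooth or to a base cannot be compensated elsewhere. The clean route is the same coordinate-wise lower bound: the $x$- or $y$-projection alone already costs at least $2(c-1) > 13$. The per-case arithmetic is routine.
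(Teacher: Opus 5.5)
Your proof is correct and follows essentially the same route as the paper: restrict to the $i$-th tooth of $A_t$ and the $j$-th tooth of $B_t$, optimize the two free coordinates, and evaluate the four cases, which yields $2a_i[t]+2b_j[t]+|4a_i[t]+4b_j[t]-3|+4$. Your per-axis triangle-inequality bookkeeping and your explicit localization bound $2(c-1)>13$ make rigorous two steps that the paper only asserts (the tour ``must have form'' with those teeth, and the minimum is ``clearly'' attained at $y=cj$, $x=ci+a_i[t]$).
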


\begin{claimproof}
    Any tour that starts from $(ci,cj,2)$, visits $H, A_t, B_t, V$ in sequence and ends at $(ci,cj,3)$, must have form
    \[
        (ci,cj,2)
        \to \left(ci+a_i[t],\, y,\, 4-4a_i[t]\right)
        \to \left(x,\, cj+b_j[t],\, 1+4b_j[t]\right)
        \to (ci,cj,3).
    \]
    Clearly the minimum length is attained when $y = cj$ and $x = ci+a_i[t]$. In that case, the tour becomes
    \begin{align*}
        (ci,cj,2)
        &\to \left(ci+a_i[t],\, cj,\, 4-4a_i[t]\right) \\
        &\to \left(ci+a_i[t],\, cj+b_j[t],\, 1+4b_j[t]\right) \\
        &\to (ci,cj,3).
    \end{align*}
    The first step has length $a_i[t] + |4a_i[t]-2| = a_i[t] + 2$, the second has length $b_j[t] + |4a_i[t] + 4b_j[t] - 3|$, and the third has length $a_i[t] + b_j[t] + |4b_j[t]-2| = a_i[t] + b_j[t] + 2$. The total length is thus
    \[ 2a_i[t] + 2b_i[t] + |4a_i[t] + 4b_j[t] - 3| + 4. \]
    This is equal to $13$ when $a_i[t] = b_j[t] = 1$, and $7$ otherwise.
\end{claimproof}

\begin{claim*}
    The shortest tour visiting $\mathcal S$ has length at most $8d+1$ if and only if $(A,B)$ contains a pair of orthogonal vectors.
\end{claim*}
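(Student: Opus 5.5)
The plan is to prove the two directions separately. Throughout, I will treat a tour as a sequence of visited points $p_1,\dots,p_k$, one in each comb of $\mathcal S$, as allowed by \Cref{lem:skeleton}.

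\textbf{Upper bound.} Suppose $a_i \cdot b_j = 0$. I would construct an explicit tour.
\begin{itemize}
\item Start at $(ci,cj,3)$, which lies on the $i$-th tooth of $V$.
\item Move to $(ci,cj,2)$ on the $j$-th tooth of $H$, at cost $1$.
\item For each $t=1,\dots,d$, take the optimal local tour from the Claim. It goes from $(ci,cj,2)$ through $A_t$ and $B_t$ and ends at $(ci,cj,3)$ on $V$. Since $a_i[t]b_j[t]=0$, each costs $7$.
\item After each such block, step back down to $(ci,cj,2)$ on the next copy of $H$, at cost $1$.
\end{itemize}
There are $d$ blocks of cost $7$ and $d+1$ unit steps between a $V$ and the following $H$. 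The total is $7d+(d+1)=8d+1$.

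\textbf{Lower bound, step 1: only teeth are used.} Assume a tour of length at most $8d+1$ exists. First I would rule out any visited point on a base.
\begin{itemize}
\item A vertical comb's base lies at $y=D$. Every point of a horizontal comb (teeth or base) has $y\le cN+1$.
\item So any step touching a vertical base costs at least $D-cN-1 \ge 10dN-1 > 8d+1$. Horizontal bases are symmetric.
\end{itemize}
Hence every $p_s$ lies on a tooth.

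\textbf{Lower bound, step 2: the indices $i$ and $j$ stay fixed.}
\begin{itemize}
\item Consecutive combs in $\mathcal S$ have teeth at distinct $z$-values, so each of the $k-1=4d+1$ steps has $z$-movement at least $1$.
\item The vertical-comb teeth ($V$ and $A_t$) have $x\in\{ci,ci+1\}$. If the tour visits vertical teeth of two different indices $i\neq i'$, its $x$-movement is at least $c-1=10d-1$.
\item The total would then be at least $(10d-1)+(4d+1) > 8d+1$, a contradiction. The same argument with $y$ and the horizontal combs fixes a single $j$.
\end{itemize}

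\textbf{Lower bound, step 3: per-block accounting.} With $i,j$ fixed, I would use that the $L_1$ length is the sum of the $x$-, $y$- and $z$-movements, and bound each componentwise.
\begin{itemize}
\item \emph{$z$-coordinates.} These are determined by the combs: $3,2,4-4a_i[t],1+4b_j[t],3,2,\dots$. The $z$-cost is therefore forced exactly.
\item \emph{$x$-coordinates.} Vertical teeth pin $x$ to $ci$ or $ci+a_i[t]$, while $H$ and $B_t$ leave $x$ free. The cheapest $x$-movement in block $t$ is at least $2a_i[t]$, the forced back-and-forth between $V$ at $ci$ and $A_t$ at $ci+a_i[t]$.
\item \emph{$y$-coordinates.} Symmetrically, the $y$-movement in block $t$ is at least $2b_j[t]$.
\end{itemize}
Summing per block reproduces the Claim's bound of $7$ or $13$, now without assuming the block starts exactly at $(ci,cj,2)$. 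Adding the $d+1$ unit steps from $V$ to $H$ gives a length of at least $8d+1+6\cdot\#\{t: a_i[t]=b_j[t]=1\}$. Length at most $8d+1$ therefore forces $a_i\cdot b_j=0$.

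The main obstacle is step 3: rederiving the Claim's bound when the block's endpoints are arbitrary points on the teeth rather than exactly $(ci,cj,2)$ and $(ci,cj,3)$. The componentwise $x/y/z$ decomposition is what makes this go through.
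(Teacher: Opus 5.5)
Your argument is correct. The upper bound is the paper's construction, but your lower bound takes a different route.

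\textbf{How the paper does it.} The paper first snaps the visited points to $\grid(\mathcal S)$ via \Cref{lem:grid-3d} and uses the length bound to fix $i,j$. It then reroutes every $V\to H$ transition through $(ci,cj,3)$ and $(ci,cj,2)$ at no extra cost. After this, each block begins and ends at the canonical points, so the local Claim ($7$ versus $13$) applies as a black box.

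\textbf{How you do it.} You avoid both the grid lemma and the rerouting. You write the $L_1$ length as the sum of the total variations in $x$, $y$ and $z$. Once $i,j$ are fixed, the $z$-profile $3,2,4-4a_i[t],1+4b_j[t],3,\dots$ is forced exactly. You then lower-bound the $x$- and $y$-variations by the forced excursions to $ci+a_i[t]$ and $cj+b_j[t]$.

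\textbf{What each buys.} Your version is more self-contained and works directly with arbitrary points on the teeth; horizontal-comb points simply keep a free $x$. It also yields the explicit bound $8d+1+6\cdot\#\{t : a_i[t]=b_j[t]=1\}$. The paper's version trades this for reusing the local Claim unchanged.

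\textbf{One point to state precisely.} The two excursions live in offset windows. The $x$-excursion runs from the $V$ at position $4t-3$ to the $V$ at position $4t+1$. The $y$-excursion runs from the $H$ at position $4t-2$ to the $H$ at position $4t+2$. A single window $H\to A_t\to B_t\to V$ with free endpoints can cost less than $7$. For example, it costs $6$ when $(a_i[t],b_j[t])=(1,0)$ and the $H$-point has $x=ci+1$; the deficit is paid by the preceding $V\to H$ step. So ``$7$ or $13$ per block'' is not literally true. Instead, sum each coordinate over its own family of windows, which are pairwise step-disjoint for that coordinate, and add the exact $z$-total. This gives precisely the bound you state, and the rest of your argument goes through.
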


\begin{claimproof}
For the ``if'' direction, assume that $(A,B)$ contains a pair of orthogonal vectors $a_i, b_j$. In other words, $(a_i[t], b_j[t]) \neq (1,1)$ for all $1 \leq t \leq d$. We apply the claim above for every $t$, chain the resulting length-7 tours together, and add unit length paths to bridge the gaps between $V,H$. This gives a tour of length $8d+1$ visiting $\mathcal S$ in sequence.

For the ``only if'' direction, assume there exists a tour $\pi=(p_1,p_2,\dots,p_k)$ of length at most $8d+1$ visiting $\mathcal S$ in sequence. By Lemma~\ref{lem:grid-3d}, we may assume for all $l$ that $p_l \in \grid(\mathcal{S})$; that is,
\begin{align*}
    x(p_l) &\in \set{D} \cup \set{ci+\delta : i \in \set{1,\dots,N}, \delta\in \set{0,1}}, \\
    y(p_l) &\in \set{D} \cup \set{cj+\delta : j \in \set{1,\dots,N}, \delta\in \set{0,1}}, \\
    z(p_l) &\in \set{0,\dots,5}.
\end{align*}
However, if $x(p_l) = D$, then $p_l$ must be on the base of a horizontal comb, whose predecessor and successor combs are vertical. Since the distance between the base and any vertical comb is at least $D-(cN+1) > 8d+1$, the length of $\pi$ would be larger than $8d+1$, a contradiction. Hence $x(p_l) \neq D$. Similarly, $y(p_l) \neq D$.

The discussion above implies that for all $l$, we have $p_l = (ci_l+\delta_l, cj_l+\theta_l, z_l)$ for some $i_l, j_l \in \set{1,\dots,N}$, $\delta_l,\theta_l \in \set{0,1}$, and $z_l \in \set{0,\dots,5}$. Since $\pi$ has length at most $8d+1 < c$, the values $i_1, \dots, i_k$ must coincide, and the values $j_1, \dots, j_k$ must coincide. In other words, $p_l = (ci+\delta_l, cj+\theta_l,z_l)$ where $i, j \in \set{1,\dots,N}$ do not depend on $l$.

For $0 \leq t \leq d$, we have
\begin{itemize}
    \item $p_{4t+1} \in V$, and thus $p_{4t+1} = (ci, cj+\theta_{4t+1}, 3)$;
    \item $p_{4t+2} \in H$, and thus $p_{4t+2} = (ci+\delta_{4t+2}, cj, 2)$.
\end{itemize}
Regardless of the actual values of $\theta_{4t+1}, \delta_{4t+2}$, we can always route the subtour between $p_{4t+1}$ and $p_{4t+2}$ to pass through the points $(ci, cj, 3)$ and $(ci, cj, 2)$ without changing its length. Since $(ci,cj,3) \in V$ and $(ci,cj,2) \in H$, we can replace $p_{4t+1} = (ci,cj,3)$ and $p_{4t+2} = (ci,cj,2)$ for all $0 \leq t \leq d$. Hence the previous claim implies that the length of $\pi$ is \emph{at least} $8d + 1$.

Since we assumed that the length of $\pi$ is at most $8d + 1$, all subtours given by the previous claim must have the smallest possible length, namely $7$. Hence $(a_i[t],b_j[t]) \neq (1,1)$ for all $t$, or in other words $a_i,b_j$ is a pair of orthogonal vectors.
\end{claimproof}

Suppose that \touringOrthoThreeParam can be solved in time $O(n^{2-\eps} \poly(k))$ for some $\eps > 0$. Then in particular we can solve the constructed instance in time $O((kN)^{2-\eps} \poly(k)) = O(N^{2-\eps} \poly(d))$, which violates OVH.
\end{proof}

\section*{Acknowledgement}
This work was initiated at the 2025 Lorentz Center workshop ``Fine-Grained and Parameterized Computational Geometry'' in Leiden, the Netherlands.

\bibliography{references}

\clearpage

\appendix

\section{Rectangle partition with low stabbing number}

\begin{proof}[Proof of \Cref{lem:rect-partition}]
    Let $V$ be the set of vertices of $P_1, \dots, P_k$. We build a quadtree on $V$ as follows. Let $B$ be the bounding box of $V$. We split $B$ into two rectangles $B_0, B_1$ by a vertical line at the median $x$-coordinate in $V$. Then we split each $B_i$ into two rectangles $B_{i 0}, B_{i 1}$ by a horizontal line at the median $y$-coordinate in $V \cap B_i$. We recurse this process on every rectangle until the interiors are free of points.

    Since each split kills one point from the interior and creates only one more rectangle, the total number of rectangles is $n + 1$. Moreover, no rectangle contains a vertex in $V$.

    Next we argue that every vertical/horizontal line intersects at most $O(\sqrt{n})$ rectangles. Without loss of generality, we consider a horizontal line $l$ that does not hit $V$. (If $l$ hits $V$, then we take a line $l^+$ slightly above and another line $l^-$ slightly below. Every rectangle intersected by $l$ must be intersected by either $l^+$ or $l^-$, so it suffices to consider the latter two.) At the beginning of the process, $l$ intersects the bounding box $B$. After the first $x$ and $y$-splits, $l$ is broken into two parts, each intersecting the interior of just one rectangle $B_{ij}$. Note that $B_{ij}$ contains only a quarter of the points in its parent $B$. As we recursively split $B_{ij}$, the total number of rectangles stabbed by $l$ is bounded by the recursion $g(n) \leq 2 g(n/4)$ which  gives $g(n) = O(\sqrt{n})$.
\end{proof}

\end{document}